\documentclass[12pt,draftcls,journal,onecolumn]{IEEEtran}
\usepackage{amsfonts,color,morefloats,pslatex,graphicx,graphics}
\usepackage{amssymb,amsthm, amsmath,latexsym}

\newtheorem{theorem}{Theorem}
\newtheorem{lemma}[theorem]{Lemma}
\newtheorem{remark}[theorem]{Remark}

\newtheorem{open}[theorem]{Open Problem}

\newtheorem{example}[theorem]{Example}

\newcommand{\ord}{{\mathrm{ord}}}

\newcommand{\gf}{{\mathrm{GF}}}

\newcommand{\C}{{\mathcal{C}}}

\usepackage{blindtext}

\ifCLASSINFOpdf

\else

\fi

\begin{document}
%
\title{Four constructions of self-dual binary cyclic codes with a lower bound on the minimum distances better than the square-root bound
}

\author{Xiaoqiang Wang, Xun Song, Dabin Zheng, Hao Chen, Cunsheng Ding

\thanks{Xiaoqiang Wang, Xun Song and Dabin Zheng are with the Hubei Key Laboratory of Applied Mathematics, Faculty of Mathematics and Statistics, Hubei University, Wuhan, 430062, China (e-mail: waxiqq@163.com; xvnsong@126.com; dzheng@hubu.edu.cn). Hao Chen is with the College of Information Science and Technology, Jinan University, Guangzhou, Guangdong 510632, China (e-mail: haochen@jnu.edu.cn). Cunsheng Ding is with the Department of Computer Science and Engineering, The Hong Kong University of Science and Technology, Hong Kong,
China (e-mail: cding@ust.hk). The corresponding author is Dabin Zheng.
}

}

\maketitle
\hskip 1mm
\begin{abstract}
In spite of the intensive study of cyclic codes and the recent construction of an infinite family of self-dual binary cyclic codes whose minimum distances have the square-root bound in IEEE Trans. IT, vol. 71, no. 4,  2025,
it is still a 70-year-old open problem whether there is an infinite family of self-dual binary cyclic codes whose minimum distances have a lower bound better than the square-root bound.  This paper settles this long-standing open problem in coding theory by presenting infinite families of such self-dual binary cyclic codes.
As by-products,
several families of cyclic codes with better parameters than those in some references are also
constructed in this paper.
\end{abstract}

\begin{IEEEkeywords}
Cyclic code, linear code, self-dual code, square-root lower bound.
\end{IEEEkeywords}



%

\section{Introduction}\label{sec-intro}

Let $\mathbb{F}_q$ denote the finite field with $q$ elements, where $q$ is a prime power.
Let $n$ and $k$ be positive integers satisfying $k \leq n$.
A linear code $\mathcal{C}$ over $\mathbb{F}_q$ with parameters $[n, k, d]$ is a $k$-dimensional linear subspace of $\mathbb{F}_q^n$ having minimum Hamming distance $d$.
Its dual code, denoted by $\mathcal{C}^{\perp}$, is defined as
\[
\mathcal{C}^{\perp} = \bigl\{ \mathbf{b} \in \mathbb{F}_q^n : \mathbf{b} \mathbf{c}^{T} = 0 \text{ for all } \mathbf{c} \in \mathcal{C} \bigr\},
\]
where $\mathbf{b} \mathbf{c}^{T}$ stands for the standard inner product of the vectors $\mathbf{b}$ and $\mathbf{c}$.
A code $\mathcal{C}$ is called a cyclic code if it is closed under the cyclic shift operation: whenever $(c_0, c_1, \dots, c_{n-1})$ belongs to $\mathcal{C}$, so does $(c_{n-1}, c_0, c_1,
\dots, c_{n-2})$.
By identifying each vector $(c_0, c_1, \dots, c_{n-1}) \in \mathbb{F}_q^n$ with the polynomial
\[
c_0 + c_1 x + c_2 x^2 + \cdots + c_{n-1} x^{n-1} \in \mathbb{F}_q[x]/\langle x^n - 1 \rangle,
\]
any cyclic code $\mathcal{C}$ of length $n$ over $\mathbb{F}_q$ corresponds to a subset of the quotient ring $\mathbb{F}_q[x]/\langle x^n - 1 \rangle$.
Since every ideal of this ring is principal, $\mathcal{C}=\langle g(x)\rangle$ for a unique monic polynomial $g(x)$ of minimal degree, called the \emph{generator polynomial} of $\C$. The \emph{check polynomial} is $h(x)=(x^n-1)/g(x)$, and the roots of $g(x)$ (resp. $h(x)$) are called the \emph{zeros} (resp. \emph{nonzeros}) of $\mathcal{C}$.

Assume $\gcd(n,q)=1$ and let $\ell=\mathrm{ord}_n(q)$ be the multiplicative order of $q$ modulo $n$. Let $\alpha$ be a generator of $\mathbb{F}_{q^\ell}^*$,  then  $\beta=\alpha^{(q^\ell-1)/n}$ is a primitive $n$th root of unity. For $0\le i\le q^\ell-2$, denote by $m_i(x)$ the minimal polynomial of $\beta^i$ over $\mathbb{F}_q$. For two integers $\delta$ with $2\le\delta\le n$ and $b$, the cyclic code with generator polynomial
\[
g_{(\delta,b)}(x)=\operatorname{lcm}\big(m_b(x),m_{b+1}(x),\dots,m_{b+\delta-2}(x)\big)
\]
is called a \emph{BCH code} of length $n$ and designed distance $\delta$. If $b=1$ and $n=q^m-1$, the code is referred to as a \emph{narrow-sense primitive BCH code} and will be denoted simply by $\mathcal{C}_\delta$ in this paper.

BCH codes form a prominent class of cyclic codes and are widely used for error correction in communication and storage systems. Binary BCH codes were introduced around 1960 by Bose, Ray-Chaudhuri, and Hocquenghem \cite{Bose62,Hocquenghem59}, and later extended to arbitrary finite fields by Gorenstein and Zierler \cite{Gorenstein61}. Despite decades of intensive study, the exact parameters of BCH codes and their duals are known only in some special cases. For further information on BCH codes we refer the reader to \cite{Charpin90,Ding15,Ding17,Fan23,Fu24,GDL21,Liu17,Wang23,Yue15} and the references therein.

The construction of self-dual codes has been an active research topic for many years. For finite lengths, numerous constructions of self-dual generalized Reed-Solomon (GRS) codes have been proposed (see, e.g., \cite{Fang19,Grass08,Jin17,Zhang20}). For a binary self-dual code of length $n$, its minimum distance $d$ is bounded by $d\le 4\lfloor n/24\rfloor+6$ if $n\equiv22\pmod{24}$, and $d\le4\lfloor n/24\rfloor+4$ otherwise.
A self-dual binary linear code attaining any of these two bounds is called an \emph{extremal code} \cite[Chapter 19]{MS78}. Ternary self-dual codes satisfy $d\le3\lfloor n/12\rfloor+3$,  and are said to be {\it extremal}  if the equality holds. Constructions of extremal or optimal self-dual codes over small fields, as well as those with large minimum distances,
can be found in \cite{Betsumiya23,Dougherty97,Gaborit03,Gulliver08,Harada07,Shi18}. For the best known minimum distances of self-dual codes over small fields, we refer to \cite{Huffman03}.

A lower bound on the minimum distance of an $[n,k,d]$ code over $\mathbb{F}_q$ is called a \emph{square-root bound} if it is $\sqrt{n}$, and a \emph{square-root-like bound} if it is $c\sqrt{n}$ for a fixed positive constant $c$ depending only on $q$. To date, only a few infinite families of self-dual codes over arbitrary finite fields with unbounded length $n$ and minimum distance $d\ge\sqrt{n}$ are known. To the best of our knowledge, Chen and Ding \cite{Chen23} gave the first infinite family of self-dual binary cyclic codes attaining exactly the square-root bound, and they pointed out the following long-standing open problem:

\begin{open}\cite{Chen23}
Does there exist an infinite family of self-dual binary cyclic codes with a lower bound on their minimum distances better than the square-root lower bound?
\end{open}

In addition to the infinite family of self-dual cyclic codes over $\mathbb{F}_{2^s}$ in \cite{Chen23},
the known infinite families of linear self-dual (non-cyclic) codes with $d\ge\sqrt{n}$ are:
\begin{enumerate}
\item The extended codes of odd-like quadratic residue codes \cite{Huffman03}.

\item The quasi-cyclic binary codes with parameters $[2q + 2,q + 1,d]$, where $q \equiv 3 \pmod{8}$ is a power of
 an odd prime and $(d - 1)^2 - (d - 1) + 1 \geq 2q + 1$ \cite{calderbank1983square}.

\item The Pless symmetry codes \cite{ding2022designs}, \cite{pless1972symmetry}.

\item The generalized Reed-Muller code of order $[m(q - 1) - 1]/2$ over $\mathbb{F}_q$ with parameters $[q^m,q^m/2,[(q + 2)/2]q^{(m-1)/2}]$, where $m$ is odd and $q=2^s$ with a positive
    integer $s$ \cite[Theorem 5.8, Theorem 5.25]{assmus1998polynomial}.
\item A family of negacyclic codes over $\mathbb{F}_q$ with parameters (see \cite{sun2023two})

\[
\left[ \frac{q^m - 1}{2}, \frac{q^m - 1}{4}, d \right],
\]

where $q$ is an odd prime power, $m \geq 2$ is an integer, $q^m \equiv 1 \pmod{4}$ and

\[
d =
\begin{cases}
q^{m/2}, & \text{if } m \text{ is even}, \\
\left( \frac{q+3}{4} \right) q^{(m-1)/2}, & \text{if } m \text{ is odd}.
\end{cases}
\]

\end{enumerate}

The primary motivation of this paper is the absence of an infinite family of binary self-dual cyclic codes whose
minimum distances have a lower bound exceeding the square-root bound.
By exploiting the defining sets of cyclic codes of length $2^m-1$, we develop four construction methods that yield infinite families of self-dual binary cyclic codes of length $2^{m+1}-2$ whose minimum distance have a lower bound larger than the square-root bound,  thereby providing an affirmative answer to the 70-year-old open problem above.

A second motivation arises from a recent paper by Tang and Ding \cite{TangDing22}, where several infinite families of binary cyclic codes with parameters $[n,(n\pm1)/2,d]$ and good minimum distance $d$ were constructed, where $n=2^m-1$. We present new families of cyclic codes that achieve both larger dimensions and better lower bounds on the minimum distances than those in \cite{TangDing22}.

The third motivation comes from known lower bounds on the minimum distance of the duals of the narrow-sense primitive BCH codes over $\mathbb{F}_{2^m}$. Known bounds include the Sidelnikov bound, the Carlitz-Uchiyama bound, and the recent improvements in \cite{GDL21} and \cite{Wang24}. However, for $m\ge9$ and certain ranges of the designed distance, these bounds can still be significantly improved in many cases. In this paper, when the designed distance of the binary narrow-sense primitive BCH codes of length $2^m-1$ falls within certain ranges, we establish new lower bounds that are tighter than the existing ones in many cases.

The remainder of the paper is organized as follows. Section II recalls necessary preliminaries. Section III presents four families of self-dual binary cyclic codes whose minimum distances have a lower bound better than the square-root bound, and shows several classes of binary cyclic codes with  large dimensions and good minimum distances. It also provides improved lower bounds on the minimum distances of the duals of the binary narrow-sense primitive BCH codes for the designed distance in a certain interval. Section IV concludes the paper.

\section{Preliminaries}

In this section, we present some
basic concepts and results that will be utilized subsequently. Unless stated otherwise, from this point forward, we will use the notation outlined below:
\begin{itemize}
\item $\mathbb{F}_{q^m}$ is the finite field with $q^m$ elements, where $q$ is a prime power and $m$ is a positive integer.
\item  $\beta$ is an $n$-th primitive root of unity in $\mathbb{F}_{q^m}$.
\item $\mathcal{C}$ denotes the cyclic code of length $n$ with generator polynomial $g(x)$.
\item $T(\mathcal{C})=\{0\leq i \leq n-1 \,|\, g(\beta^{i})=0 \}$ is the defining set of $\mathcal{C}$ with respect to $\beta$, and $T(\mathcal{C})^{-1}=\{n-i\,|\,i\in
    T(\mathcal{C})\}$.
\item $\mathcal{C}^{\perp}$  is the dual of $\mathcal{C}$ and $T(\mathcal{C}^\perp)$ is the defining set of  $\mathcal{C}^{\perp}$.
\item ${\rm CL}(a)$ denotes the $q$-cyclotomic coset leader modulo $n$ containing $a$, where $a$ is a positive integer with $1\leq a \leq n-1$.
\item $d(\mathcal{C})$ and $d(\mathcal{C}^\perp)$ denote the minimum Hamming distances of $\mathcal{C}$ and $\mathcal{C}^{\perp}$, respectively.
\item $[a,b]$ denotes the integer interval $\{a,a+1,\ldots,b\}$.
\item $\mathbb{Z}_n = \mathbb{Z}/n\mathbb{Z}$  denotes the residue class ring modulo $n$.
\end{itemize}

\subsection{Known results of several families of $[n, (n \pm 1)/2]$ binary cyclic codes }

Let $s$ be an integer with $0\leq s<n$. The \emph{$q$-cyclotomic coset} of $s$ modulo $n$ is defined by
$$ C_{s}=\{s,sq,sq^2,\ldots,sq^{\ell_{s-1}}\}\,\, {\text\,\bmod \,\,n\subseteq \mathbb{Z}_n, }$$
where $i \bmod{n}$ denotes the unique integer $r$ with $0 \leq r \leq n-1$ such that $i \equiv r \pmod{n}$,
 $\ell_s$ is the smallest positive integer such that $s\equiv sq^{\ell_s} \pmod n$, and $\ell_s$ is the size of the $q$-cyclotomic coset.
The smallest integer in $C_{s}$ is called the \emph{coset leader} of $C_{s}$.

We will need the following result.

\begin{lemma}\cite[Chapter 4]{Huffman03}
\label{size about cyclotomic coset}
Let $\gcd(q,n)=1$.
The size of each $q$-cyclotomic coset is a divisor of $\ord_{n}(q)$, where $\ord_n(q)$ is the smallest positive integer $k$ such that $q^k\equiv 1\pmod{n}$.
\end{lemma}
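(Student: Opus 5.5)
The plan is to read the statement as a fact about the action of the cyclic group $\langle q\rangle\subseteq\mathbb{Z}_n^*$ on $\mathbb{Z}_n$ by multiplication, and to apply Lagrange's theorem. Because $\gcd(q,n)=1$, the class of $q$ lies in the unit group $\mathbb{Z}_n^*$. Hence $k=\ord_n(q)$ exists and equals the order of the subgroup $G=\langle q\rangle=\{1,q,\ldots,q^{k-1}\}$ of $\mathbb{Z}_n^*$.

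Fix $s\in\mathbb{Z}_n$. The coset $C_s$ is exactly the orbit $\{sq^j \bmod n : j\ge 0\}$ of $s$ under the action $g\cdot x=gx$ of $G$ on $\mathbb{Z}_n$.

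\begin{itemize}
\item \emph{Stabilizer.} Set $H_s=\{g\in G: gs\equiv s \pmod n\}$. This is a subgroup of $G$, since it contains $1$ and is closed under multiplication in a finite group.
\item \emph{Orbit--stabilizer.} The orbit--stabilizer theorem gives $|C_s|=[G:H_s]=|G|/|H_s|$. Therefore $|C_s|$ divides $|G|=k$.
\end{itemize}

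It remains to check that this orbit size is the $\ell_s$ of the definition. $H_s$ is a subgroup of the cyclic group $G$, so it is generated by $q^{t}$, where $t$ is the least positive integer with $sq^t\equiv s \pmod n$; this $t$ is exactly $\ell_s$. The elements $s,sq,\ldots,sq^{\ell_s-1}$ are distinct modulo $n$. If $sq^i\equiv sq^j$ with $0\le i<j<\ell_s$, then multiplying by $q^{-i}$ (which exists because $q$ is a unit) gives $sq^{j-i}\equiv s$, contradicting the minimality of $\ell_s$.

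A direct route avoids group language. Since $sq^k\equiv s$, write $k=a\ell_s+r$ with $0\le r<\ell_s$. Because $sq^{\ell_s}\equiv s$, we get $sq^r\equiv sq^{k}\equiv s$, so minimality forces $r=0$ and $\ell_s\mid k$.

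The only real obstacle is bookkeeping: keeping the invertibility of $q$ modulo $n$ in view, since it is what makes the division argument and the distinctness of the listed elements valid.
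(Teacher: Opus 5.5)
Your proof is correct. The paper gives no proof of its own and only cites Huffman--Pless, where the fact comes out of the factorization of $x^n-1$. There, with $t=\ord_n(q)$ and $\beta$ a primitive $n$th root of unity in $\mathbb{F}_{q^t}$, the minimal polynomial of $\beta^s$ over $\mathbb{F}_q$ is $\prod_{i\in C_s}(x-\beta^i)$. Hence $|C_s|=[\mathbb{F}_q(\beta^s):\mathbb{F}_q]$, and this degree divides $[\mathbb{F}_{q^t}:\mathbb{F}_q]=t$ because $\mathbb{F}_q(\beta^s)\subseteq\mathbb{F}_{q^t}$.

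Your route stays entirely inside $\mathbb{Z}_n$. You use either the orbit--stabilizer count for the action of $\langle q\rangle$, or the division $k=a\ell_s+r$ together with the distinctness of $s,sq,\dots,sq^{\ell_s-1}$. The distinctness step is where the invertibility of $q$ modulo $n$ genuinely enters. Your argument is more elementary and needs neither extension fields nor minimal polynomials. The textbook route in exchange identifies $|C_s|$ with the degree of the irreducible factor of $x^n-1$ that has $\beta^s$ as a root, which is the form in which coset sizes are used in coding theory.

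Two small remarks. The step you leave implicit, that the orbit $\{sq^j: j\ge 0\}$ equals the listed set $\{s,\dots,sq^{\ell_s-1}\}$, follows at once from $sq^j\equiv sq^{j \bmod \ell_s}$. Also, your two arguments are redundant: either one alone, together with the distinctness check, already proves the lemma.
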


Let $m \geq 2$ be a positive integer and let $n=2^m-1$. Let $i$  be an integer with $0\leq i \leq n$, then the $2$-adic expansion of $i$ can be written as
 $$\bar{i}=(i_{0},i_{1},\ldots,i_{m-1})_2,$$ where
  $i=i_{0}+i_{1}2+\cdots+i_{m-2}2^{m-2}+i_{m-1}2^{m-1}$ and each $i_j\in \{0,1\}$. Define $\omega_2{(\bar{i})}=\sum\limits_{j=0}^{m-1}i_j$.

Let $\alpha$ be a primitive element of $\mathbb{F}_{2^m}$. The authors in \cite{TangDing22} defined a polynomial
\begin{equation*}
g_{(i,m)}=\prod\limits_{1\leq j \leq n-1 \atop \omega_2(\bar{j}) \equiv i \pmod{2} } (x-\alpha^j)
\end{equation*}
over $\mathbb{F}_2$ for each $i \in \{ 0 , 1\}$.  Let $\mathcal{C}_{(i,m)}$ denote the cyclic code over $\mathbb{F}_2$ with length $n$ and generator polynomial $g_{(i,m)}(x)$.  They
showed parameters of these codes as follows.

\begin{lemma}\label{lem:02061}\cite[Theorem 12]{TangDing22}
Let $m\geq 4$ be even. Then $\mathcal{C}_{(0,m)}$ and $\mathcal{C}_{(1,m)}$ have parameters
 $[2^m-1,2^{m-1}+1,d_1]$ and $[2^m-1,2^{m-1}-1,d_2]$, respectively, where
\begin{align*}
d_1\geq \begin{cases}
2^{\frac{m-4}{2}}+1, \; & \text{if $m \equiv 4 \pmod{8}$,}\\
2^{\frac{m-2}{2}}+1, \; & \text{if $m \equiv 2 \pmod{4}$}
\end{cases} \,\,\,\,\,\,\,\,\text{and}\,\,\,\,\,\,\,\,
d_2\geq \begin{cases}
2^{\frac{m-2}{2}}+1, & \text{if $m \equiv 0 \pmod{4} $},\\
2^{\frac{m-4}{2}}+1, & \text{if $m \equiv 2 \pmod{4} $}.
\end{cases}
\end{align*}
\end{lemma}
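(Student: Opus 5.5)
The plan has two parts: first compute the dimensions by counting exponents, then bound the minimum distance with the BCH bound applied to an arithmetic progression of zeros whose common difference is coprime to $n=2^m-1$.

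\textbf{Well-definedness and dimension.} Multiplying $j$ by $2$ modulo $2^m-1$ cyclically shifts the $2$-adic expansion $\bar{j}$, so $\omega_2(\bar{j})$ is constant on each $2$-cyclotomic coset. Hence each root set of $g_{(i,m)}$ is a union of cyclotomic cosets, and $g_{(i,m)}(x)\in\mathbb{F}_2[x]$. For the degrees, note that among the $2^m$ vectors in $\mathbb{F}_2^m$, exactly $2^{m-1}$ have odd weight. The exponents $1\le j\le n-1$ correspond to all vectors except the zero vector and the all-ones vector, and both of these have even weight because $m$ is even. Therefore $\deg g_{(1,m)}=2^{m-1}$ and $\deg g_{(0,m)}=2^{m-1}-2$. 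This gives dimensions $2^{m-1}-1$ and $2^{m-1}+1$, as claimed.

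\textbf{Distance via a BCH bound with step $a$.} I will use the generalized BCH bound: if $\gcd(a,n)=1$ and $\alpha^{b},\alpha^{b+a},\dots,\alpha^{b+(\delta-2)a}$ are all zeros, then $d\ge\delta$. The natural step is $a=2^h+1$. For $1\le t\le 2^h-1$ and $2h\le m$, the product $t(2^h+1)=t+t2^h$ involves no carries, so its weight is $2\omega_2(\bar t)$, which is even.

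For $\mathcal{C}_{(0,m)}$, this yields the $2^h-1$ consecutive zeros $\alpha^{t(2^h+1)}$ with $1\le t\le 2^h-1$. I then try to extend the run by one more term, for instance $t=2^h$, whose product $2^h+2^{2h}$ has weight $2$ (valid when $2h<m$). This would give $d_1\ge 2^h+1$.

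For $\mathcal{C}_{(1,m)}$, I shift the progression by a single bit outside the occupied positions. Concretely, I take $b=2^{m-1}$ (or another $2^{r}$ with $r\ge 2h$), so that $b+t(2^h+1)$ has weight $1+2\omega_2(\bar t)$, which is odd. The exponents must stay below $n$, and the top bit must remain disjoint from the bits of $t+t2^h$; I expect this to require $2h\le m-1$.

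\textbf{Choosing $h$ in each congruence class.} The coprimality condition is $\gcd(2^h+1,2^m-1)=1$, which holds exactly when $m/\gcd(h,m)$ is odd. For $m\equiv 2\pmod 4$, I take $h=(m-2)/2$, which is even, so $m/\gcd(h,m)$ is odd; this gives the $2^{(m-2)/2}+1$ bound. For the other classes the same $h$ fails the coprimality condition, so I drop to $h=(m-4)/2$ (for example, $m\equiv 4\pmod 8$ makes $h$ even with $m/\gcd(h,m)$ odd). For $d_2$ with $m\equiv 0\pmod 4$, I would instead try the step $a=2^h-1$ or the progression in reversed order, $\alpha^{-b-ta}$, to recover $h=(m-2)/2$.

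\textbf{Main obstacle.} I expect the hardest step to be matching each congruence class of $m$ to a step $a$ coprime to $2^m-1$ while keeping the run of length $2^h$ free of carries. Obtaining the exact extra $+1$ in the bounds, and the case $m\equiv 0\pmod 4$ for $d_2$, will need the most careful bookkeeping of the bit positions. If the arithmetic progression cannot be extended far enough, I would fall back on the Hartmann--Tzeng bound, using the two progressions $t(2^h+1)$ and $t(2^h+1)+2^{m-1}$ together.
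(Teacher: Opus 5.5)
The paper does not prove this lemma. It only quotes it from \cite{TangDing22}, so I am judging your argument on its own merits.

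\textbf{What works.} Your dimension count is correct. The step-$(2^h+1)$ argument gives the $d_1$ bounds: use $h=(m-2)/2$ for $m\equiv 2\pmod 4$ and $h=(m-4)/2$ for $m\equiv 4\pmod 8$. The one exception is $m=4$, where $h=0$ and the extension term $2^h+2^{2h}=2$ has odd weight, but there $d_1\ge 2$ is trivial. For $d_2$ with $m\equiv 2\pmod 4$ you do not need to drop $h$ at all. With $h=(m-2)/2$, the shifted run $2^{m-1}+t(2^{h}+1)$, $0\le t\le 2^{h}-1$, already gives $d_2\ge 2^{(m-2)/2}+1$, which is more than required.

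\textbf{The gap.} The case $d_2$ with $m\equiv 0\pmod 4$ is left open, and your main mechanism cannot close it. In your framework, the run along step $2^h+1$ has about $2^h$ terms and needs $2h\le m-2$. Reaching $2^{(m-2)/2}+1$ therefore requires $h=(m-2)/2$. This $h$ is odd when $4\mid m$, so $\gcd(h,m)=1$ and $\gcd(2^h+1,2^m-1)=3$. Smaller $h$ only yields bounds of order $2^{(m-4)/2}$, and for $m\equiv 0\pmod 8$ even $h=(m-4)/2$ is not coprime. Reversing the progression does not help either. The map $j\mapsto n-j$ sends weight $w$ to $m-w$, which has the same parity since $m$ is even, and $\gcd(-a,n)=\gcd(a,n)$.

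\textbf{The missing idea.} The step $2^h-1$, which you mention but do not analyse, works. Take $h=(m-2)/2$ and $a=2^h-1$. Since $\gcd(h,m)$ divides $m-2h=2$ and $h$ is odd, we get $\gcd(a,n)=2^{\gcd(h,m)}-1=1$. For $1\le t\le 2^h$,
\[
t(2^h-1)=(t-1)\,2^h+\bigl((2^h-1)-(t-1)\bigr).
\]
So the lower $h$ bits are the complement of the bits of $t-1$, and the next $h$ bits are the bits of $t-1$. Hence $\omega_2$ of each exponent equals $h$, which is odd. All these exponents lie in $[1,n-1]$ because $2h<m$. Thus $\alpha^{a},\alpha^{2a},\dots,\alpha^{2^h a}$ are $2^h$ zeros of $\mathcal{C}_{(1,m)}$ in arithmetic progression with step coprime to $n$. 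The BCH bound then gives $d_2\ge 2^{(m-2)/2}+1$. With this case filled in, your proof is complete.
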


\begin{lemma}\label{lem:0206}\cite[Theorem 13]{TangDing22}
Let $m\geq 4$ be even. Then $\mathcal{C}_{(0,m)}^\perp$ and $\mathcal{C}_{(1,m)}^\perp$  have parameters
 $[2^m-1,2^{m-1}-2,d_1^\perp]$ and $[2^m-1,2^{m-1},d_2^\perp]$, where
 \begin{align*}
d_1^\perp \geq \begin{cases}
2^{\frac{m-2}{2}}+2, \; & \text{if $m \equiv 0 \pmod{4}$,}\\
2^{\frac{m-4}{2}}+2, \; & \text{if $m  \equiv 2 \pmod{4}$}
\end{cases} \,\,\,\,\,\,\,\,\text{and}\,\,\,\,\,\,\,\,
d_2^\perp\geq \begin{cases}
2^{\frac{m-4}{2}}+2, & \text{if $m \equiv 4 \pmod{8} $},\\
2^{\frac{m-2}{2}}+2, & \text{if $m \equiv 2 \pmod{4} $}.
\end{cases}
\end{align*}
\end{lemma}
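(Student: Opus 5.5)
The dimensions come first. Both codes live in $\mathbb{F}_2[x]/\langle x^n-1\rangle$ with $n=2^m-1$. The defining set of $\mathcal{C}_{(i,m)}$ is $T_i=\{1\le j\le n-1:\omega_2(\bar j)\equiv i \pmod 2\}$, and $0\notin T_0\cup T_1$. Multiplying by $2$ cyclically shifts the binary digits of $j$, so it preserves $\omega_2(\bar j)$. Hence each $T_i$ is a union of $2$-cyclotomic cosets, and $g_{(i,m)}$ really has coefficients in $\mathbb{F}_2$.

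Next we count. The number of $j\in[0,n]$ with even weight equals the number with odd weight, namely $2^{m-1}$. Since $m$ is even, $j=n$ has weight $m$, which is even. Removing $j=0$ and $j=n$ gives $|T_0|=2^{m-1}-2$ and $|T_1|=2^{m-1}$. Therefore $\dim \mathcal{C}_{(i,m)}=n-|T_i|$, which is $2^{m-1}+1$ and $2^{m-1}-1$ respectively. The defining set of the dual is $\mathbb{Z}_n\setminus T_i^{-1}$. Since $\omega_2(\overline{n-j})=m-\omega_2(\bar j)$ and $m$ is even, we get $T_i^{-1}=T_i$. So $T(\mathcal{C}_{(i,m)}^\perp)=\mathbb{Z}_n\setminus T_i$ and $\dim\mathcal{C}_{(i,m)}^\perp=|T_i|$, which gives $2^{m-1}-2$ and $2^{m-1}$.

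For the distance bounds I would use the BCH bound. The plan is to find the longest run of consecutive integers (mod $n$) inside each defining set, possibly after multiplying by a unit $a$ coprime to $n$. Write $S_0=T(\mathcal{C}_{(0,m)}^\perp)=\{j:\omega_2(\bar j)\text{ odd}\}\cup\{0,n\}$. Since $0$ lies in this set and it is closed under negation, a run of the form $\{-t,\dots,-1,0,1,\dots,t\}$ yields $d\ge 2t+2$. That matches the target shape $2^{h}+2$. Concretely, I would take $a=2^{m/2}+1$ or a similar unit, so that multiples $aj$ for small $j$ have easily controlled weight. For $j<2^{m/2}$ we have $aj=j+2^{m/2}j$, and its weight is $2\omega_2(\bar j)$, which is even. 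So the naive multiplier gives the wrong parity. The refined choice must depend on $m \bmod 4$ or $m \bmod 8$, which is exactly where the case split in the statement comes from.

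For example, one could try $a=2^{m/2}-1$. Its products have weights that can be computed via the complement $2^{m/2}j-j$, which relates $\omega_2$ to $m/2-\omega_2$ plus borrow effects. Its parity is then governed by $m/2 \bmod 2$. The main obstacle is exactly this step: choosing, for each congruence class of $m$, a multiplier $a$ and a range $1\le j\le t$ with $t\approx 2^{(m-2)/2}$ or $2^{(m-4)/2}$, and verifying the parity of $\omega_2(\overline{aj \bmod n})$ for every $j$ in the range. The verification rests on careful carry and borrow analysis of $j(2^{m/2}\pm1)$ and possibly $j(2^{m/2-1}\pm1)$. Since the statement is cited from \cite[Theorem 13]{TangDing22}, the cleanest route is simply to invoke that result; the above is how I would reconstruct it.
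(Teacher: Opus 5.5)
Your dimension count is correct and complete. The paper itself gives no proof of this lemma; it only quotes \cite[Theorem 13]{TangDing22} for later comparison. The distance bounds, however, are the substance of the statement, and they are not established. You leave the choice of step open, you never treat $\mathcal{C}_{(1,m)}^\perp$, and both concrete steps you propose fail. Since $n=2^m-1=(2^{m/2}-1)(2^{m/2}+1)$, both $2^{m/2}+1$ and $2^{m/2}-1$ divide $n$. So $\beta^{a}$ is not a primitive $n$-th root of unity, and a run $\{b,b+a,b+2a,\dots\}$ in the defining set gives no bound at all: the BCH/Hartmann--Tzeng bound needs $\gcd(n,c_1)=1$. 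Your own weight computation shows the problem. For $m\equiv 2\pmod 4$, every nonzero multiple of $2^{m/2}-1$ modulo $n$ has odd weight $m/2$, so the ``run'' in $S_0$ could be made arbitrarily long, which is absurd. The carry/borrow parity analysis is therefore not the real obstacle. What drives the case split is the unit condition, through $\gcd(2^k-1,2^m-1)=2^{\gcd(k,m)}-1$ and the fact that $\gcd(2^k+1,2^m-1)=1$ if and only if $m/\gcd(k,m)$ is odd.

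Once the multipliers are admissible, the missing step is short. Take $k\ge1$ with $2k<m$.
\begin{itemize}
\item For $1\le j\le 2^k$, write $j(2^k-1)=2^k(j-1)+\bigl((2^k-1)-(j-1)\bigr)$. The two summands occupy disjoint $k$-bit blocks, so $\omega_2\bigl(\overline{j(2^k-1)}\bigr)=k$.
\item Similarly, $j(2^k+1)=j+2^kj$ has weight $2\omega_2(\bar j)$ for $0\le j<2^k$, and weight $2$ for $j=2^k$.
\item No reduction modulo $n$ occurs in either case.
\end{itemize}
Now choose the multipliers as follows.
\begin{itemize}
\item For $\mathcal{C}_{(0,m)}^\perp$ (odd weights together with $0$), take $a=2^k-1$, with $k=\frac m2-1$ if $4\mid m$ and $k=\frac m2-2$ if $m\equiv2\pmod4$. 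Then $k$ is odd and $\gcd(k,m)=1$.
\item For $\mathcal{C}_{(1,m)}^\perp$ (even weights, including $0$), take $a=2^k+1$, with $k=\frac m2-1$ if $m\equiv 2\pmod 4$ and $k=\frac m2-2$ if $m\equiv 4\pmod 8$. Then $m/\gcd(k,m)$ is odd.
\end{itemize}
In each case $\{0,a,2a,\dots,2^ka\}$ lies in the defining set. The BCH bound for the primitive root $\beta^{a}$ then gives $d\ge 2^k+2$, which is exactly the stated bound. The single exception is $\mathcal{C}_{(1,4)}^\perp$, where $k=0$; settle it directly, since $\{5,6\}$ lies in its defining set and gives $d\ge3$. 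Your symmetry observation is correct and pays off here. Because $\omega_2(\overline{n-x})=m-\omega_2(\bar x)$ has the same parity as $\omega_2(\bar x)$, the run extends to $\{ja:-2^k\le j\le 2^k\}$. This yields the stronger bound $d\ge 2^{k+1}+2$.
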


\begin{lemma}\label{lem:02062}\cite[Theorem 14]{TangDing22}
 Let $m\geq 3$ be odd. Then $\mathcal{C}_{(0,m)}$ and $\mathcal{C}_{(1,m)}$ have parameters
 $[2^m-1,2^{m-1},d]$, where
\begin{equation*}
\begin{split}
d \geq \begin{cases}
2^{\frac{m-1}{2}}+1, \; & \text{if $m \equiv 3 \pmod{4}$}, \\
2^{\frac{m-1}{2}}+3, \; & \text{if $m  \equiv 1 \pmod{4}$}.
\end{cases}
\end{split}
\end{equation*}
\end{lemma}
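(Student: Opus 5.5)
We need to prove Lemma \ref{lem:02062}: for odd $m\ge 3$, the codes $\mathcal{C}_{(0,m)}$ and $\mathcal{C}_{(1,m)}$ have dimension $2^{m-1}$ and minimum distance at least $2^{(m-1)/2}+1$, improved to $2^{(m-1)/2}+3$ when $m\equiv 1\pmod 4$.

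\emph{Dimension.} The defining set of $\mathcal{C}_{(i,m)}$ is $T_i=\{1\le j\le n-1:\omega_2(\bar j)\equiv i \pmod 2\}$. Multiplying by $2$ modulo $2^m-1$ cyclically rotates the binary digits, so $\omega_2$ is constant on $2$-cyclotomic cosets. Hence $T_i$ is a union of cosets and $g_{(i,m)}\in\mathbb{F}_2[x]$. Among the $2^m$ binary words of length $m$, exactly $2^{m-1}$ have even weight and $2^{m-1}$ have odd weight. The word $0$ has even weight, and the all-ones word $n$ (odd weight, since $m$ is odd) lies outside $[1,n-1]$. Therefore $|T_0|=2^{m-1}-1$ and $|T_1|=2^{m-1}-1$, and both codes have dimension $n-(2^{m-1}-1)=2^{m-1}$.

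\emph{The two codes are equivalent.} Since $m$ is odd, $\omega_2(\overline{n-j})=m-\omega_2(\bar j)$ has the opposite parity to $\omega_2(\bar j)$. So $j\mapsto n-j$ maps $T_0$ bijectively onto $T_1$. This means $\mathcal{C}_{(1,m)}$ is the reciprocal code of $\mathcal{C}_{(0,m)}$, and the two share the same weight distribution. It therefore suffices to bound $d(\mathcal{C}_{(1,m)})$ or $d(\mathcal{C}_{(0,m)})$, whichever has the more convenient defining set.

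\emph{BCH bound, general case.} I would locate a long run of consecutive integers inside $T_i$, or inside $aT_i$ for some $a$ with $\gcd(a,n)=1$ (a multiplier preserves weights). Each $j$ in the run must have weight parity $i$. Consecutive integers in binary usually alternate parity, so plain consecutive runs fail. Instead I would use the Hartmann--Tzeng/BCH bound with step $b$, $\gcd(b,n)=1$, taking $b=2^{(m+1)/2}+1$ or a similar value.

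Take $h=(m+1)/2$ and $b=2^h+1$, so that $jb=j+j2^h$. For $0\le j<2^{(m-1)/2}$, the low and high digit blocks do not overlap. Then $\omega_2(jb)=2\omega_2(j)$ is even, apart from wraparound modulo $2^m-1$, which must be handled carefully. This gives $2^{(m-1)/2}$ consecutive multiples $b,2b,\dots$ in $T_0$ (excluding $0$), and hence $d\ge 2^{(m-1)/2}+1$.

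\emph{Main obstacle.} The hardest step is verifying that the wraparound terms keep the right parity and that the progression indices are correct. The refinement to $+3$ when $m\equiv1\pmod 4$ is also delicate. For that case I would try to extend the arithmetic progression by two more terms, or adjoin elements at both ends via the Roos bound, checking the extra weights by hand. The congruence $m\equiv1\pmod4$ should make $(m-1)/2$ even, so that the boundary terms have the required parity.
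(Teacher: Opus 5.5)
Your dimension count and the observation $T_1=-T_0$ (so the two codes are reversals of each other and share a weight distribution) are correct. The paper gives no proof of this lemma and only cites Tang and Ding, so I assess your proposal on its own merits. There is a genuine gap in the $m\equiv 1\pmod 4$ case, and a small miscount in the general case.

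\textbf{The general bound.} For $0\le j<2^{(m-1)/2}$ there is no wraparound at all, since $jb\le 2^m-2^{(m-1)/2}-1<n$. Excluding $j=0$, this range gives only $2^{(m-1)/2}-1$ terms, hence only $d\ge 2^{(m-1)/2}$. The missing term is $j=2^{(m-1)/2}$, where $jb\equiv 2^{(m-1)/2}+1\pmod n$ has weight $2$. With that term included you do get $d\ge 2^{(m-1)/2}+1$ for every odd $m$.

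\textbf{The gap: the refinement for $m\equiv 1\pmod 4$.} With your step $b=2^{(m+1)/2}+1$ this refinement cannot work, because the run $\{jb:1\le j\le 2^{(m-1)/2}\}\subseteq T_0$ is already maximal. The value $j=0$ is excluded. For $j=2^{(m-1)/2}+1$ we get $jb\equiv 2+2^{(m-1)/2}+2^{(m+1)/2}$, which has weight $3$ (weight $1$ when $m=3$) and so is odd for every $m$. The run in $T_1=-T_0$ is just the mirror image. All weights in this progression are $2\omega_2(\bar{j})$, so nothing depends on $m \bmod 4$. Your remark that $(m-1)/2$ is even therefore never enters, and you give no concrete Roos configuration in which it would.

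\textbf{The missing idea.} Choose the step so that the weights of its multiples are governed by $m$ itself. Take $e=2^{(m+1)/2}-1$; it is coprime to $n$ because $\gcd(\frac{m+1}{2},m)=1$.

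For $1\le j\le 2^{(m-1)/2}$ write $je=(j-1)2^{(m+1)/2}+(2^{(m+1)/2}-j)$. The two summands have disjoint binary supports. The second is the $\frac{m+1}{2}$-bit complement of $j-1$. Hence $\omega_2(\overline{je})=\frac{m+1}{2}$.

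The next two multiples are
$(2^{(m-1)/2}+1)e\equiv 2^{(m-1)/2}$ and $(2^{(m-1)/2}+2)e\equiv 2^{(m+1)/2}+2^{(m-1)/2}-1$.
Their weights are $1$ and $\frac{m+1}{2}$.

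\emph{Case $m\equiv 3\pmod 4$.} Here $\frac{m+1}{2}$ is even, so the first $2^{(m-1)/2}$ multiples lie in $T_0$. This recovers $d\ge 2^{(m-1)/2}+1$.

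\emph{Case $m\equiv 1\pmod 4$.} Here $\frac{m+1}{2}$ is odd, so all $2^{(m-1)/2}+2$ multiples $e,2e,\dots,(2^{(m-1)/2}+2)e$ lie in $T_1$. The BCH bound with respect to $\alpha^{e}$ then gives $d(\mathcal{C}_{(1,m)})\ge 2^{(m-1)/2}+3$. Your reversal argument transfers this bound to $\mathcal{C}_{(0,m)}$.
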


\begin{lemma}\label{lem:02063}\cite[Theorem 15]{TangDing22}
 Let $m\geq 3$ be odd. Then $\mathcal{C}_{(0,m)}^\perp$ and $\mathcal{C}_{(1,m)}^\perp$ have parameters
 $[2^m-1,2^{m-1}-1,d^{\perp}]$, where
\begin{equation*}
\begin{split}
d^\perp \geq \begin{cases}
2^{\frac{m-1}{2}}+2, \; & \text{if $m \equiv 3 \pmod{4}$},\\
2^{\frac{m-1}{2}}+4, \; & \text{if $m  \equiv 1 \pmod{4}$}.
\end{cases}
\end{split}
\end{equation*}
\end{lemma}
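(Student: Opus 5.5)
The goal is Lemma~\ref{lem:02063}: the parameters of $\mathcal{C}_{(0,m)}^\perp$ and $\mathcal{C}_{(1,m)}^\perp$ for odd $m\ge 3$. The plan is to derive it from Lemma~\ref{lem:02062} by identifying the duals as even-weight subcodes of the codes already handled there.

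\textbf{Dimension.} Write $T_i=\{1\le j\le n-1:\omega_2(\bar j)\equiv i \pmod 2\}$, which is the defining set of $\mathcal{C}_{(i,m)}$. Multiplying by $2$ only rotates the binary digits cyclically, so weights are preserved and each $T_i$ is a union of $2$-cyclotomic cosets. Negation modulo $n=2^m-1$ complements all $m$ bits, so $\omega_2(\overline{n-j})=m-\omega_2(\bar j)$. Since $m$ is odd, this gives $-T_0=T_1$. By Lemma~\ref{lem:02062}, $|T_i|=n-2^{m-1}=2^{m-1}-1$, so $\dim\mathcal{C}_{(i,m)}^\perp=2^{m-1}-1$.

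\textbf{Identifying the dual.} The defining set of the dual is $T(\mathcal{C}^\perp)=\mathbb{Z}_n\setminus(-T(\mathcal{C}))$. Hence
\[
T(\mathcal{C}_{(0,m)}^\perp)=\mathbb{Z}_n\setminus T_1=T_0\cup\{0\},
\]
and by symmetry $T(\mathcal{C}_{(1,m)}^\perp)=T_1\cup\{0\}$. So $\mathcal{C}_{(i,m)}^\perp$ is exactly the even-weight subcode of $\mathcal{C}_{(i,m)}$.

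\textbf{Minimum distance.} Every nonzero codeword of $\mathcal{C}_{(i,m)}^\perp$ lies in $\mathcal{C}_{(i,m)}$, so its weight is at least the bound $d$ of Lemma~\ref{lem:02062}. Its weight is also even. The bound $d$ is $2^{(m-1)/2}+1$ or $2^{(m-1)/2}+3$, which is odd because $m\ge 3$. Therefore the weight is at least $d+1$, which equals $2^{(m-1)/2}+2$ when $m\equiv 3\pmod 4$ and $2^{(m-1)/2}+4$ when $m\equiv 1\pmod 4$. This is the claimed bound.

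The only delicate step is checking the dual defining set. In particular one must confirm that $0$ is excluded from $T_i$ (the index range starts at $1$), so that it enters the dual's defining set. The real difficulty, the bound $d$ itself, is already supplied by Lemma~\ref{lem:02062}.
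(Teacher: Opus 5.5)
Your argument is correct. The paper does not prove this statement: it quotes it from \cite[Theorem 15]{TangDing22}, together with Lemma~\ref{lem:02062}, only for later comparison, so there is no in-paper proof to measure yours against.

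What you have shown is that, inside the paper's framework, the statement follows directly from Lemma~\ref{lem:02062}. Because $m$ is odd, negation modulo $n$ complements all $m$ bits and so swaps weight parity. Hence $-T_0=T_1$, and $\mathbb{Z}_n$ is the disjoint union of $\{0\}$, $T_0$ and $T_1$. By (\ref{eq:0131}) the dual then has defining set $T_i\cup\{0\}$, which makes $\mathcal{C}_{(i,m)}^\perp$ the even-weight subcode of $\mathcal{C}_{(i,m)}$. The odd bounds $2^{(m-1)/2}+1$ and $2^{(m-1)/2}+3$ therefore rise by one. In standard terminology, $\mathcal{C}_{(0,m)}$ and $\mathcal{C}_{(1,m)}$ are odd-like duadic codes whose splitting is given by the multiplier $-1$. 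You have re-derived the familiar fact that their duals are the corresponding even-like duadic codes.

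This route explains structurally why each bound in the statement is exactly one more than the corresponding bound in Lemma~\ref{lem:02062}. It also uses the same parity step the paper applies in Lemma~\ref{m even result 2}, Lemma~\ref{m odd dual of result 2} and Theorem~\ref{BCH code result1}. The paper treats both cited results as black boxes; your argument cuts that to one, Lemma~\ref{lem:02062}. Getting Theorem 15 without that dependence would still require a direct BCH or Hartmann--Tzeng analysis of $T_i\cup\{0\}$.
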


In this paper, we will present new families of binary cyclic codes that have both larger dimensions and better lower bounds on the minimum distances than the binary cyclic codes $\C_{(i,m)}$ above.  The purpose of
introducing the four lemmas above is for comparison later.

\subsection{Several basic results on linear codes}
Let $\mathcal{C}$ be a linear code of length $n$ over $\mathbb{F}_q$. Then $\mathcal{C}$ is said to be self-orthogonal if $\mathcal{C}\subseteq
\mathcal{C}^{\perp}$,  dual-containing  if $\mathcal{C}^{\perp}\subseteq\mathcal{C}$  and  self-dual  if $\mathcal{C}^{\perp}=\mathcal{C}$. For these special codes, we have the
following related results, which will be used later.

\begin{lemma} \cite[Theorem 2]{grassl1997codes}, \cite[Chapter 4]{Huffman03}
\label{criterion for self-orthogonal}
Let $\mathcal{C}$ be a cyclic code of length $n$ over $\mathbb{F}_{q}$ such that $gcd(n, q)=1$. The code $\mathcal{C}$ is dual-containing if and only if $ T(\mathcal{C}) \cap
T(\mathcal{C})^{-1} = \emptyset$, and $\mathcal{C}$ is self-orthogonal if and only if $T(\mathcal{C}) \cup T(\mathcal{C})^{-1} = \mathbb{Z}_n$.
\end{lemma}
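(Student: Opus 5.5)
The plan is to pass to the standard description of the dual of a cyclic code and then turn both inclusions into set conditions on $\mathbb{Z}_n$.

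\textbf{Dual code.} Write $x^n-1=g(x)h(x)$, where $g(x)$ is the generator polynomial of $\mathcal{C}$. Then $\mathcal{C}^\perp$ is the cyclic code generated by the reciprocal polynomial $\tilde h(x)=x^{\deg h}h(x^{-1})$, normalized to be monic. Since $\gcd(n,q)=1$, the polynomial $x^n-1$ is separable, so every cyclic code is determined by its defining set. The zeros of $h$ are the $\beta^i$ with $i\in\mathbb{Z}_n\setminus T(\mathcal{C})$. The zeros of $\tilde h$ are their inverses $\beta^{-i}$. Hence
\[
T(\mathcal{C}^\perp)=\mathbb{Z}_n\setminus T(\mathcal{C})^{-1}.
\]

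\textbf{Translating inclusions.} For cyclic codes with defining sets $T_1,T_2$, the code $\langle g_1\rangle$ is contained in $\langle g_2\rangle$ if and only if $g_2\mid g_1$. Because all roots are simple, this holds if and only if $T_2\subseteq T_1$.

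\textbf{Dual-containing.} By the previous step, $\mathcal{C}^\perp\subseteq\mathcal{C}$ holds if and only if $T(\mathcal{C})\subseteq T(\mathcal{C}^\perp)=\mathbb{Z}_n\setminus T(\mathcal{C})^{-1}$. This is exactly the condition $T(\mathcal{C})\cap T(\mathcal{C})^{-1}=\emptyset$.

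\textbf{Self-orthogonal.} Similarly, $\mathcal{C}\subseteq\mathcal{C}^\perp$ holds if and only if $\mathbb{Z}_n\setminus T(\mathcal{C})^{-1}\subseteq T(\mathcal{C})$. This is equivalent to $T(\mathcal{C})\cup T(\mathcal{C})^{-1}=\mathbb{Z}_n$.

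\textbf{Main obstacle.} The only step needing care is the dual description. The plan is to verify it by counting: $\dim\mathcal{C}^\perp=n-\dim\mathcal{C}=\deg g$. I will also check that $\langle\tilde h\rangle\subseteq\mathcal{C}^\perp$. For $c=ag\in\mathcal{C}$ and $b\in\langle\tilde h\rangle$, the inner products of $c$ with all cyclic shifts of $b$ are the coefficients of $c(x)\,x^{\deg h}\,b^*(x)$ modulo $x^n-1$, where $b^*$ denotes the reversal of $b$. Since $b$ is a multiple of $\tilde h$, the reversal $b^*$ carries a factor of $h$, so this product contains the factor $gh=x^n-1$ and vanishes in the quotient ring. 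The dimensions then agree, which gives equality.

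Throughout, the index convention is $T(\mathcal{C})^{-1}=\{n-i : i\in T(\mathcal{C})\}$, read modulo $n$, so that $0\mapsto 0$.
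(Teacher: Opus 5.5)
Your proof is correct and is essentially the standard argument behind the cited result. The paper does not reprove this lemma but relies on exactly the identity $T(\mathcal{C}^{\perp})=\mathbb{Z}_n\setminus T(\mathcal{C})^{-1}$ recorded in (\ref{eq:0131}), combined, as you do, with the fact that inclusion of cyclic codes corresponds to reverse inclusion of defining sets; your check $\langle\tilde h\rangle\subseteq\mathcal{C}^{\perp}$ together with the dimension count is sound, and the extra factor $x^{\deg h}$ there is harmless because $x$ is a unit modulo $x^n-1$ and only the vanishing of the product matters.
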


\begin{lemma}\cite[Theorem 2]{Chen23} 
\label{criterion for plotkin-self-dual-1}
Let $s$ be a positive integer. Suppose that $\mathcal{C} \subseteq \mathbb{F}_{2^{s}}^{n}$ is a dual-containing linear code. Then the linear
code
\begin{equation*}
\mathcal{D}_{1} = \{ (\mathbf{u} \mid \mathbf{u} + \mathbf{v}) : \mathbf{u} \in \mathcal{C},\ \mathbf{v} \in \mathcal{C}^{\perp} \} \subseteq \mathbb{F}_{2^{s}}^{2n}
\end{equation*}
is self-dual and has minimum distance $\min\{d(\mathcal{C}^{\perp}), 2d(\mathcal{C})\}$.
\end{lemma}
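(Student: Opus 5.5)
We argue directly from the definitions, writing $\mathcal{D}_1=\{(\mathbf{u}\mid\mathbf{u}+\mathbf{v}):\mathbf{u}\in\mathcal{C},\ \mathbf{v}\in\mathcal{C}^\perp\}$ as in the statement of Lemma~\ref{criterion for plotkin-self-dual-1}. The proof splits into three steps: (i) compute $\dim\mathcal{D}_1$, (ii) show $\mathcal{D}_1$ is self-orthogonal, hence self-dual by dimension count, and (iii) compute the minimum distance with the standard Plotkin $(\mathbf{u}\mid\mathbf{u}+\mathbf{v})$ argument.

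\emph{Dimension.} The map $\mathcal{C}\times\mathcal{C}^\perp\to\mathbb{F}_{2^s}^{2n}$, $(\mathbf{u},\mathbf{v})\mapsto(\mathbf{u}\mid\mathbf{u}+\mathbf{v})$, is linear. It is injective, since $\mathbf{u}=\mathbf{0}$ and $\mathbf{u}+\mathbf{v}=\mathbf{0}$ force $\mathbf{v}=\mathbf{0}$. Hence
\[
\dim\mathcal{D}_1=\dim\mathcal{C}+\dim\mathcal{C}^\perp=n,
\]
which is half the length $2n$.

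\emph{Self-orthogonality.} Take two codewords $(\mathbf{u}\mid\mathbf{u}+\mathbf{v})$ and $(\mathbf{u}'\mid\mathbf{u}'+\mathbf{v}')$. Their inner product is
\[
\mathbf{u}\mathbf{u}'^T+(\mathbf{u}+\mathbf{v})(\mathbf{u}'+\mathbf{v}')^T
=2\,\mathbf{u}\mathbf{u}'^T+\mathbf{u}\mathbf{v}'^T+\mathbf{v}\mathbf{u}'^T+\mathbf{v}\mathbf{v}'^T .
\]
Each term vanishes:
\begin{itemize}
\item $2\,\mathbf{u}\mathbf{u}'^T=0$ because the characteristic is $2$;
\item $\mathbf{u}\mathbf{v}'^T=0$ and $\mathbf{v}\mathbf{u}'^T=0$ because $\mathbf{u},\mathbf{u}'\in\mathcal{C}$ and $\mathbf{v},\mathbf{v}'\in\mathcal{C}^\perp$;
\item $\mathbf{v}\mathbf{v}'^T=0$ because $\mathcal{C}^\perp\subseteq\mathcal{C}$, so $\mathbf{v}'\in\mathcal{C}$ and $\mathbf{v}\in\mathcal{C}^\perp$.
\end{itemize}
Thus $\mathcal{D}_1\subseteq\mathcal{D}_1^\perp$. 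Since $\dim\mathcal{D}_1=n=2n-\dim\mathcal{D}_1$, we get $\mathcal{D}_1=\mathcal{D}_1^\perp$.

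\emph{Minimum distance.} Let $(\mathbf{u}\mid\mathbf{u}+\mathbf{v})$ be a nonzero codeword.
\begin{itemize}
\item If $\mathbf{v}=\mathbf{0}$, then $\mathbf{u}\ne\mathbf{0}$ and the weight is $2\,wt(\mathbf{u})\ge 2d(\mathcal{C})$.
\item If $\mathbf{v}\neq\mathbf{0}$, the triangle inequality gives $wt(\mathbf{u})+wt(\mathbf{u}+\mathbf{v})\ge wt(\mathbf{v})\ge d(\mathcal{C}^\perp)$.
\end{itemize}
So every nonzero weight is at least $\min\{d(\mathcal{C}^\perp),2d(\mathcal{C})\}$. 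Both values are attained:
\begin{itemize}
\item $(\mathbf{u}\mid\mathbf{u})$ with $\mathbf{u}\in\mathcal{C}$ of weight $d(\mathcal{C})$ has weight $2d(\mathcal{C})$;
\item $(\mathbf{0}\mid\mathbf{v})$ with $\mathbf{v}\in\mathcal{C}^\perp$ of weight $d(\mathcal{C}^\perp)$ has weight $d(\mathcal{C}^\perp)$.
\end{itemize}
Hence the minimum distance of $\mathcal{D}_1$ equals $\min\{d(\mathcal{C}^\perp),2d(\mathcal{C})\}$.

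The only step that genuinely uses the hypotheses is self-orthogonality. The vanishing of $\mathbf{v}\mathbf{v}'^T$ needs the dual-containing assumption, and the vanishing of the doubled $\mathbf{u}\mathbf{u}'^T$ term needs characteristic $2$. The remaining steps are routine.
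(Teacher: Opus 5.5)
Your proof is correct. It combines three standard steps: the dimension count $\dim\mathcal{C}+\dim\mathcal{C}^\perp=n$; self-orthogonality, where characteristic $2$ removes the term $2\,\mathbf{u}\mathbf{u}'^T$ and $\mathcal{C}^\perp\subseteq\mathcal{C}$ gives $\mathbf{v}\mathbf{v}'^T=0$; and the classical $(\mathbf{u}\mid\mathbf{u}+\mathbf{v})$ weight bound, with the two codewords $(\mathbf{u}\mid\mathbf{u})$ and $(\mathbf{0}\mid\mathbf{v})$ showing it is attained. In the degenerate case $\mathcal{C}=\mathbb{F}_{2^s}^n$ you need the usual convention $d(\{\mathbf{0}\})=\infty$. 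The paper does not prove this lemma itself; it imports it from Chen and Ding (their Theorem 2), and your argument is the standard proof of that result.
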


If $C$ is self-orthogonal, then $\,C^{\bot}\,$ is dual-containing.  Then by Lemma \ref{criterion for plotkin-self-dual-1} we have the following result.
\begin{lemma} 
\label{criterion for plotkin-self-dual-2}
Let $s$ be a positive integer.  If $\mathcal{C} \subseteq \mathbb{F}_{2^{s}}^{n}$ is a self-orthogonal linear code, then the linear code
\begin{equation*}
\mathcal{D}_{2} = \{ (\mathbf{u} \mid \mathbf{u} + \mathbf{v}) : \mathbf{u} \in \mathcal{C}^{\perp},\ \mathbf{v} \in \mathcal{C} \} \subseteq \mathbb{F}_{2^{s}}^{2n}
\end{equation*}
is self-dual and has minimum distance $\min\{d(\mathcal{C}), 2d(\mathcal{C}^{\bot})\}$.
\end{lemma}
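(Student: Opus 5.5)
The plan is to reduce the statement to Lemma \ref{criterion for plotkin-self-dual-1} applied to a suitable dual-containing code. Since $\mathcal{C}$ is self-orthogonal, $\mathcal{C}\subseteq\mathcal{C}^{\perp}$. Set $\mathcal{C}':=\mathcal{C}^{\perp}$. Then $(\mathcal{C}')^{\perp}=(\mathcal{C}^{\perp})^{\perp}=\mathcal{C}$, using that the double dual of a linear code over a finite field equals the code (from $\dim \mathcal{C}+\dim\mathcal{C}^\perp=n$ and $\mathcal{C}\subseteq(\mathcal{C}^\perp)^\perp$). Hence $(\mathcal{C}')^{\perp}=\mathcal{C}\subseteq\mathcal{C}^{\perp}=\mathcal{C}'$, so $\mathcal{C}'$ is dual-containing.

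Next I apply Lemma \ref{criterion for plotkin-self-dual-1} to $\mathcal{C}'$. The code it produces is
\[
\{(\mathbf{u}\mid\mathbf{u}+\mathbf{v}):\mathbf{u}\in\mathcal{C}',\ \mathbf{v}\in(\mathcal{C}')^{\perp}\}=\{(\mathbf{u}\mid\mathbf{u}+\mathbf{v}):\mathbf{u}\in\mathcal{C}^{\perp},\ \mathbf{v}\in\mathcal{C}\},
\]
which is exactly $\mathcal{D}_2$. The lemma says this code is self-dual over $\mathbb{F}_{2^s}$. It also gives its minimum distance as
\[
\min\{d((\mathcal{C}')^{\perp}),\,2d(\mathcal{C}')\}=\min\{d(\mathcal{C}),\,2d(\mathcal{C}^{\perp})\},
\]
which is the claimed value.

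The only point needing care is the identity $(\mathcal{C}^\perp)^\perp=\mathcal{C}$. It holds for the standard (nondegenerate) inner product used in Lemma \ref{criterion for plotkin-self-dual-1}, so there is no real obstacle. One could instead verify everything directly. Self-orthogonality of $\mathcal{D}_2$ follows from $(\mathbf{u}\mid\mathbf{u}+\mathbf{v})\cdot(\mathbf{u}'\mid\mathbf{u}'+\mathbf{v}')=2\mathbf{u}\mathbf{u}'^T+\mathbf{u}\mathbf{v}'^T+\mathbf{v}\mathbf{u}'^T+\mathbf{v}\mathbf{v}'^T=0$ in characteristic $2$. The dimension count is $\dim\mathcal{C}^\perp+\dim\mathcal{C}=n$. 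The distance is the standard $(\mathbf{u}\mid\mathbf{u}+\mathbf{v})$ bound, attained by $(\mathbf{0}\mid\mathbf{v})$ and $(\mathbf{u}\mid\mathbf{u})$. The reduction above avoids repeating this work.
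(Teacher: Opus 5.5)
Your proposal is correct and matches the paper's argument. The paper observes that $\mathcal{C}^{\perp}$ is dual-containing whenever $\mathcal{C}$ is self-orthogonal and applies Lemma \ref{criterion for plotkin-self-dual-1} to it, exactly as you do with $\mathcal{C}'=\mathcal{C}^{\perp}$ and $(\mathcal{C}^{\perp})^{\perp}=\mathcal{C}$; your optional direct verification is also sound but not needed.
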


The following two results will be used repeatedly in this paper.

\begin{lemma}(The Generalized Hartmann-Tzeng Bound)\,\cite[Theorem 2]{ROOS1982229}
\label{generalizations of bch bound.}
Consider a cyclic code $\mathcal{C}$ of length $n$ with defining set $T(\mathcal{C})$.
 Define the set  $N=\{b+i_1c_1+i_2c_2+\cdots+i_kc_k \,|\, 0 \leq i_j\leq s_j, 1\leq j \leq k \}$, where ${\rm
 gcd}(n,c_1)=1$, and ${\rm gcd}(n,c_j)<2+s_1+s_2+\cdots+s_{j-1}$ if $2\leq j \leq k$. If $N \subseteq T(\mathcal{C}) $,  then $d(\mathcal{C})\geq 2+s_1+\cdots+s_k$.
\end{lemma}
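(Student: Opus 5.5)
The statement is Roos's generalization of the Hartmann--Tzeng bound, which the paper quotes from \cite{ROOS1982229}. The plan is to reprove it with the van Lint--Wilson rank (shifting) technique.

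\emph{Reduction to a rank inequality.} Fix a nonzero codeword $c$ with support $S$ and $|S|=w$. For a finite set $X\subseteq\mathbb{Z}_n$, let $M(X)$ be the $|X|\times w$ matrix with entries $\beta^{xt}$ ($x\in X$, $t\in S$), and put $r(X)=\operatorname{rank}M(X)$. Then
\[
M(A)\,\mathrm{diag}(c_t)_{t\in S}\,M(B)^T
\]
has $(a,b)$-entry $\sum_{t\in S}c_t\beta^{(a+b)t}=c(\beta^{a+b})$. This entry vanishes whenever $a+b\in T(\mathcal{C})$. Since $\mathrm{diag}(c_t)$ is invertible on $S$, we get the key inequality
\[
A+B\subseteq T(\mathcal{C})\ \Longrightarrow\ r(A)+r(B)\le w .
\]
So it suffices to find $A,B$ with $A+B\subseteq N$ and $r(A)+r(B)\ge 2+s_1+\cdots+s_k$, or to reach a contradiction with the assumption $w\le 1+s_1+\cdots+s_k$.

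\emph{Choice of $A$ and $B$.} I would take $A=\{i_1c_1:0\le i_1\le s_1\}$ and $B=\{b+\sum_{j\ge2}i_jc_j\}$, adjusting so that $A+B=N$. Because $\gcd(n,c_1)=1$, the values $\beta^{c_1t}$, $t\in S$, are distinct. Hence $M(A)$ is a Vandermonde-type matrix and $r(A)=\min(s_1+1,w)$. Since $w\le r(A)+r(B)$ is impossible once $r(A)+r(B)>w$, we may assume $w\le 1+s_1+\cdots+s_k$ and aim for a contradiction. By induction on $k$ I would prove the stronger statement that the rank of $M$ on the full box, and of its translate by one step in the $c_1$ direction, is at least $\min(w,\,1+s_1+\cdots+s_j)$ at stage $j$. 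This is the form in which Roos states his bound: $d\ge |I|+r$ whenever $I\cup (I+\text{extension})\subseteq T$.

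\emph{Main obstacle: the inductive step.} Let $X_{j-1}$ be the box in the first $j-1$ directions. The step must show that adjoining the translates $X_{j-1}+ic_j$, $0\le i\le s_j$, raises the rank by at least one at each new translate until it reaches $w$. If the rank stalls, the row space $V$ of $M(X_{j-1})$ is invariant under $D=\mathrm{diag}(\beta^{c_jt})_{t\in S}$. Then $V$ is a sum of eigenspaces of $D$, so $S$ splits into classes on which $\beta^{c_jt}$ is constant. Those classes are residue classes modulo $n/\gcd(n,c_j)$, so each has size at most $\gcd(n,c_j)$.

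The hypothesis $\gcd(n,c_j)<2+s_1+\cdots+s_{j-1}$ should then contradict the rank $\min(w,1+s_1+\cdots+s_{j-1})$ already achieved by $X_{j-1}$. The rank of $M(X_{j-1})$ restricted to coordinates inside one class would have to exceed the class size. I expect that making this eigenspace and class-size count precise is the hardest part, and I would first try it for $k=2$ (the classical Hartmann--Tzeng case) before writing the general induction.
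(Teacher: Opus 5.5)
The paper gives no proof of this lemma; it is quoted from Roos. Your sketch therefore has to stand on its own, and it has a gap at the key step.

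Your frame is sound. $c_S$ is a nonzero kernel vector of $M_S(N)$, so $\operatorname{rank}M_S(N)\le w-1$ (your inequality with $A=N$, $B=\{0\}$). It then suffices to show $\operatorname{rank}M_S(N)\ge\min(w,1+s_1+\cdots+s_k)$ by adjoining the $c_j$-translates one direction at a time. Two side remarks. Your first split, $A=\{i_1c_1\}$ with $B$ the rest, is a dead end: for $k=2$, $r(B)$ is at most the number of distinct values of $\beta^{c_2t}$ on $S$, which can be $2$ when $\gcd(n,c_2)>1$. The extra translate in the $c_1$ direction is also unnecessary.

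The real problem is the inductive step, which is the only place the hypothesis $\gcd(n,c_j)<2+s_1+\cdots+s_{j-1}$ enters. It does not work as written, for three reasons. First, when the rank stalls (say $W_{i+1}=W_i$), the $D$-invariant space is the accumulated space $W_i=\sum_{r\le i}D^rW_0$, not the row space $W_0$ of $M(X_{j-1})$. Second, an invariant subspace is not a sum of eigenspaces. It is $\bigoplus_\lambda(W_i\cap E_\lambda)$, where $E_\lambda$ is the coordinate space on $S_\lambda=\{t\in S:\beta^{c_jt}=\lambda\}$, and each piece $W_i\cap E_\lambda$ may a priori be proper. Third, and most important, the global rank $\min(w,1+s_1+\cdots+s_{j-1})$ of $M_S(X_{j-1})$ gives no control over these pieces, so no contradiction can be extracted from it.

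The missing idea is to run the induction over all column sets. Prove $P(j)$: $\operatorname{rank}M_{S}(X_j)\ge\min(|S|,1+s_1+\cdots+s_j)$ for every $S\subseteq\mathbb{Z}_n$, not only for the support of the codeword.

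In the step, suppose $W_{i+1}=W_i$. Then $W_i$ is stable under each coordinate projection $\pi_\lambda$ onto $E_\lambda$, since $\pi_\lambda$ is a polynomial in $D$. Hence $W_i\supseteq\pi_\lambda(W_0)$, which is, up to zero padding, the row space of $M_{S_\lambda}(X_{j-1})$. Since $|S_\lambda|\le\gcd(n,c_j)\le 1+s_1+\cdots+s_{j-1}$, applying $P(j-1)$ to the column set $S_\lambda$ gives $\pi_\lambda(W_0)=E_\lambda$ for every $\lambda$. So $W_i$ is the whole space. Thus every translate raises the rank by at least one until it reaches $|S|$, which yields $P(j)$. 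Finally, $P(k)$ together with $\operatorname{rank}M_S(N)\le w-1$ gives $w\ge 2+s_1+\cdots+s_k$.

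The same step has a shorter column form. Suppose every $\delta-1$ columns $h_t$ of $H$ are independent and each diagonal value of $D=\mathrm{diag}(y_t)$ occurs at most $\delta-1$ times. Then every $\delta$ columns of $\begin{pmatrix}H\\ HD\end{pmatrix}$ are independent. Indeed, a dependency $a$ on $\delta$ columns must have full support. Then $\sum_t a_t(y_t-y_{t_0})h_t=0$ is a dependency on at most $\delta-1$ columns, which forces $y_t=y_{t_0}$ on all $\delta$ of them, a contradiction. Start from the single row indexed by $b$ and iterate this over $j=1,\dots,k$, each direction $s_j$ times, with $y_t=\beta^{c_jt}$ for $t\in\mathbb{Z}_n$. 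Each such value occurs exactly $\gcd(n,c_j)$ times, so the hypothesis applies at every step and the bound follows directly.
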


\begin{lemma} \cite[Theorem 1]{Chen23}
\label{lemma:generalized-van-lint} 
Let $q$ be a power of $2$ and $n$ be an odd positive integer. Let $\mathcal{C}_1 \subseteq \mathbb{F}_{q}^n$ be a cyclic code with generator polynomial $g_1(x) \in \mathbb{F}_q[x]$
and ${\mathcal{C}}_2 \subseteq \mathbb{F}_q^n$ be a cyclic code generated by the polynomial $g_1(x)g_2(x) \in \mathbb{F}_q[x]$, where $g_2(x)$ is a divisor of $x^n + 1$. Then the code $\{
(\mathbf{u} \mid \mathbf{u} + \mathbf{v}) : \mathbf{u} \in \mathcal{C}_1,\ \mathbf{v} \in \mathcal{C}_2 \}$ is permutation-equivalent to the repeated-root cyclic code $\C'$ of length \(
2n \) with the generator polynomial  \( g_1(x)^2g_2(x)/\gcd(g_1(x),g_2(x)) \).

\end{lemma}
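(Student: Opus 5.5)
The plan is to realise the code $\{(\mathbf{u}\mid\mathbf{u}+\mathbf{v})\}$ as an ideal in the repeated-root ring $\mathbb{F}_q[x]/\langle x^{2n}-1\rangle$ through a coordinate permutation coming from the Chinese remainder theorem. Then we identify its generator polynomial by a zero-with-multiplicity count together with a dimension count.

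\textbf{Step 1 (the permutation).} Since $n$ is odd, $\mathbb{Z}_{2n}\cong\mathbb{Z}_2\times\mathbb{Z}_n$. Consequently
\[
R_{2n}:=\mathbb{F}_q[x]/\langle x^{2n}-1\rangle\;\cong\;\mathbb{F}_q[y,z]/\langle y^2-1,\,z^n-1\rangle ,
\]
and this isomorphism is induced by a permutation of the $2n$ coordinates, with $x\mapsto yz$. Here $y=x^n$, and $z$ is a suitable power of $x$ that generates the odd part. Write an element as $a(z)+y\,b(z)$ and send it to the vector $(a\mid b)\in\mathbb{F}_q^{2n}$; this is again a coordinate permutation.

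Put $t=1+y$. Because $q$ is even, $t^2=0$, so $R_{2n}\cong R_n[t]/\langle t^2\rangle$ with $R_n=\mathbb{F}_q[z]/\langle z^n-1\rangle$. Under this correspondence the vector $(\mathbf{u}\mid\mathbf{u}+\mathbf{v})$ becomes
\[
u+y(u+v)=v+t(u+v).
\]
As $(\mathbf{u},\mathbf{v})$ runs over $\mathcal{C}_1\times\mathcal{C}_2$, the pair $(v,\,u+v)$ runs bijectively over $\mathcal{C}_2\times\mathcal{C}_1$, because $\mathcal{C}_2\subseteq\mathcal{C}_1$. Hence the code is $D=\mathcal{C}_2+t\,\mathcal{C}_1$. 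This set is visibly closed under multiplication by $z$ and by $t$, hence by $x=yz=(1+t)z$. So the permuted code is an ideal of $R_{2n}$, that is, a repeated-root cyclic code $\langle G(x)\rangle$ with $G\mid (x^n-1)^2$.

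\textbf{Step 2 (identifying $G$).} Since $x^n-1$ is squarefree and $g_1g_2\mid x^n-1$, we have $\gcd(g_1,g_2)=1$. So the claimed generator is $g_1^2g_2$, of degree $2\deg g_1+\deg g_2$. This matches
\[
\dim D=\dim\mathcal{C}_1+\dim\mathcal{C}_2=2n-2\deg g_1-\deg g_2 .
\]
It therefore suffices to show one containment, say that every $c\in D$ is divisible by $g_1^2g_2$ in $R_{2n}$.

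For this I will use the characterization by zeros with multiplicity. In characteristic $2$, $c(x)\in\langle g_1^2g_2\rangle$ if and only if $c$ vanishes at each root of $g_2$, and $c$ together with its Hasse derivative $c^{[1]}$ vanishes at each root of $g_1$.

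The concrete task is to translate $c=v+t w$, with $v\in\mathcal{C}_2$ and $w\in\mathcal{C}_1$, back to a polynomial in $x$ and evaluate. Modulo $x^n-1$, $c$ reduces to $v+w\cdot 0$ up to the substitution $z\leftrightarrow x^{\,\cdot}$, because $t=1+x^n\equiv 0$. Thus $c(\beta)=v(\beta')$ at the relevant root, which vanishes on the zeros of $g_1g_2$. For the derivative, the factor $t=(x^n-1)$ has a simple zero at every $n$-th root of unity. So the first-order term at a root of $g_1$ picks up $w(\beta')$ times a nonzero constant, plus a contribution from $v$. Both vanish there, since $v,w\in\mathcal{C}_1$.

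\textbf{Main obstacle.} The main obstacle is Step 2's bookkeeping. We must pin down the exact exponent $e$ with $z=x^e$ (one can take $e\equiv 1\pmod n$ and $e\equiv 0\pmod 2$, namely $e=n+1$). We must then check that the permutation of Step 1 really matches the concatenation $(a\mid b)$, and that the substitution $z\mapsto x^{n+1}$ preserves membership in $\mathcal{C}_i$. The last point holds because $x^{n+1}\equiv x \pmod{x^n-1}$. If the derivative computation becomes messy, I will instead verify directly that $g_1(x)^2g_2(x)$ itself lies in $D$. To do so, decompose $g_1^2g_2=g_1(z)^2g_2(z)\cdot(\text{unit})$ in $R_n[t]$, using $x\equiv z(1+t)$. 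Then $\langle g_1^2g_2\rangle\subseteq D$, and the dimension equality finishes the proof.
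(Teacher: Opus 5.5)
The paper gives no proof of this lemma; it is quoted from \cite[Theorem~1]{Chen23}, so your argument has to stand on its own. The skeleton is right. The map $x\mapsto yz$ with $y=x^n$, $z=x^{n+1}$ is a CRT coordinate permutation, and it sends $(\mathbf u\mid\mathbf u+\mathbf v)$ to $v+t(u+v)$ with $t=1+y$, $t^2=0$. The image $D=\mathcal{C}_2+t\,\mathcal{C}_1$ is an ideal of $R_{2n}$. Since $R_n[t]/\langle t^2\rangle=R_n\oplus tR_n$, one inclusion plus a dimension count finishes.

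However, Step~2 assumes something the lemma does not, and both of your routes to the inclusion contain a false step. First, you assert $g_1g_2\mid x^n-1$ and hence $\gcd(g_1,g_2)=1$. The lemma only assumes $g_2\mid x^n+1$, and the factor $\gcd(g_1,g_2)$ in the generator is there precisely because common factors are allowed. The repair is short. Put $\tilde g_2=g_2/\gcd(g_1,g_2)$; then $\langle g_1g_2\rangle=\langle g_1\tilde g_2\rangle$ in $R_n$, $\tilde g_2$ is coprime to $g_1$ because $x^n+1$ is squarefree, and $g_1^2g_2/\gcd(g_1,g_2)=g_1^2\tilde g_2$. All your degree counts must use $\tilde g_2$, for example $\dim\mathcal{C}_2=n-\deg g_1-\deg\tilde g_2$. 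Below, write $g_2$ for $\tilde g_2$.

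Second, in the derivative route you say the contribution of $v$ at a root $\beta$ of $g_1$ vanishes ``since $v\in\mathcal{C}_1$''. That is wrong: membership only gives a simple zero, and $v'(\beta)\neq 0$ in general. The real reason is the characteristic. Take the representative $c(x)=v(x^{n+1})+(1+x^n)w(x^{n+1})$; testing divisibility on it is legitimate because $g_1^2g_2\mid x^{2n}-1$. In characteristic $2$ with $n$ odd, $\frac{d}{dx}x^{n+1}=(n+1)x^n=0$ and $\frac{d}{dx}(1+x^n)=x^{n-1}$. Hence $c'(x)=x^{n-1}w(x^{n+1})$ identically, so $c'(\beta)=\beta^{-1}w(\beta)=0$. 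Together with $c(\gamma)=v(\gamma)=0$ for every root $\gamma$ of $g_1g_2$, this gives $D\subseteq\langle g_1^2g_2\rangle$. This is exactly where ``$q$ even, $n$ odd'' produces the double zeros, so you must state it rather than attribute it to $v\in\mathcal{C}_1$.

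Third, your fallback identity $g_1(x)^2g_2(x)=g_1(z)^2g_2(z)\cdot(\text{unit})$ is false whenever $g_2\neq 1$. It is true that $x^2=z^2$ gives $g_1(x)^2=g_1(z)^2$. But $g_2(x)=g_2((1+t)z)=g_2(z)+t\,zg_2'(z)$. Matching $t$-components against $g_1(z)^2g_2(z)(\epsilon_0+t\epsilon_1)$ and evaluating at a root $\gamma$ of $g_2$ gives $0=\gamma\, g_1(\gamma)^2g_2'(\gamma)$. The right side is nonzero, since $g_2$ is separable and coprime to $g_1$. What is true, and all you need, is
$g_1(x)^2g_2(x)=g_1(z)^2g_2(z)+t\cdot z\,g_1(z)^2g_2'(z)\in\mathcal{C}_2+t\,\mathcal{C}_1=D$.
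This gives $\langle g_1^2g_2\rangle\subseteq D$, and equality follows by dimensions. With these three corrections either route is a complete proof.
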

%
%
%

To construct a cyclic code $\C'$ over $\mathbb{F}_{2^s}$ with good parameters using Lemma \ref{lemma:generalized-van-lint}, one has to choose  the two cyclic codes $\C_1$ and $\C_2$
carefully.

\section{Four constructions of self-dual binary cyclic codes}

Let $\C$ be a binary cyclic code with length $n=2^m-1$ and
 $T(\mathcal{C})$ be the defining set of $\mathcal{C}$ with respect to $\alpha$, where $\alpha$ is a primitive element in $\mathbb{F}_{2^m}$.  Let $T(\C^{\perp})$ denote the defining set of the dual code $\mathcal{C}^{\perp}$ with respect to
 $\alpha$. It is easy to check that
 \begin{equation}\label{eq:0131}
 T(\C^{\perp})=\mathbb{Z}_n \setminus T(\C)^{-1}.
 \end{equation}
 In the following subsections, we will construct several classes of self-dual binary cyclic codes with length $2^{m+1}-2$,  present several classes of cyclic codes
 with good parameters, and develop a lower bound on the minimum distance of the duals of the narrow-sense binary BCH codes with the designed distances in some ranges.
%
%

\subsection{The first construction of self-dual binary cyclic codes}

In this subsection, let $n = 2^m - 1$, where $m$ is an even integer with $m \geq 6$. Let $l$ be a positive integer, $F_i= 2^{ \frac{m}{2}+ f_i} + 2^{\frac{m}{2}} - 1$ and
$A_l=C_{F_1}\cup C_{F_2}\cup \cdots \cup C_{F_l}$, where $1\leq i\leq l$ and $1 \leq f_1\neq f_2\neq\cdots\neq f_l\leq \frac{m}{2}- 2$. Let
$\mathcal{C}$ be the binary cyclic code of length $n$ with the defining set
\begin{equation}\label{0318}
 T(\mathcal{C}) = \left\{0 \leq a \leq n-1 \mid \omega_2(\bar{a}) \leq \tfrac{m}{2} \right\},
\end{equation}
or
\begin{equation}\label{11defining set result 1}
 T(\mathcal{C}) = \left\{0 \leq a \leq n-1 \mid \omega_2(\bar{a}) \leq \tfrac{m}{2} \right\} \cup A_l.
\end{equation}
%
Then we have the following results.

\begin{lemma}\label{m even result 1}
Let the symbols be given as above.
Then $\mathcal{C}$ is a self-orthogonal binary cyclic code with parameters
$$\left[\,2^m-1,\; 2^{m-1} - \frac{1}{2}\binom{m}{\tfrac{m}{2}} - l m - 1,\; \geq 2^{\frac{m}{2}+1}\,\right],$$
where $l=0$ if $T(\mathcal{C})$ is given in (\ref{0318}).
\end{lemma}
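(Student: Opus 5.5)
The plan is to verify self-orthogonality with Lemma~\ref{criterion for self-orthogonal}, count $|T(\mathcal{C})|$ to get the dimension, and get the distance from a run of consecutive elements in $T(\mathcal{C})$ via Lemma~\ref{generalizations of bch bound.}.

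\emph{Self-orthogonality.} For $1\le a\le n-1$ we have $\overline{n-a}=\overline{a}$ with every bit complemented, so $\omega_2(\overline{n-a})=m-\omega_2(\bar a)$. Also $0\in T(\mathcal{C})$ and $n-0\equiv 0$.

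Take any $a\notin T(\mathcal{C})$. Then $\omega_2(\bar a)>m/2$, so $n-a$ has weight $<m/2$. Hence $n-a\in T(\mathcal{C})$, i.e. $a\in T(\mathcal{C})^{-1}$. Thus $T(\mathcal{C})\cup T(\mathcal{C})^{-1}=\mathbb{Z}_n$ already for (\ref{0318}). Enlarging $T(\mathcal{C})$ by $A_l$ preserves this, so $\mathcal{C}$ is self-orthogonal by Lemma~\ref{criterion for self-orthogonal}.

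\emph{Dimension.} The integers in $[0,n-1]$ of weight at most $m/2$ number
\[
\sum_{w=0}^{m/2}\binom{m}{w}=2^{m-1}+\tfrac12\binom{m}{m/2}.
\]
This gives $\dim\mathcal{C}=2^{m-1}-\tfrac12\binom{m}{m/2}-1$ when $l=0$. For (\ref{11defining set result 1}) I must show three things: $A_l$ is disjoint from the weight-$\le m/2$ part, each $C_{F_i}$ has size $m$, and the $C_{F_i}$ are pairwise distinct. Then $|A_l|=lm$ and the stated dimension follows.

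Disjointness is immediate. The expansion of $F_i$ has ones in positions $0,\dots,m/2-1$ and in position $m/2+f_i$, so $\omega_2(\bar F_i)=m/2+1$. Weight is invariant under cyclic shifts, which is multiplication by $2$ mod $n$, so every element of $C_{F_i}$ has weight $m/2+1$.

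\emph{Coset sizes and distinctness (main step).} Because $1\le f_i\le m/2-2$, positions $m/2$ and $m-1$ of $\bar F_i$ are zero. So, viewed cyclically, $\bar F_i$ consists of one maximal run of exactly $m/2$ ones and one isolated $1$. Since $m\ge 6$, we have $1<m/2$, so the long run is unique.

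A cyclic shift by $d$ fixing $\bar F_i$ must map this unique run to itself, which forces $d\equiv 0 \pmod m$. Hence $|C_{F_i}|=m$; this is consistent with Lemma~\ref{size about cyclotomic coset}.

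For distinctness, note that the cyclic gap from the end of the long run to the isolated $1$ is $f_i$. This gap is a shift invariant, so distinct $f_i$ give distinct cosets. This is the only place requiring care, namely the role of the bounds $1\le f_i\le m/2-2$ and $m\ge6$, and I expect it to be the main obstacle, though a mild one.

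\emph{Minimum distance.} Every integer $a$ with $0\le a\le 2^{m/2+1}-2$ has at most $m/2+1$ binary digits. The only integer below $2^{m/2+1}$ of weight $m/2+1$ is $2^{m/2+1}-1$ itself. So $[0,2^{m/2+1}-2]\subseteq T(\mathcal{C})$, which is $2^{m/2+1}-1$ consecutive elements.

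Apply Lemma~\ref{generalizations of bch bound.} with $k=1$, $c_1=1$, $b=0$ and $s_1=2^{m/2+1}-2$. This gives $d(\mathcal{C})\ge 2^{m/2+1}$, and the bound holds for both defining sets since adding $A_l$ only enlarges $T(\mathcal{C})$.
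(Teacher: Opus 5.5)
Your proposal is correct and follows essentially the same route as the paper. Both arguments obtain $T(\mathcal{C})\cup T(\mathcal{C})^{-1}=\mathbb{Z}_n$ from weight complementation, count $2^{m-1}+\frac12\binom{m}{m/2}$ elements of weight at most $\frac{m}{2}$ together with $|A_l|=lm$, and derive the distance from $[0,2^{\frac{m}{2}+1}-2]\subseteq T(\mathcal{C})$ via the BCH bound.

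The one difference is how you get $|C_{F_i}|=m$ and that the cosets $C_{F_i}$ are distinct. You use the unique maximal run of $\frac{m}{2}$ ones and the shift-invariant gap $f_i$. The paper instead exhibits the $\frac{m}{2}+1$ distinct elements $2^{\frac{m}{2}+r}F_i$, applies divisibility by $m$, and compares coset-leader forms. Your invariant argument is, if anything, cleaner.
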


\begin{proof}
We only prove the results for $T(\mathcal{C}) $ being given in (\ref{11defining set result 1}), and the proof of the results can be similarly obtained if
$T(\mathcal{C}) $ is given in (\ref{0318}), hence omitted here.

From the definition of $T(\mathcal{C})$ we immediately obtain
\begin{equation}\label{eq:112}
T(\mathcal{C})^{-1} = \left\{1 \le a \le n \;\middle|\; \omega_2(\bar{a}) \ge \frac{m}{2}\right\} \cup A_l^{-1},
\end{equation}
where $A_l^{-1} = \{n-i \mid i \in A_l\}$.
Since
\[
\mathbb{Z}_n = \left\{0 \le a \le n-1 \mid \omega_2(\bar{a}) \le \tfrac{m}{2}\right\}
          \cup \left\{0 \le a \le n-1 \mid \omega_2(\bar{a}) \ge \tfrac{m}{2}\right\},
\]
it follows that $\mathbb{Z}_n = T(\mathcal{C}) \cup T(\mathcal{C})^{-1}$.
By Lemma~\ref{criterion for self-orthogonal}, $\mathcal{C}$ is self-orthogonal.

We now determine the parameters of $\mathcal{C}$.
It is obvious that
\[
\Bigl|\bigl\{0 \le a \le n-1 \mid \omega_2(\bar{a}) \le \tfrac{m}{2}\bigr\}\Bigr|
   = \sum_{i=0}^{m/2} \binom{m}{i}.
\]
Using $2^m = \sum_{i=0}^{m} \binom{m}{i}$ and the symmetry of binomial coefficients,  we have then
\[
2^m = 2\sum_{i=0}^{m/2}\binom{m}{i} - \binom{m}{\tfrac{m}{2}}.
\]
Hence,
\begin{equation}\label{size of omega(a)}
\Bigl|\bigl\{0 \le a \le n-1 \mid \omega_2(\bar{a}) \le \tfrac{m}{2}\bigr\}\Bigr|
   = 2^{m-1} + \frac{1}{2}\binom{m}{\tfrac{m}{2}}.
\end{equation}

For each $i$ with $1 \le i \le l$, the $2$-adic expansion of $F_i$ is
\begin{equation}\label{eq:sd15}
\overline{F_i} = (\underbrace{1,1,\ldots,1}_{\frac{m}{2}},\underbrace{0,0,\ldots,0}_{f_i},1,\underbrace{0,\ldots,0}_{\frac{m}{2}-f_i-1}),
\end{equation}
where $1 \le f_i \le \frac{m}{2}-2$.
Then the $2$-adic expansion of $\operatorname{CL}(F_i)$ is
\begin{equation}\label{CL(I(t,m)) possible 1}
(\underbrace{1,1,\ldots,1}_{\frac{m}{2}},\underbrace{0,0,\ldots,0}_{f_i},1,\underbrace{0,0\ldots,0}_{\frac{m}{2}-f_i-1})
\end{equation}
if $f_i \le \bigl\lfloor\frac{m-2}{4}\bigr\rfloor$, or
\begin{equation}\label{CL(I(t,m)) possible 2}
(1,\underbrace{0,0,\ldots,0}_{{\frac{m}{2}-f_i-1}},\underbrace{1,1,\ldots,1}_{\frac{m}{2}},\underbrace{0,0,\ldots,0}_{f_i})
\end{equation}
if $f_i > \bigl\lfloor\frac{m-2}{4}\bigr\rfloor$.

If $F_i$ and $F_j$ were in the same cyclotomic coset, their coset leaders both could not be of type \eqref{CL(I(t,m)) possible 1} or type \eqref{CL(I(t,m)) possible 2}
since $1 \leq f_i\neq f_j\leq \frac{m}{2}- 2$.
If one leader were of type \eqref{CL(I(t,m)) possible 1} and the other were of type \eqref{CL(I(t,m)) possible 2}, then comparing the two forms would force $\frac{m}{2}=1$, which is impossible
for $m\ge6$.
Thus $F_i$ and $F_j$ lie in distinct cyclotomic cosets whenever $f_i \neq f_j$ with $1\le i\neq j\le l$.

For each $F_i$, we have $0 < -2^r(2^{f_i}-2^{\frac{m}{2}}+1) < n$ and
\[
2^{ \frac{m}{2} + r} \cdot F_i\equiv2^m-1+2^r(2^{f_i}-2^{\frac{m}{2}}+1) \pmod{n},
\]
where $r=0,1,\dots,\frac{m}{2}$. Clearly, for distinct $r_1,r_2 \in [0, \frac{m}{2}]$, $$2^m-1+2^{r_1}(2^{f_i}-2^{\frac{m}{2}}+1)\neq 2^m-1+2^{r_2}(2^{f_i}-2^{\frac{m}{2}}+1),$$ so the size of the cyclotomic coset $C_{F_i}$ satisfies $|C_{F_i}| \ge \frac{m}{2}+1$.
By Lemma~\ref{size about cyclotomic coset}, $|C_{F_i}|$ must divide $m$, hence $|C_{F_i}| = m$.
Moreover, from (\ref{eq:sd15}) every element $x \in C_{F_i}$ has $\omega_2(\bar{x}) = \frac{m}{2}+1$, so
\[
C_{F_i} \cap \left\{0 \le a \le n-1 \mid \omega_2(\bar{a}) \le \tfrac{m}{2}\right\} = \emptyset.
\]

Combining this with (\ref{size of omega(a)}) yields
\begin{equation}\label{m even 1 dim:df}
\begin{aligned}
\dim(\mathcal{C}) &= n - |T(\mathcal{C})| \\
&= n - \Bigl|\bigl\{0 \le a \le n-1 \mid \omega_2(\bar{a}) \le \tfrac{m}{2}\bigr\}\Bigr|
      - \sum_{i=1}^{l} |C_{F_i}| \\
&= 2^{m-1} - \frac{1}{2}\binom{m}{m/2} - l m - 1.
\end{aligned}
\end{equation}

Now consider any integer $u$ with $u \in [0, 2^{\frac{m}{2}+1}-2]$,
its $2$-adic expansion modulo $n$ is
\begin{equation}\label{eq:1130}
\overline{u} = (u_0,u_1,\ldots,u_{\frac{m}{2}},\underbrace{0,0,\ldots,0}_{\frac{m}{2}-1}),
\end{equation}
where $u_i \in \{0,1\}$ for $i=0,1,\dots,\frac{m}{2}$.
Because $2^{\frac{m}{2}+1}-1$ expands as
\[
(1,1,\ldots,1,\underbrace{0,0,\ldots,0}_{\frac{m}{2}-1}),
\]
at least one of the $u_i$ in (\ref{eq:1130}) is $0$. Hence $\omega_2(\bar{u}) \le \frac{m}{2}$, and consequently
\[
[0,\,2^{\frac{m}{2}+1}-2] \subseteq \left\{0 \le a \le n-1 \mid \omega_2(\bar{a}) \le \tfrac{m}{2}\right\}.
\]

Applying the BCH bound we obtain
\begin{equation}\label{eq:sdas18}
d(\mathcal{C}) \ge 2^{\frac{m}{2}+1}.
\end{equation}
From (\ref{m even 1 dim:df}) and (\ref{eq:sdas18}) the parameters of $\mathcal{C}$ are determined.
\end{proof}

\begin{lemma}\label{m even dual result 1}
Let the symbols be given as in Lemma \ref{m even result 1}.
Let $\mathcal{C}^{\perp}$ be the dual of $\mathcal{C}$.
Then $\mathcal{C}^{\perp}$ has parameters $\left[2^m-1,\; 2^{m-1} + \frac{1}{2}\binom{m}{m/2} + l m,  d(\C^\perp) \right]$, where
\begin{equation*}
\begin{split}
d(\mathcal{C}^{\perp})\geq \begin{cases}
2^{\frac{m}{2}}-1, &\text{if $l=0$},\\
2^{\frac{m}{2}} - 2^t - 1, &\text{if $l>0$,}
\end{cases}
\end{split}
\end{equation*}
and $t = \max\{f_1,f_2,\dots,f_l\}$.
\end{lemma}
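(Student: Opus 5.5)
We first read off the dimension from $T(\mathcal{C})$, then find $T(\mathcal{C}^\perp)$ through \eqref{eq:0131} and locate a long run of consecutive integers in it so that the BCH bound applies.

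\emph{Dimension.} Since $\dim(\mathcal{C}^\perp)=n-\dim(\mathcal{C})=|T(\mathcal{C})|$, the count in the proof of Lemma~\ref{m even result 1} gives the dimension directly. By \eqref{size of omega(a)}, together with $|C_{F_i}|=m$ and the pairwise disjointness of the cosets $C_{F_i}$ from the low-weight set, we get $|T(\mathcal{C})|=2^{m-1}+\frac12\binom{m}{m/2}+lm$.

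\emph{Defining set of the dual.} By \eqref{eq:0131} and \eqref{eq:112}, $T(\mathcal{C}^\perp)=\mathbb{Z}_n\setminus T(\mathcal{C})^{-1}$. Here $T(\mathcal{C})^{-1}$ contains $0$ (because $n\equiv 0$ and $\omega_2(\bar n)=m$) and every $a$ with $\omega_2(\bar a)\ge m/2$. Hence
\[
T(\mathcal{C}^\perp)=\bigl\{1\le a\le n-1 \mid \omega_2(\bar a)\le \tfrac m2-1\bigr\}\setminus A_l^{-1}.
\]
For $n-F_i$ the $2$-adic expansion is the bitwise complement of \eqref{eq:sd15}, namely
\[
(\underbrace{0,\ldots,0}_{m/2},\underbrace{1,\ldots,1}_{f_i},0,\underbrace{1,\ldots,1}_{m/2-f_i-1}),
\]
which has weight $m/2-1$. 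So $A_l^{-1}$ does meet the low-weight set, and we must control exactly where it does.

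\emph{Case $l=0$.} Every $u\in[1,2^{m/2}-2]$ is supported on the lowest $m/2$ bits and is not $2^{m/2}-1$. Hence $\omega_2(\bar u)\le m/2-1$, so $[1,2^{m/2}-2]\subseteq T(\mathcal{C}^\perp)$, and the BCH bound gives $d(\mathcal{C}^\perp)\ge 2^{m/2}-1$.

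\emph{Case $l>0$.} This is the main step: determine which elements of $A_l^{-1}=\bigcup_i C_{n-F_i}$ fall into $[1,2^{m/2}-2]$.
\begin{itemize}
\item An element of $C_{n-F_i}$ is a cyclic rotation of the pattern above. It lies below $2^{m/2}$ only if its block of $m/2$ consecutive zeros occupies positions $m/2,\ldots,m-1$.
\item Because the $m/2$ remaining positions contain exactly one zero, the zero block is a maximal run of length exactly $m/2$ (this uses $1\le f_i\le m/2-2$). So exactly one rotation qualifies, obtained by shifting down by $m/2$.
\item That rotation is $2^{m/2}-1-2^{f_i}$.
\end{itemize}
Thus $A_l^{-1}\cap[1,2^{m/2}-2]=\{2^{m/2}-1-2^{f_i}: 1\le i\le l\}$, and the smallest of these is $2^{m/2}-1-2^t$ with $t=\max_i f_i$. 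Consequently
\[
[1,\,2^{m/2}-2^t-2]\subseteq T(\mathcal{C}^\perp).
\]
This is a run of $2^{m/2}-2^t-2$ consecutive elements, so the BCH bound gives $d(\mathcal{C}^\perp)\ge 2^{m/2}-2^t-1$.

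The delicate point is the uniqueness of the qualifying rotation in the last case. It is what guarantees that no element of $A_l^{-1}$ other than $2^{m/2}-1-2^{f_i}$ lands inside the run, and it must be checked carefully against the bit pattern.
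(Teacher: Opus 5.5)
Your proposal is correct and follows the paper's proof. Both arguments use the same defining set $T(\mathcal{C}^\perp)=\{1\le a\le n-1\mid \omega_2(\bar a)\le \frac m2-1\}\setminus A_l^{-1}$, the same run of consecutive integers $[1,2^{m/2}-2^t-2]$, and the BCH bound. Your check that exactly one rotation of $\overline{n-F_i}$ lies below $2^{m/2}$, which uses $1\le f_i\le \frac m2-2$, justifies the value $\operatorname{CL}(n-F_i)=2^{m/2}-2^{f_i}-1$; the paper only states this value.
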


\begin{proof} We proceed in a manner analogous to Lemma~\ref{m even result 1}.  We prove the conclusions  for the defining set $T(\mathcal{C})$ given by (\ref{11defining set
result 1}), the other case can be handled similarly and is omitted.

From Lemma~\ref{m even result 1},  the dimension of $\mathcal{C}^{\perp}$ is
\[
\dim(\mathcal{C}^{\perp}) = n - \dim(\mathcal{C}) = 2^{m-1} + \frac{1}{2}\binom{m}{m/2} + l m.
\]

In the following, our goal is to establish the lower bound
\[
d(\mathcal{C}^{\perp}) \ge 2^{\frac{m}{2}} - 2^{t} - 1.
\]
Using (\ref{eq:0131}) and (\ref{eq:112}),  the defining set of $\mathcal{C}^{\perp}$ is
\begin{equation}\label{eq:0205}
T(\mathcal{C}^{\perp}) = \bigl\{ 1 \leq a \leq n-1 \;\big|\; \omega_2(\bar{a}) \le \tfrac{m}{2}-1 \bigr\} \;\setminus\; A_{l}^{-1}.
\end{equation}
Consider any integer $v$ with $v \in [1, 2^{\frac{m}{2}}-2]$.
The $2$-adic expansion of $2^{\frac{m}{2}}-1$ modulo $n$ is
\[
(\underbrace{1,1,\ldots,1}_{\frac{m}{2}},\underbrace{0,0,\ldots,0}_{\frac{m}{2}}),
\]
then every $v$ in the range $[1,2^{\frac{m}{2}}-2]$ satisfies $\omega_2(\bar{v}) \le \tfrac{m}{2}-1$.  Therefore,
\begin{equation}\label{BCH bound with A is empty}
[1,\,2^{\frac{m}{2}}-2] \;\subseteq\; \bigl\{ 1\leq a \leq n-1 \mid \omega_2(\bar{a}) \le \tfrac{m}{2}-1 \bigr\}.
\end{equation}

From (\ref{eq:sd15}) we know that the $2$-adic expansion of $n-F_i$ modulo $n$ is
\[
\overline{n-F_{i}} = \bigl(\underbrace{0,0,\dots,0}_{\frac{m}{2}},\underbrace{1,1,\dots,1}_{f_i},0,\underbrace{1,1\dots,1}_{\frac{m}{2}-f_i-1}\bigr),
\]
where $1\le f_i\le \tfrac{m}{2}-2$. Consequently,
\[
\operatorname{CL}(n-F_{i}) = 2^{m/2} - 2^{f_i} - 1
= \bigl(\underbrace{1,1,\dots,1}_{f_i},0,\underbrace{1,1\dots,1}_{\frac{m}{2}-f_i-1},\underbrace{0,0,\dots,0}_{\frac{m}{2}}\bigr).
\]

Let $t = \max\{f_1,f_2,\dots,f_{l}\}$.  By the definition of $A_{l}^{-1}$ we have
\[
2^{m/2} - 2^{t} - 2 \notin A_{l}^{-1},\,\,\,\, \text{and} \,\,\,\,
2^{m/2} - 2^{t} - 1 \in A_{l}^{-1}.
\]

Combining the above facts with (\ref{eq:0205}) yields
\begin{equation}\label{BCH bound dual with result 1}
[1,\;2^{m/2} - 2^{t} - 2] \;\subseteq\; T(\mathcal{C}^{\perp}).
\end{equation}
Since $T(\mathcal{C}^{\perp})$ contains a consecutive set of $2^{m/2} - 2^{t} - 2$ nonzero integers, the BCH bound implies
\[
d(\mathcal{C}^{\perp}) \ge 2^{m/2} - 2^{t} - 1,
\]
which completes the proof.
\end{proof}

\begin{remark}\label{remark5}
If $l=0$, Lemmas \ref{m even result 1} and  \ref{m even dual result 1} hold when $m=4$. In this case, the code in Lemma \ref{m even result 1} has parameter $[15,4, \geq8]$ and the
code in Lemma \ref{m even dual result 1} has parameter $[15,11, \geq3]$. Both codes are optimal when equality holds, according to \cite{Ding2015,Grassl}.
\end{remark}

\begin{remark}\label{remark1}
Let $m\geq 6$, compared to the codes in Lemma \ref{lem:02061} and Lemma \ref{lem:0206}, the codes constructed in Lemma \ref{m even dual result 1} achieve a higher dimensions and a
 larger lower bound on the minimum distance under the same code length.
\end{remark}

\begin{example}
Let $m=6$ and $l=0$. The code in Lemma \ref{m even dual result 1} has parameter $[63,42, \geq7]$, the codes in Lemma \ref{lem:02061} and Lemma \ref{lem:0206} have parameters $[63,33,\geq5]$,
$[63,31,\geq3]$, $[63,30,\geq4]$, $[63,30,\geq6]$. In addition, the code $[64,42, 7]$ is optimal
according to \cite{Ding2015}.
\end{example}

\begin{example}
Let $m=8$ and $l=t=1$. The code in Lemma \ref{m even dual result 1} has parameter $[255,171, \geq13]$, the codes in Lemma \ref{lem:02061} and Lemma \ref{lem:0206} have parameters
$[255,127,\geq 9]$ and $[255,126,\geq10 ]$.
\end{example}

\begin{theorem}\label{result of section A}
Let the symbols be given as in Lemma \ref{m even result 1}. Let $h^{\perp}(x)$ denote the reciprocal polynomial of $(x^n+1)/g(x)$, where $g(x)$ is the generator polynomial of
$\mathcal{C}$.
Let $\mathcal{C}'$ denote the binary cyclic code of length $2n$ with generator polynomial $g(x)h^{\perp}(x)$. Then $\mathcal{C}'$ is a self-dual binary cyclic code with
parameters $[2n,n,\geq 2^{\frac{m}{2}+1}-2^{t+1}-2]$ or $[2n,n,\geq 2^{\frac{m}{2}+1}-2]$, where $t = \max\{f_1,f_2,\dots,f_l\}$. If $m=8$ and $t=1$, or $m\geq 10$, these lower bounds are larger than the square-root lower
bound.
\end{theorem}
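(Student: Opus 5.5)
The plan is to realize $\mathcal{C}'$ as a Plotkin $(\mathbf{u}\mid\mathbf{u}+\mathbf{v})$ construction and then combine Lemma~\ref{criterion for plotkin-self-dual-2} with Lemma~\ref{lemma:generalized-van-lint}. By Lemma~\ref{m even result 1}, $\mathcal{C}$ is self-orthogonal. Its generator polynomial is $g(x)$ and its check polynomial is $h(x)=(x^n+1)/g(x)$.

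The dual $\mathcal{C}^\perp$ is generated by the reciprocal $h^{\perp}(x)$ of $h(x)$. Its defining set is $\mathbb{Z}_n\setminus T(\mathcal{C})^{-1}$, as in (\ref{eq:0131}). Self-orthogonality gives $\mathcal{C}\subseteq\mathcal{C}^\perp$, so $h^\perp(x)\mid g(x)$. We can therefore write $g(x)=h^\perp(x)g_2(x)$ with $g_2(x)\mid x^n+1$; explicitly, $g_2=g/h^\perp$ has defining set $T(\mathcal{C})\cap T(\mathcal{C})^{-1}$.

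Now apply Lemma~\ref{lemma:generalized-van-lint} with $g_1=h^\perp$, $\mathcal{C}_1=\mathcal{C}^\perp$ and $\mathcal{C}_2=\langle g_1g_2\rangle=\mathcal{C}$. It says the code $\mathcal{D}_2=\{(\mathbf{u}\mid\mathbf{u}+\mathbf{v}):\mathbf{u}\in\mathcal{C}^\perp,\mathbf{v}\in\mathcal{C}\}$ is permutation-equivalent to the cyclic code of length $2n$ generated by
\[
g_1^2g_2/\gcd(g_1,g_2)=(h^\perp)^2 g_2/\gcd(h^\perp,g_2).
\]
The key algebraic step is to show this equals $g(x)h^\perp(x)=(h^\perp)^2g_2$, i.e.\ that $\gcd(h^\perp,g_2)=1$. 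Since $x^n+1$ is squarefree ($n$ odd), this reduces to disjointness of defining sets. The roots of $h^\perp$ are indexed by $\mathbb{Z}_n\setminus T(\mathcal{C})^{-1}$, while those of $g_2$ lie in $T(\mathcal{C})^{-1}$, so the two sets are indeed disjoint.

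Hence $\mathcal{C}'$ is permutation-equivalent to $\mathcal{D}_2$. Self-duality is preserved under coordinate permutation, and Lemma~\ref{criterion for plotkin-self-dual-2} shows $\mathcal{D}_2$ is self-dual. So $\mathcal{C}'$ is a self-dual cyclic $[2n,n]$ code with
\[
d(\mathcal{C}')=\min\{d(\mathcal{C}),\,2d(\mathcal{C}^\perp)\}.
\]
Lemma~\ref{m even result 1} gives $d(\mathcal{C})\ge 2^{m/2+1}$, and Lemma~\ref{m even dual result 1} gives $2d(\mathcal{C}^\perp)\ge 2^{m/2+1}-2^{t+1}-2$ for $l>0$ and $\ge 2^{m/2+1}-2$ for $l=0$. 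Both of these are below $2^{m/2+1}$, so they are the minima, which yields the two stated bounds.

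Finally, compare with $\sqrt{2n}=\sqrt{2^{m+1}-2}<2^{(m+1)/2}$. For $l=0$, the inequality $2^{m/2+1}-2>2^{(m+1)/2}$ holds for every $m\ge 6$, since $2^{m/2}(2-\sqrt2)>2$. For $l>0$ we have $t\le m/2-2$, so the worst case is $2^{m/2+1}-2^{m/2-1}-2=\tfrac32 2^{m/2}-2$. This exceeds $\sqrt2\,2^{m/2}$ exactly when $2^{m/2}(1.5-\sqrt2)>2$, i.e.\ $2^{m/2}>23.3$, which holds for $m\ge10$. For $m=8$, $t=1$ we check directly: $32-4-2=26>\sqrt{510}\approx22.6$. (For $m=8$, $t=2$ the bound is $22<22.6$, which is why that case is excluded.)

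The only delicate point is the polynomial identification: the orientation of the reciprocal (roots of $h^\perp$ indexed by $\mathbb{Z}_n\setminus T(\mathcal{C})^{-1}$ rather than $T(\mathcal{C})^{-1}$) and the coprimality check. Everything else is a direct citation of earlier lemmas.
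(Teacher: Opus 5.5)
Your proposal is correct and follows the same route as the paper: you identify $\mathcal{C}'$, via Lemma~\ref{lemma:generalized-van-lint}, with the Plotkin code $\{(\mathbf{u}\mid\mathbf{u}+\mathbf{v}):\mathbf{u}\in\mathcal{C}^\perp,\ \mathbf{v}\in\mathcal{C}\}$, and then read off self-duality and the bound $\min\{d(\mathcal{C}),2d(\mathcal{C}^\perp)\}$ from Lemma~\ref{criterion for plotkin-self-dual-2} together with Lemmas~\ref{m even result 1} and~\ref{m even dual result 1}. You also spell out two points the paper leaves as ``easily seen'': the coprimality $\gcd(h^\perp,g_2)=1$, which is in fact automatic because $g$ is squarefree, and the explicit comparison with $\sqrt{2n}$, including that $l=0$ already works for $m\ge 6$ and that $m=8$, $t=2$ genuinely fails.
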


\begin{proof}
By definition, $h^{\perp}(x)$ is the generator polynomial of $\mathcal{C}^\perp$. Let
$$\mathbb{C}=\{ (\mathbf{u} \mid \mathbf{u} + \mathbf{v}) : \mathbf{u} \in \mathcal{C}^\perp,\ \mathbf{v} \in \mathcal{C} \}.$$
By Lemma \ref{lemma:generalized-van-lint}, $\mathbb{C}$ is permutation-equivalent to $\mathcal{C}'$. The desired conclusions on the parameters of the code $\C'$ then follow from Lemma \ref{criterion for
plotkin-self-dual-2}, Lemma \ref{m even result 1} and Lemma \ref{m even dual result 1}.
It is easily seen that the bound $d(\C') \geq 2^{\frac{m}{2}+1}-2^{t+1}-2$ is larger than the square-root bound.
\end{proof}

\begin{example}\label{example1}
Let $m=8$. Then the self-dual cyclic code in Theorem \ref{result of section A} has parameter $[510,255, \geq30]$, and the minimum distance of the square-root lower bound is 23.
\end{example}

\begin{example}\label{example2}
Let $m=10$. Then the self-dual cyclic code in Theorem \ref{result of section A} has parameters $[2046,1023, \geq62]$, and the minimum distance of the square-root lower bound is 46.
\end{example}

\subsection{The second construction of self-dual binary cyclic codes}

In this subsection, let $n = 2^m - 1$, where $m$ is an even integer with $m \geq 6$. Let $l$ be a positive integer, $G_i= 2^{\frac{m}{2} -1} + 2^{ \frac{m}{2} +g_i-1}-1$ and
$B_l=C_{G_1}\cup C_{G_2}\cup \cdots \cup C_{G_l}$, where $1\leq i\leq l$ and $1 \leq g_1\neq g_2\neq\cdots\neq g_l\leq \frac{m}{2}- 2$. Then we have the following results.

\begin{lemma}\label{m even result 2}
Let the symbols be given as above and $\mathcal{C}$ be the binary cyclic code of length $n$ with defining set
\begin{equation}\label{defining set result 1}
 T(\mathcal{C}) = \left\{0 \leq a \leq n-1 \,\big|\, \omega_2(\bar{a}) \leq \tfrac{m}{2} \right\} \backslash B_l.
\end{equation}
Then $\mathcal{C}$ is a self-orthogonal binary cyclic code with parameters
$$\left[\,2^m-1,\; 2^{m-1} - \frac{1}{2}\binom{m}{m/2}+l m - 1,\; \geq 2^{\frac{m}{2}}+2^{\frac{m}{2}-1}\,\right].$$
\end{lemma}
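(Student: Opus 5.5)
The plan is to follow the proof of Lemma~\ref{m even result 1} and check three things: self-orthogonality via Lemma~\ref{criterion for self-orthogonal}, the dimension by counting $|T(\mathcal{C})|$, and the distance via the BCH bound. Write $W=\{0\le a\le n-1 \mid \omega_2(\bar a)\le \frac m2\}$.

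\emph{Step 1 (structure of $G_i$ and $C_{G_i}$).} The $2$-adic expansion of $G_i$ is
\[
\overline{G_i}=(\underbrace{1,\ldots,1}_{\frac m2-1},\underbrace{0,\ldots,0}_{g_i},1,\underbrace{0,\ldots,0}_{\frac m2-g_i}).
\]
Hence $\omega_2(\overline{G_i})=\frac m2$. Since multiplication by $2$ is a cyclic shift, every element of $C_{G_i}$ has weight $\frac m2$, so $B_l\subseteq W$.

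Cyclically, $\overline{G_i}$ has one-runs of lengths $\{\frac m2-1,1\}$ and zero-runs of lengths $\{g_i,\frac m2-g_i\}$. Distinct $g_i,g_j\in[1,\frac m2-2]$ give different run patterns, except possibly when $g_j=\frac m2-g_i$. In that case I would compare the cyclic order of the runs, as the paper did for $F_i$, to show the cosets still differ. Then, exactly as for $F_i$, I exhibit $\frac m2+1$ distinct elements of $C_{G_i}$ and invoke Lemma~\ref{size about cyclotomic coset} to get $|C_{G_i}|=m$. Together these give $|B_l|=lm$, and by \eqref{size of omega(a)},
\[
\dim\mathcal{C}=n-\Bigl(2^{m-1}+\tfrac12\tbinom{m}{m/2}-lm\Bigr)=2^{m-1}-\tfrac12\tbinom{m}{m/2}+lm-1 .
\]

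\emph{Step 2 (self-orthogonality; the main obstacle).} We have $W\cup W^{-1}=\mathbb{Z}_n$, because $\omega_2(\overline{n-a})=m-\omega_2(\bar a)$. Also $B_l^{-1}\subseteq W$, since $n-x$ has weight $\frac m2$ for $x\in B_l$. The only elements at risk of being missed by $T(\mathcal{C})\cup T(\mathcal{C})^{-1}$ are therefore those of $B_l$. For $x\in B_l$ we have $x\in T(\mathcal{C})^{-1}$ iff $n-x\in T(\mathcal{C})$ iff $n-x\notin B_l$. So everything reduces to showing $B_l\cap B_l^{-1}=\emptyset$.

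The complement of $\overline{G_i}$ reads cyclically as runs $0^{\frac m2-1},1^{g_i},0,1^{\frac m2-g_i}$. Its zero-runs are $\{\frac m2-1,1\}$, whereas those of $G_j$ are $\{g_j,\frac m2-g_j\}$ with $\frac m2-g_j\ge 2$. These can match only if $g_i=g_j=1$. In that case I compare the cyclic sequences of runs. For $G_1$ it is $1^{\frac m2-1},0^1,1^1,0^{\frac m2-1}$; for $n-G_1$ it is $1^{\frac m2-1},0^{\frac m2-1},1^1,0^1$. These are mirror images, not rotations, because $\frac m2-1\ge2$ when $m\ge6$. Hence $n-G_i\notin C_{G_j}$ for all $i,j$. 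This case analysis is the delicate point.

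\emph{Step 3 (distance).} From Lemma~\ref{m even result 1} we already know $[0,2^{\frac m2+1}-2]\subseteq W$. I will determine which elements of $B_l$ lie in this range. Such an element has its top $\frac m2-1$ bits equal to $0$, which requires a cyclic zero-run of length at least $\frac m2-1$ in $\overline{G_i}$. That happens only when $g_i=1$, and then the element consists of $\frac m2+1$ low bits containing exactly one interior zero, in the pattern $1^{\frac m2-1}01$ or its rotation $1\,0\,1^{\frac m2-1}$. The corresponding values are
\[
2^{\frac m2+1}-3 \quad\text{and}\quad 2^{\frac m2}+2^{\frac m2-1}-1 ,
\]
and the smaller is $2^{\frac m2}+2^{\frac m2-1}-1$. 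Therefore $[0,2^{\frac m2}+2^{\frac m2-1}-2]\subseteq T(\mathcal{C})$, and the BCH bound gives $d(\mathcal{C})\ge 2^{\frac m2}+2^{\frac m2-1}$.
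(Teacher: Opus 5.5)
Your proposal is correct, and its skeleton is the paper's: the criterion of Lemma~\ref{criterion for self-orthogonal}, the count $|T(\mathcal{C})|=|W|-lm$ via \eqref{size of omega(a)}, and the BCH bound on an initial segment of $T(\mathcal{C})$.

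The substantive difference is your Step~2. The paper only notes that $B_l\subseteq W$ and then asserts self-orthogonality ``with the same analysis as in Lemma~\ref{m even result 1}''. That analysis works there because cosets are added to $W$. Here they are removed, and $T(\mathcal{C})\cup T(\mathcal{C})^{-1}=\mathbb{Z}_n$ holds exactly when $B_l\cap B_l^{-1}=\emptyset$. Your reduction, together with the run argument, supplies precisely the step the paper omits: zero-runs force $g_j=1$, one-runs force $g_i=1$, and $\overline{n-G_1}$ is the reflection rather than a rotation of $\overline{G_1}$ since $\frac m2-1\ge 2$.

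For the distance, the paper bounds coset leaders, $\operatorname{CL}(G_i)\ge 2^{\frac m2}+2^{\frac m2-1}-1$, while you locate the rotations of $\overline{G_i}$ lying below $2^{\frac m2+1}-1$. These come to the same thing. Your direct count of $2^{\frac m2}+2^{\frac m2-1}-1$ consecutive elements of $T(\mathcal{C})$ yields $d(\mathcal{C})\ge 2^{\frac m2}+2^{\frac m2-1}$ without the paper's undercount-then-parity detour.

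Two small corrections.

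In Step~3, the pattern $1\,0\,1^{\frac m2-1}$ (value $2^{\frac m2+1}-3$) is a reflection of the low block of $\overline{G_1}$, not a rotation of $\overline{G_1}$. It lies in $C_{n-G_1}\subseteq B_l^{-1}$, by the same mirror-image phenomenon as in Step~2. The only element of $B_l$ below $2^{\frac m2+1}-1$ is $G_1=2^{\frac m2}+2^{\frac m2-1}-1$ itself, and only when some $g_i=1$. Since you only use the minimum of your candidate set, the conclusion stands.

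In Step~1, do not transplant the $F_i$ computation. The paper's congruence \eqref{eq:ssd} is carried over from the $F_i$ case and is false for $G_i$: for $m=6$, $G_1=11$, one gets $2^3\cdot 11\equiv 25\not\equiv 58 \pmod{63}$. Your run description does the job instead. A rotation fixing $\overline{G_i}$ must fix its unique one-run of length $\frac m2-1\ne 1$, so it is trivial and $|C_{G_i}|=m$. Aligning at that same run also settles the case $g_j=\frac m2-g_i$ in the distinctness check, since the next zero-run then has length $g_i$ in one word and $\frac m2-g_i\neq g_i$ in the other.
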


\begin{proof}
It is clear that the 2-adic expansion of $G_i$ is
\begin{equation}\label{definte of Wi}
\overline{G_i}=(\underbrace{1,1,\ldots,1}_{\frac{m}{2}-1},\underbrace{0,0,\ldots,0}_{g_i},1,\underbrace{0,\ldots,0}_{\frac{m}{2}-g_i}).
\end{equation}
Then we obtain $\omega_2(\bar{a})=\frac{m}{2}$ for $a \in C_{G_i}$, which implies that $ B_l\subseteq \left\{0 \leq a \leq n-1 \,\big|\, \omega_2(\bar{a}) \leq \frac{m}{2} \right\} $.
Hence, with the same analysis as in Lemma \ref{m even result 1}, we obtain that  $\mathcal{C}$ is a self-orthogonal code.

We now show the parameters of $\mathcal{C}$.
By definition, it is easy to check that the 2-adic expansion of ${\rm CL}(G_i)$ is
\begin{equation}\label{CL(W(s,m)) possible 1}
\overline{{\rm CL}(G_i)}=(\underbrace{1,1,\ldots,1}_{\frac{m}{2}-1},\underbrace{0,0,\ldots,0}_{g_i},1,\underbrace{0,0,\ldots,0}_{ \frac{m}{2} - g_i})
\end{equation}
if $ g_i\leq \left\lfloor \frac{m}{4} \right\rfloor$, and
\begin{equation}\label{CL(W(s,m)) possible 2}
\overline{{\rm CL}(G_i)}=(1,\underbrace{0,0,\ldots,0}_{\frac{m}{2} - g_i},\underbrace{1,1,\ldots,1}_{\frac{m}{2}-1},\underbrace{0,0,\ldots,0}_{g_i})
\end{equation}
if $g_i >  \left\lfloor\frac{m}{4}\right\rfloor$.

If $G_i$ and $G_j$ were in the same cyclotomic coset, their coset leaders both could not be of type \eqref{CL(W(s,m)) possible 1} or type \eqref{CL(W(s,m)) possible 2}
since $1 \leq g_i\neq g_j\leq \frac{m}{2}- 2$.
If one leader were of type \eqref{CL(W(s,m)) possible 1} and the other were of type \eqref{CL(W(s,m)) possible 2}, then comparing the two forms would force $\frac{m}{2}=2$, which is impossible
for $m\ge6$.
Thus $G_i$ and $G_j$ lie in distinct cyclotomic cosets whenever $g_i \neq g_j$ with $1\le i\neq j\le l$.

From $1 \leq g_i \leq  \frac{m}{2}-1$, we have
$0<-2^r(2^{g_i}-2^{\frac{m}{2}}+1)<n,$
where $r=0,1,\ldots, \frac{m}{2}$.
Then
\begin{equation}\label{eq:ssd}
2^{ \frac{m}{2} + r} \cdot G_i\equiv2^m-1+2^r(2^{g_i}-2^{\frac{m}{2}}+1) \pmod{n},
\end{equation}
which means that  $|C_{G_i}| \geq  \frac{m}{2} +1$ for $1\leq i\leq l$. By Lemma~\ref{size about cyclotomic coset}, $|C_{G_i}|$ must be a divisor of $m$, hence $|C_{G_i}|=m$ for $1
\leq g_i \leq \frac{m}{2}- 1$.  Therefore, from (\ref{size of omega(a)}) we obtain
\begin{equation}\label{dim:df}
\begin{aligned}
\dim(\mathcal{C}) &= n - |T(\mathcal{C})| \\
&= n - \Bigl|\bigl\{0\leq a \leq n-1 \;\big|\; \omega_2(\bar{a}) \leq \tfrac{m}{2} \bigr\}\Bigr| + \sum_{i=1}^l |C_{G_i}| \\
&= 2^{m-1} - \frac{1}{2}\binom{m}{m/2} + l m - 1.
\end{aligned}
\end{equation}

From (\ref{CL(W(s,m)) possible 1}) and (\ref{CL(W(s,m)) possible 2}) we know that
\begin{equation}\label{CL(Wi)}
\overline{{\rm CL}(G_i)}\geq 2^{\frac{m}{2}}+2^{\frac{m}{2}-1}-1=(\underbrace{1,1,\ldots,1}_{\frac{m}{2}-1},0,1,\underbrace{0,0,\ldots,0}_{ \frac{m}{2}-1})
\end{equation}
for $1\leq i\leq l$.

Define $$V_0 = [0, 2^\frac{m}{2}-2]\; \text{and}\; V_1=[2^\frac{m}{2}, 2^\frac{m}{2}+2^{\frac{m}{2}-1}-2].$$
Let $ a \in V_0 $. It is easy to check that the $2$-adic expansion of $a$ must have the form
$$\overline{a}=(a_0,a_1,\ldots,a_{\frac{m}{2}-1},\underbrace{0,0,\ldots,0}_{ \frac{m}{2}-1}),$$
where $a_i \in \{0,1\}$ and at least one of $a_i=0$ for $i=0,1,\ldots,\frac{m}{2}-1$. Hence, $\omega_2(\bar{a})=\frac{m}{2}-1$ and $a<\overline{{\rm CL}(G_i)}$ for $1\leq i\leq l$,
which implies that
\begin{equation}\label{BCH bound dual with result 1}
V_0 \subseteq T(\mathcal{C}).
\end{equation}
Similarly, we can prove that $V_1 \subseteq T(\mathcal{C})$.
It is easily seen that $2^{\frac{m}{2}}-1 \in T(\C)$.
By the BCH bound, we have
\begin{equation}\label{BCH bound with resul 2 p}
d(\mathcal{C}) \geq  2^{\frac{m}{2}}+2^{\frac{m}{2}-1}-1.
\end{equation}

Because $0 \in T(\mathcal{C})$, the polynomial $x-1$ divides the generator polynomial of $\mathcal{C}$. Then $1+x+x^2+\cdots+x^{2^m-2} \in \mathcal{C}$.
This implies that every codeword of $\mathcal{C}$ has even Hamming weight.
Therefore, the bound in \ref{BCH bound with resul 2 p} can be strengthened to
\[
d(\mathcal{C}^{\perp}) \ge 2^{\frac{m}{2}} + 2^{\frac{m}{2}-1}.
\]
This completes the proof of this lemma.
\end{proof}

\begin{lemma}\label{m even dual result 2}
Let the symbols be given as in Lemma \ref{m even result 2}. Let ${\mathcal{C}}^{\perp}$ be the dual of $\mathcal{C}$, then $\mathcal{C}^{\perp}$ has parameters
$$\left[2^m-1,\; 2^{m-1}+\frac{1}{2}\binom{m}{\tfrac{m}{2}}-l m,\; \geq 2^{\tfrac{m}{2}}-1\right].$$

\end{lemma}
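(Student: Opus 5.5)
The dimension claim follows at once from Lemma~\ref{m even result 2}: $\dim(\mathcal{C}^\perp)=n-\dim(\mathcal{C})=2^{m-1}+\frac12\binom{m}{m/2}-lm$. The substantive part is the lower bound $d(\mathcal{C}^\perp)\ge 2^{m/2}-1$. I will obtain it with the BCH bound, exactly as in Lemma~\ref{m even dual result 1}. The plan is to find $2^{m/2}-2$ consecutive integers in $T(\mathcal{C}^\perp)$, namely the interval $[1,2^{m/2}-2]$.

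First compute $T(\mathcal{C}^\perp)$ using (\ref{eq:0131}). From (\ref{defining set result 1}),
\[
T(\mathcal{C})^{-1}=\bigl\{1\le a\le n \mid \omega_2(\bar a)\ge \tfrac m2\bigr\}\setminus B_l^{-1},
\]
since $\omega_2(\overline{n-a})=m-\omega_2(\bar a)$ for $0\le a\le n-1$, with $n\equiv 0$. Therefore
\[
T(\mathcal{C}^\perp)=\mathbb{Z}_n\setminus T(\mathcal{C})^{-1}=\bigl\{1\le a\le n-1\mid \omega_2(\bar a)\le \tfrac m2-1\bigr\}\cup B_l^{-1}.
\]
Note that $0\notin T(\mathcal{C}^\perp)$, because $n$, which is $0$ modulo $n$, has weight $m$. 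In contrast to the first construction, the extra cosets are now \emph{added} to the dual's defining set, so they cannot destroy a consecutive run.

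Next, as in (\ref{BCH bound with A is empty}), every $v\in[1,2^{m/2}-2]$ has its support in the lowest $m/2$ bits and is not all ones there. Hence $\omega_2(\bar v)\le \frac m2-1$, and so $[1,2^{m/2}-2]\subseteq T(\mathcal{C}^\perp)$. The BCH bound then gives $d(\mathcal{C}^\perp)\ge 2^{m/2}-1$.

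The only point needing care, and the step I expect to be the main obstacle, is getting the complement and inverse bookkeeping right. One must confirm that $B_l\subseteq\{\omega_2=\frac m2\}$, so that removing $B_l$ from $T(\mathcal{C})$ really places $B_l^{-1}$, which lies in the weight-$\frac m2$ layer, into $T(\mathcal{C}^\perp)$ rather than taking anything out of the low-weight part. This was verified in the proof of Lemma~\ref{m even result 2} via (\ref{definte of Wi}).

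Optionally, one could try to extend the run using $2^{m/2}-1\in T(\mathcal{C}^\perp)$ if it lay in $B_l^{-1}$. That does not happen in general, so I will state only the bound $2^{m/2}-1$.
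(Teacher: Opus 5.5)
Your proposal is correct and follows the paper's proof. You get the dimension from Lemma~\ref{m even result 2}, show that $T(\mathcal{C}^{\perp})=\{1\le a\le n-1 \mid \omega_2(\bar a)\le \frac m2-1\}\cup B_l^{-1}$ (using that $B_l^{-1}$ lies in the weight-$\frac m2$ layer), and apply the BCH bound to the run $[1,2^{m/2}-2]$. Your closing side remark can be sharpened from "not in general" to "never": $C_{G_i}\neq C_{2^{m/2}-1}$ because $1\le g_i\le \frac m2-2$ prevents the ones of $G_i$ from forming a single cyclic run, so $2^{m/2}-1\notin B_l^{-1}$ and the run cannot be extended that way.
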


\begin{proof}
From Lemma~\ref{m even result 2}, the dimension of ${\mathcal{C}}^{\perp}$ is
\[
\dim(\mathcal{C}^{\perp}) = n - \dim(\mathcal{C}) = 2^{m-1} + \frac{1}{2}\binom{m}{\tfrac{m}{2}} - l m.
\]

Now consider any element \(a \in B_l^{-1}\), where $B_l^{-1} = \{n-i \mid i \in B_l\}$. By (\ref{definte of Wi}) and the definition of \(B_l\), we have \(\omega_2(\bar{a}) =
\frac{m}{2}\).

Using (\ref{eq:0131}) and (\ref{defining set result 1}), the defining set of \(\mathcal{C}^{\perp}\) can be expressed as
\[
T(\mathcal{C}^{\perp}) = \bigl\{1 \le a \le n-1 \mid \omega_2(\bar{a}) \le \tfrac{m}{2}-1 \bigr\} \cup B_l^{-1}.
\]

Then applying the BCH bound together with (\ref{BCH bound with A is empty}) yields the lower bound on the minimum distance
\[
d(\mathcal{C}^{\perp}) \ge 2^{\tfrac{m}{2}} - 1.
\]
This completes the proof.
\end{proof}

\begin{remark}\label{remark2}
Let $m\geq 8$, compared to the codes in Lemmas \ref{lem:02061} and \ref{lem:0206}, the codes constructed in Lemma \ref{m even dual result 2} achieve a higher dimensions and a
 larger lower bound on the minimum distance under the same code length.
\end{remark}

\begin{example}
Let $m=8$ and $l=1$. The code in Lemma \ref{m even dual result 2} has parameter $[255,155, \geq15]$, the codes in Lemma \ref{lem:02061} and Lemma \ref{lem:0206} have parameters $[255,127,\geq9]$
and $[255,126,\geq10]$, respectively.
\end{example}

\begin{example}
Let $m=10$ and $l=1$. The code in Lemma \ref{m even dual result 2} has parameter $[1023,628, \geq31]$, the codes in Lemma \ref{lem:02061} and Lemma \ref{lem:0206} have parameters $[1023,513,\geq17]$, $[1023,511,\geq9]$, $[1023,510,\geq10]$ and $[1023,512,\geq18]$.
\end{example}

Using an argument similar to the proof of Theorem \ref{result of section A}, Lemmas \ref{m even result 2} and \ref{m even dual result 2} yields the following result.

\begin{theorem}\label{result of section B}
Let the symbols be given as in Lemma \ref{m even result 2}.  Let $h^{\perp}(x)$ denote the reciprocal polynomial of $(x^n+1)/g(x)$, where $g(x)$ is the generator polynomial of
$\mathcal{C}$.
Let $\mathcal{C}'$ denote the binary cyclic code of length $2n$ with generator polynomial $g(x)h^{\perp}(x)$. Then $\mathcal{C}'$ is a self-dual binary cyclic code with
parameters $[2n,n,\geq 2^\frac{m}{2}+2^{\frac{m}{2}-1}]$. If $m \geq 8$, this minimum distance is larger than the square-root lower
bound.
\end{theorem}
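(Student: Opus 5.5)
The plan follows the proof of Theorem~\ref{result of section A} closely. By definition, $h^{\perp}(x)$ is the generator polynomial of $\mathcal{C}^{\perp}$. Consider the Plotkin-type code
\[
\mathbb{C}=\{(\mathbf{u}\mid \mathbf{u}+\mathbf{v}) : \mathbf{u}\in\mathcal{C}^{\perp},\ \mathbf{v}\in\mathcal{C}\}.
\]
Since $\mathcal{C}$ is self-orthogonal by Lemma~\ref{m even result 2}, we have $\mathcal{C}\subseteq\mathcal{C}^{\perp}$. Hence $g(x)=h^{\perp}(x)\,g_2(x)$ for some divisor $g_2(x)$ of $x^n+1$.

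Next I apply Lemma~\ref{lemma:generalized-van-lint} with $\mathcal{C}_1=\mathcal{C}^{\perp}$, $g_1=h^{\perp}$ and $\mathcal{C}_2=\mathcal{C}$, noting that $n$ is odd. It shows that $\mathbb{C}$ is permutation-equivalent to the repeated-root cyclic code with generator
\[
\frac{h^{\perp}(x)^2 g_2(x)}{\gcd(h^{\perp}(x),g_2(x))}.
\]
Here one must check that $\gcd(h^{\perp},g_2)=1$, so that this generator equals $h^{\perp}(x)g(x)$. I expect this to be the main point needing care. It holds because the zeros of $g_2$ are the indices in $T(\mathcal{C})\setminus T(\mathcal{C}^{\perp})$, which are disjoint from $T(\mathcal{C}^{\perp})$, and all roots are simple since $n$ is odd. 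So $\mathbb{C}$ is permutation-equivalent to $\mathcal{C}'$, and the two codes share length, dimension, self-duality and minimum distance.

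By Lemma~\ref{criterion for plotkin-self-dual-2}, $\mathbb{C}$ is self-dual of length $2n$ and dimension $n$, with
\[
d=\min\{d(\mathcal{C}),\,2d(\mathcal{C}^{\perp})\}.
\]
Lemma~\ref{m even result 2} gives $d(\mathcal{C})\ge 2^{m/2}+2^{m/2-1}$. Lemma~\ref{m even dual result 2} gives $2d(\mathcal{C}^{\perp})\ge 2^{m/2+1}-2$, which is at least $2^{m/2}+2^{m/2-1}$ once $2^{m/2-1}\ge 2$, true for $m\ge 6$. Hence $d(\mathcal{C}')\ge 2^{m/2}+2^{m/2-1}$. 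Self-duality transfers to $\mathcal{C}'$ because permutation preserves the standard inner product.

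Finally, for the comparison with the square-root bound, note that $\sqrt{2n}<2^{(m+1)/2}=\sqrt2\cdot 2^{m/2}$. The claim then reduces to $1.5\cdot 2^{m/2}>\sqrt{2}\cdot 2^{m/2}$, which holds for every even $m$, in particular for $m\ge 8$. The case $m=8$ can also be confirmed numerically: the bound is $24>\sqrt{510}\approx 22.6$.
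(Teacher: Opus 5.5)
Your proposal is correct and follows the paper's route: identify $\mathcal{C}'$ with the Plotkin code $\{(\mathbf{u}\mid\mathbf{u}+\mathbf{v}):\mathbf{u}\in\mathcal{C}^{\perp},\mathbf{v}\in\mathcal{C}\}$ via Lemma~\ref{lemma:generalized-van-lint}, then read off self-duality and the bound $\min\{d(\mathcal{C}),2d(\mathcal{C}^{\perp})\}$ from Lemma~\ref{criterion for plotkin-self-dual-2} together with Lemmas~\ref{m even result 2} and~\ref{m even dual result 2}. Your explicit checks that $\gcd(h^{\perp},g/h^{\perp})=1$, so the generator is exactly $g h^{\perp}$, and that $2(2^{m/2}-1)\ge 2^{m/2}+2^{m/2-1}$ supply details the paper leaves implicit; the paper's threshold $m\ge 8$ presumably comes from comparing with the integer $\lceil\sqrt{2n}\rceil$, as in its examples, which equals the bound $12$ at $m=6$, whereas your real-number comparison covers the stated range.
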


\begin{example}\label{example4}
Let $m=8$. Then the self-dual cyclic code in Theorem \ref{result of section B} has parameters $[510,255, \geq 24]$, and the minimum distance of the square-root lower bound is 23.
\end{example}

\begin{example}\label{example3}
Let $m=10$. Then the self-dual cyclic code in Theorem \ref{result of section B} has parameters $[2046,1023, \geq 48]$, and the minimum distance of the square-root lower bound is 46.
\end{example}

\subsection{The third construction of self-dual binary cyclic codes}

In this subsection, let $n = 2^m - 1$, where $m$ is an odd integer. Let $l$ be a positive integer, $H_i= 2^{\frac{m-3}{2}+h_i} + 2^\frac{m-3}{2} - 1$ and $E_l=C_{H_1}\cup
C_{H_2}\cup \cdots \cup C_{H_l}$, where $1\leq i\leq l$ and $1 \leq h_1\neq h_2\neq\cdots\neq h_l\leq \frac{m-1}{2}$. Let
$\mathcal{C}$ be the binary cyclic code of length $n$ with the defining set
\begin{equation}\label{defining set result odd 1}
T(\mathcal{C})=\left\{1 \leq a\leq n-1 \,|\, \omega_2(\bar{a}) \leq \tfrac{m-3}{2} \right\} \cup C_{2^{(m-1)/2}-1},
\end{equation}
or
\begin{equation}\label{1defining set result odd 1}
T(\mathcal{C})=\left\{1 \leq a\leq n-1 \,|\, \omega_2(\bar{a}) \leq \tfrac{m-3}{2} \right\} \cup C_{2^{(m-1)/2}-1} \cup E_l.
\end{equation}
%
Then we have the following results.


%
%
%
%
%

\begin{lemma} \label{m odd result 1}
Let $m\geq 5$ and the symbols be given as above.
Then $\mathcal{C}$ is a dual-containing binary cyclic code with parameters
$$\left[2^m-1,\; 2^{m-1}+\binom{m}{\tfrac{m-1}{2}}-(l+1)m,\; \geq 2^{\tfrac{m-1}{2}}+2^{\tfrac{m-3}{2}}-1\right],$$
where $l=0$ if $T(\mathcal{C})$ is given in (\ref{defining set result odd 1}).
\end{lemma}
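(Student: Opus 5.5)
The plan is to verify the three claims of Lemma~\ref{m odd result 1} in turn: the dual-containing property via Lemma~\ref{criterion for self-orthogonal}, the dimension by counting $|T(\mathcal{C})|$, and the distance via the BCH bound on a run of consecutive integers in $T(\mathcal{C})$. As in Lemma~\ref{m even result 1}, I would treat the defining set (\ref{1defining set result odd 1}); the case (\ref{defining set result odd 1}) is the special case $l=0$.

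\emph{Dual-containing.} Every element of $C_{2^{(m-1)/2}-1}$ has $2$-weight $\frac{m-1}{2}$. Since $\overline{H_i}$ consists of $\frac{m-3}{2}$ ones followed by $h_i$ zeros and a single one, every element of $C_{H_i}$ also has weight $\frac{m-1}{2}$. Hence every $a\in T(\mathcal{C})$ satisfies $1\le\omega_2(\bar a)\le\frac{m-1}{2}$. Taking $n-a$ complements the $2$-adic digits, so every element of $T(\mathcal{C})^{-1}$ has weight at least $m-\frac{m-1}{2}=\frac{m+1}{2}$. This gives $T(\mathcal{C})\cap T(\mathcal{C})^{-1}=\emptyset$, and Lemma~\ref{criterion for self-orthogonal} shows that $\mathcal{C}$ is dual-containing.

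\emph{Dimension.} Since $m$ is odd, $\sum_{i=0}^{(m-1)/2}\binom{m}{i}=2^{m-1}$. Therefore
\[
\bigl|\{1\le a\le n-1 \mid \omega_2(\bar a)\le \tfrac{m-3}{2}\}\bigr| = 2^{m-1}-\binom{m}{\frac{m-1}{2}}-1 .
\]
The extra cosets all have weight $\frac{m-1}{2}$, so they are disjoint from this set. I would next show that each of them has size $m$. If a coset had size $d<m$, then $d\mid m$ by Lemma~\ref{size about cyclotomic coset}, the $m$-bit word would be periodic with period $d$, and so $m/d$ would divide its weight $\frac{m-1}{2}$. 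But $\gcd\bigl(m,\frac{m-1}{2}\bigr)=1$, so $m/d=1$, a contradiction.

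Next, the $l+1$ cosets must be pairwise distinct. I would compare cyclic gap patterns: $\overline{H_i}$ cyclically consists of one run of $\frac{m-3}{2}$ ones and one isolated one, separated by zero-gaps of lengths $h_i$ and $\frac{m+1}{2}-h_i$. Both gaps are at least $1$, whereas $2^{(m-1)/2}-1$ is a single run of ones, so $C_{H_i}\neq C_{2^{(m-1)/2}-1}$.

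This is the step I expect to be the real obstacle. The gap pattern is determined only by the unordered pair $\{h_i,\frac{m+1}{2}-h_i\}$, so $h$ and $\frac{m+1}{2}-h$ give the \emph{same} coset. For example, $h=1$ and $h=\frac{m-1}{2}$ coincide. I would therefore need to read the hypothesis as requiring these pairs to be distinct, for instance $h_i\le\frac{m+1}{4}$, or else count only distinct cosets. Under that reading, $|T(\mathcal{C})|=2^{m-1}-\binom{m}{\frac{m-1}{2}}-1+(l+1)m$. Subtracting from $n=2^m-1$ gives the claimed dimension $2^{m-1}+\binom{m}{\frac{m-1}{2}}-(l+1)m$.

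\emph{Distance.} I would show $[1,\,2^{\frac{m-1}{2}}+2^{\frac{m-3}{2}}-2]\subseteq T(\mathcal{C})$.
\begin{itemize}
\item For $1\le a\le 2^{\frac{m-1}{2}}-2$, the expansion of $a$ lives in the low $\frac{m-1}{2}$ bits and not all of them are $1$, so $\omega_2(\bar a)\le\frac{m-3}{2}$.
\item The value $a=2^{\frac{m-1}{2}}-1$ lies in $C_{2^{(m-1)/2}-1}\subseteq T(\mathcal{C})$.
\item For $2^{\frac{m-1}{2}}\le a\le 2^{\frac{m-1}{2}}+2^{\frac{m-3}{2}}-2$, bit $\frac{m-1}{2}$ is set, bit $\frac{m-3}{2}$ is $0$, and the low $\frac{m-3}{2}$ bits are not all $1$. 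Hence $\omega_2(\bar a)\le 1+\frac{m-5}{2}=\frac{m-3}{2}$.
\end{itemize}
This is a run of $2^{\frac{m-1}{2}}+2^{\frac{m-3}{2}}-2$ consecutive elements of $T(\mathcal{C})$, so the BCH bound gives $d(\mathcal{C})\ge 2^{\frac{m-1}{2}}+2^{\frac{m-3}{2}}-1$, completing the plan.
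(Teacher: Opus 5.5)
The gap is in the distinctness step. Your dual-containment argument, the count $2^{m-1}-\binom{m}{\frac{m-1}{2}}-1$, and the run $[1,\,2^{\frac{m-1}{2}}+2^{\frac{m-3}{2}}-2]\subseteq T(\mathcal{C})$ all match the paper's proof. Your periodicity argument for $|C_{H_i}|=m$, using $\gcd(m,\frac{m-1}{2})=1$, is a valid and slightly cleaner substitute for the paper's argument.

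\textbf{Why your coincidence claim fails.} You assert that $h$ and $\frac{m+1}{2}-h$ give the same coset, but this is false for $m\ge 7$. Multiplication by $2$ is a cyclic \emph{rotation} of the $m$-bit word, so it preserves the cyclic order of the blocks, not just the unordered pair of gap lengths. When $\frac{m-3}{2}\ge 2$, the block of $\frac{m-3}{2}$ ones cannot be rotated onto the isolated one. So any rotation carrying $\overline{H_i}$ to $\overline{H_j}$ sends long block to long block. The number of zeros immediately after the long block (in the direction of increasing bit index) is therefore a coset invariant, and it equals $h_i$. Hence distinct $h_i$ give distinct cosets.

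Concretely, for $m=7$ you claim $C_{H_1}=C_{H_3}$. But $H_1=11$, $H_3=35$, and $C_{11}=\{11,22,44,88,49,98,69\}$ does not contain $35$; indeed $\operatorname{CL}(35)=13$. So the extra hypothesis $h_i\le\frac{m+1}{4}$ is unnecessary, and with it (or with ``count only distinct cosets'') you prove a weaker statement than the lemma.

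\textbf{Where your suspicion is right.} It is correct in exactly one case, $m=5$. There the long block has length $\frac{m-3}{2}=1$, so the two ones are interchangeable under rotation. We have $H_1=5$, $H_2=9$ and $C_5=\{5,10,20,9,18\}$. So with $l=2$ one gets $|T(\mathcal{C})|=15$ and $\dim\mathcal{C}=16$, not the claimed $11$. The paper's ``arguing as in Lemma~\ref{m even result 2}'' glosses over this: the analogue of ``forcing $\frac m2=2$'' is forcing $\frac{m-3}{2}=1$, which $m\ge 5$ does not exclude. The right repair is to prove distinctness by the orientation invariant for all $m\ge 7$ under the stated hypothesis, and to require $l\le 1$ when $m=5$.
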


\begin{proof}
We prove the result only for the case that $T(\mathcal{C})$ is given by (\ref{1defining set result odd 1}), the proof for the other case is analogous and is therefore omitted.

It is straightforward to verify that the $2$-adic expansions of $2^{\frac{m-1}{2}}-1$ and $H_i$ are
\begin{equation}\label{eq:Hi expansion}
\overline{2^{\frac{m-1}{2}}-1}=(\underbrace{1,1,\ldots,1}_{\tfrac{m-1}{2}},0,0,\ldots,0)\,\,\,\, \text{and}\,\,\,\,
\overline{H_i}=(\underbrace{1,1,\ldots,1}_{\frac{m-3}{2}},\underbrace{0,0,\ldots,0}_{h_i},1,\underbrace{0,\ldots,0}_{\frac{m+1}{2}-h_i}),
\end{equation}
where $1\le i\le l$ and $1\le h_i\le\frac{m-1}{2}$, respectively.
Consequently, any element $a$ in $C_{2^{(m-1)/2}-1}\cup E_l$ satisfies $\omega_2(\bar{a})=\frac{m-1}{2}$, hence $\omega_2(\bar{a})\le\frac{m-1}{2}$ for all $a\in T(\mathcal{C})$.

From the definition of $T(\mathcal{C})$ we obtain
\begin{equation*}
T(\mathcal{C})^{-1}= \bigl\{1\le a\le n-1 \mid \omega_2(\bar{a})\ge \tfrac{m+3}{2}\bigr\}
                 \cup C_{2^{(m+1)/2}-1} \cup E_l^{-1},
\end{equation*}
where $E_l^{-1} = \{n-i \mid i \in E_l\}$.
A direct computation gives
\begin{equation}\label{m odd 1 T-2}
\overline{2^{\frac{m+1}{2}}-1}=(\underbrace{1,1,\ldots,1}_{\frac{m+1}{2}},0,0,\ldots,0)\,\,\,\, \text{and}\,\,\,\,
\overline{n-H_i}=(\underbrace{0,0,\ldots,0}_{\frac{m-3}{2}},\underbrace{1,1,\ldots,1}_{h_i},0,\underbrace{1,\ldots,1}_{\frac{m+1}{2}-h_i}),
\end{equation}
so that every $a\in T(\mathcal{C})^{-1}$ has $\omega_2(\bar{a})\ge\frac{m+1}{2}$.
Thus $T(\mathcal{C})\cap T(\mathcal{C})^{-1}=\emptyset$, and by Lemma~\ref{criterion for self-orthogonal} the code $\mathcal{C}$ is dual-containing.

We next compute the dimension of $\mathcal{C}$. It is clear that
\[
\Bigl|\bigl\{1\le a\le n-1 \mid \omega_2(\bar{a})\le \tfrac{m-3}{2}\bigr\}\Bigr|
   = \sum_{i=0}^{(m-3)/2}\binom{m}{i} - 1.
\]
Using $2^m=\sum_{i=0}^m\binom{m}{i}$ and the symmetry $\binom{m}{i}=\binom{m}{m-i}$,  we have
\[
2^m = 2\sum_{i=0}^{(m-3)/2}\binom{m}{i}+2\binom{m}{\tfrac{m-1}{2}},
\]
then
\begin{equation}\label{size of omega(a) odd}
\Bigl|\bigl\{1\le a\le n-1 \mid \omega_2(\bar{a})\le\tfrac{m-3}{2}\bigr\}\Bigr|
   = 2^{m-1}-\binom{m}{\tfrac{m+1}{2}}- 1.
\end{equation}

For each $H_i$, the $2$-adic expansion of its coset leader is
\begin{equation*}
(\underbrace{1,1,\ldots,1}_{\frac{m-3}{2}},\underbrace{0,0,\ldots,0}_{h_i},1,\underbrace{0,0\ldots,0}_{\frac{m+1}{2}-h_i})
\end{equation*}
if $h_i\le\bigl\lfloor\frac{m+1}{4}\bigr\rfloor$, and
\begin{equation*}
(1,\underbrace{0,0,\ldots,0}_{\frac{m+1}{2}-h_i},\underbrace{1,1,\ldots,1}_{\frac{m-3}{2}},\underbrace{0,0,\ldots,0}_{h_i})
\end{equation*}
if $h_i>\bigl\lfloor\frac{m+1}{4}\bigr\rfloor$.
Arguing as in Lemma \ref{m even result 2}, one shows that $H_i$ and $H_j$ lie in distinct cyclotomic cosets for $1 \leq i\neq j\leq l$.

For each $H_i$, we have $0 < -2^r(2^{h_i}-2^{\frac{m+3}{2}}+1) < n$ and
\[
2^{ \frac{m+3}{2} + r} \cdot H_i\equiv 2^m-1+2^r(2^{h_i}-2^{\frac{m+3}{2}}+1) \pmod{n},
\]
where $r=0,1,\dots,\frac{m-3}{2}$. Clearly, for distinct $r_1,r_2 \in [0, \frac{m-3}{2}]$,
 $$2^m-1+2^{r_1}(2^{h_i}-2^{\frac{m+3}{2}}+1)\neq 2^m-1+2^{r_2}(2^{h_i}-2^{\frac{m+3}{2}}+1),$$
 so the size of the cyclotomic coset $C_{H_i}$ satisfies $|C_{H_i}| \ge \frac{m-1}{2}$.

By Lemma~\ref{size about cyclotomic coset}, $|C_{H_i}|$ divides $m$, then $|C_{H_i}|=m$.
Moreover, from (\ref{eq:Hi expansion}) every $a\in C_{H_i}$ satisfies $\omega_2(\bar{a})=\frac{m-1}{2}$, and then
\[
E_l\cap\bigl\{1\le a\le n-1 \mid \omega_2(\bar{a})\le\tfrac{m-3}{2}\bigr\}=\emptyset.
\]
One also checks that $|C_{2^{(m+1)/2}-1}|=m$ and
\[
C_{2^{(m-1)/2}-1}\cap\bigl\{1\le a\le n-1 \mid \omega_2(\bar{a})\le\tfrac{m-3}{2}\bigr\}=\emptyset.
\]

Combining these facts with (\ref{size of omega(a) odd}) yields that
\begin{equation}\label{dim: m odd 1}
\begin{aligned}
\dim(\mathcal{C}) &= n-|T(\mathcal{C})| \\
&= n-\Bigl|\bigl\{1\le a\le n-1 \mid \omega_2(\bar{a})\le \tfrac{m-1}{2}\bigr\}\Bigr|
   -\sum_{i=1}^{l}|C_{H_i}| - |C_{2^{(m-1)/2}-1}| \\
&= 2^{m-1}+\binom{m}{\tfrac{m-1}{2}}-(l+1)m.
\end{aligned}
\end{equation}
Define two intervals
\[
N_0=\bigl[1,\;2^{\frac{m-1}{2}}-2\bigr]\,\,\,\, \text{and}\,\,\,\,
N_1=\bigl[2^{\frac{m-1}{2}},\;2^{\frac{m-1}{2}}+2^{\frac{m-3}{2}}-2\bigr].
\]
For $a\in N_0$, its $2$-adic expansion has the form $$(a_0,a_1,\dots,a_{\frac{m-3}{2}},0,\dots,0),$$
where $a_i \in \{0,1\}$ and at least one of $a_i=0$ for $i=0,1,\ldots,\frac{m-3}{2}$,  then $\omega_2(\bar{a})\le\frac{m-3}{2}$ and thus
$N_0\subseteq T(\mathcal{C})$.
For $a\in N_1$, the $2$-adic expansion is $$(a_0,a_1,\dots,a_{\frac{m-5}{2}},0,1,0,\dots,0),$$ where at least one $a_i=0$ for $i=0,1,\ldots,\frac{m-5}{2}$, which means that $\omega_2(\bar{a})\le\frac{m-3}{2}$. Thus,
$N_1\subseteq T(\mathcal{C})$.
Since $2^{\frac{m-1}{2}}-1\in C_{2^{(m-1)/2}-1}$, we have
\[
\bigl[1,\;2^{\frac{m-1}{2}}+2^{\frac{m-3}{2}}-2\bigr]\subseteq T(\mathcal{C}).
\]
Applying the BCH bound, we obtain
\[
d(\mathcal{C})\ge 2^{\frac{m-1}{2}}+2^{\frac{m-3}{2}}-1.
\]
Together with (\ref{dim: m odd 1}), the parameters of $\mathcal{C}$ are completely determined.
\end{proof}

\begin{remark}\label{remark3}
Let $m\geq 7$, compared to the codes in Lemmas \ref{lem:02061} and \ref{lem:0206}, the codes constructed in Lemma \ref{m odd result 1} achieve a higher dimensions and a larger lower bound on the minimum distance  under the same code length.
\end{remark}

\begin{example}\label{example7}
Let $m=5$ and $l=0$. The code in Lemma \ref{m odd result 1} has parameter $[31,21, \geq 5]$, and it is optimal when equality holds, according to \cite{Ding2015,Grassl}.
\end{example}

\begin{example}\label{example8}
Let $m=7$ and $l=0$. The code in Lemma \ref{m odd result 1} has parameters $[127,92,\geq 11]$, the codes in Lemma \ref{lem:02062} and Lemma \ref{lem:02063} have parameters $[127,64,\geq9]$
and $[127,63,\geq10]$, respectively. In addition, the code $[127,91,12]$ is optimal
according to \cite{Ding2015,Grassl}.
\end{example}

\begin{lemma} \label{m odd dual of result 2}
Let $m \geq 11$ be odd and the other symbols be given as in Lemma \ref{m odd result 1}. Let $\mathcal{C}^{\perp}$ be the dual of $\mathcal{C}$. Then code $\mathcal{C}^{\perp}$ has
parameters $$\left[2^m-1,\;2^{m-1}+(l + 1 )m - \binom{m}{\frac{m-1}{2}} - 1 ,\;\geq 2^{\frac{m+1}{2}}+2\right].$$
\end{lemma}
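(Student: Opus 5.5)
The defining set of $\mathcal{C}^{\perp}$ comes from (\ref{eq:0131}) and the description of $T(\mathcal{C})^{-1}$ obtained in the proof of Lemma~\ref{m odd result 1}. Writing $k=\frac{m+1}{2}$, this gives
\[
T(\mathcal{C}^{\perp})=\bigl\{0\le a\le n-1 \mid \omega_2(\bar a)\le k\bigr\}\setminus\bigl(C_{2^{k}-1}\cup E_l^{-1}\bigr).
\]
Here $0\in T(\mathcal{C}^\perp)$ because $0\notin T(\mathcal{C})$. The dimension is simply $n-\dim(\mathcal{C})$ from (\ref{dim: m odd 1}), which yields $2^{m-1}+(l+1)m-\binom{m}{\frac{m-1}{2}}-1$. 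So the real content is the distance bound.

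For the distance, I first use $0\in T(\mathcal{C}^{\perp})$. This means $x-1$ divides the generator polynomial, so every codeword has even weight, exactly as at the end of the proof of Lemma~\ref{m even result 2}. It therefore suffices to prove $d(\mathcal{C}^{\perp})\ge 2^{k}+1$. Parity then upgrades this to $2^k+2$.

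To get $2^k+1$, I locate long runs inside $T(\mathcal{C}^\perp)$ and apply the generalized Hartmann--Tzeng bound (Lemma~\ref{generalizations of bch bound.}), not the plain BCH bound.
\begin{itemize}
\item Every $a\in[2^{k},2^{k+1}-2]$ has weight at most $1+(k-1)=k$.
\item The only members of $C_{2^k-1}$ below $2^{k+1}$ are $2^k-1$ and $2^{k+1}-2$, because the wrapped rotations of the block of $k$ ones are huge.
\item The elements of $E_l^{-1}$ are rotations of the pattern $\overline{n-H_i}$ in (\ref{m odd 1 T-2}). These have $\frac{m-3}{2}$ consecutive zeros. I will check that no such rotation falls in the window. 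For a small element of the window this amounts to the rotation lying below $2^{k+1}$ with weight exactly $k$ and a single internal zero. Comparing bit patterns, this is excluded once $h_i\le\frac{m-1}{2}$ and $m\ge 11$ leave enough room.
\end{itemize}
This gives the run $[2^k,2^{k+1}-3]$ together with $[0,2^k-2]$; they are separated only by the missing element $2^k-1$. Neither run alone suffices, since each has length about $2^k$.

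The key step is to choose $c_1$ and $c_2$ in Lemma~\ref{generalizations of bch bound.} so that a two-dimensional pattern $\{b+i_1c_1+i_2c_2\}$ with $s_1+s_2\ge 2^k-1$ lies in $T(\mathcal{C}^\perp)$. I would first try $c_1=1$, with $b$ and $s_1$ chosen to cover most of one run, and $c_2$ a small shift such as $2$ or $2^{k}$ with $\gcd(n,c_2)$ small. I would also exploit that $T(\mathcal{C}^\perp)$ is closed under multiplication by $2$: scaling the run $[1,2^k-2]$ by $2$ gives even elements that can fill the gap around $2^k-1$ and $2^{k+1}-2$. An alternative is to work with a run containing $0$ and use $c_1=2$ with its scaled copy.

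The main obstacle is exactly this step: certifying that the chosen pattern avoids both $C_{2^k-1}$ and all of $E_l^{-1}$. This requires a case analysis on the binary patterns of rotations of $\overline{n-H_i}$ and of $2^k-1$. I expect the hypothesis $m\ge11$ to be exactly what makes these avoidance checks go through. Once the pattern is verified, Lemma~\ref{generalizations of bch bound.} gives $d\ge 2^k+1$, and parity finishes the proof.
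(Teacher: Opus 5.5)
\paragraph{There is a genuine gap at the key step.}
Your defining set, dimension count and parity upgrade are correct. Using the generalized Hartmann--Tzeng bound with two translates is also the right framework. The failure is at the step you flag as the main obstacle: the run you set up for it is not contained in $T(\mathcal{C}^{\perp})$ once $l\ge 1$.

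Your third bullet claims that no rotation of $\overline{n-H_i}$ lies in $[2^k,2^{k+1}-3]$, and this is false. Since $n-H_i=2^m-2^{(m-3)/2}-2^{(m-3)/2+h_i}$ and $2^m\equiv 1$, you get
$2^{k+1}(n-H_i)\equiv 2^{k+1}-1-2^{h_i}\pmod n$.
Because $1\le h_i\le k-1$, this element lies in $[2^k+2^{k-1}-1,\,2^{k+1}-3]$. Read cyclically from bit $k+1$, its pattern is $0^{(m-3)/2}\,1^{h_i}\,0\,1^{k-h_i}$, which is exactly the shape of $\overline{n-H_i}$. So a weight-$k$ element below $2^{k+1}$ with a single internal zero is precisely what a rotated member of $E_l^{-1}$ looks like, however large $m$ is. For example, with $m=11$ and $h_i=1$ you get $125\in C_{n-H_i}$ and $125\in[64,125]$.

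Even for $l=0$ you never exhibit a working pattern. The natural choice $c_1=1$, $c_2=2^k$, $s_1=2^k-2$, $s_2=1$ fails, because the translate $[2^k,2^{k+1}-2]$ contains $2^{k+1}-2\in C_{2^k-1}$. Closure under doubling cannot repair this, since $2^k-1$ and $2^{k+1}-2$ are never in $T(\mathcal{C}^\perp)$.

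\paragraph{The missing idea.}
Place the second run two bits higher, so that its top bit is separated from the low block by two zeros. This is what the paper does. Take $V_0=[0,2^k-2]$ and $V_1=[2^{k+2},\,2^{k+2}+2^k-2]$, with $b=0$, $c_1=1$, $s_1=2^k-2$, $c_2=2^{k+2}$, $s_2=1$; note that $\gcd(n,2^{k+2})=1$.

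Every element of $V_1$ has weight at most $k$. A weight-$k$ element has exactly one zero among bits $0,\dots,k-1$, zeros at bits $k$ and $k+1$, and a one at bit $k+2$. Its ones are therefore not cyclically consecutive, so it is not in $C_{2^k-1}$. Its cyclic zero-run lengths are $\{(m-5)/2,\,2\}$, $\{3,\,(m-7)/2\}$, or three separate runs. None of these equals $\{(m-3)/2,\,1\}$, the zero-run structure of $\overline{n-H_i}$, when $m\ge 11$. This is exactly where $m\ge 11$ is used: for $m=9$ and $h_i=1$, the element $143=2^7+15$ of $V_1$ does lie in $C_{n-H_i}$.

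With $V_0\cup V_1\subseteq T(\mathcal{C}^\perp)$, Lemma~\ref{generalizations of bch bound.} gives $d(\mathcal{C}^\perp)\ge 2^k+1$, and your parity argument then yields $2^k+2$.
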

\begin{proof}
From Lemma~\ref{m odd result 1} we directly obtain the dimension of $\mathcal{C}^{\perp}$, which is given by
\[
\dim(\mathcal{C}^{\perp}) = n - \dim(\mathcal{C}) = 2^{m-1} + (l+1)m - \binom{m}{\frac{m-1}{2}}-1.
\]

By definition, the defining set of $\mathcal{C}^{\perp}$ is
\[
T(\mathcal{C}^{\perp}) = \left\{0 \le a \le n-1 \;\middle|\; \omega_2(\bar{a}) \le \tfrac{m+1}{2}\right\}
\;\setminus\;
\left( E_l^{-1} \;\cup\; C_{2^{(m+1)/2}-1} \right),
\]
where $E_l^{-1} = \{n-i \mid i \in E_l\}$.

Define
\[
V_0 = \bigl[0,\; 2^{\frac{m+1}{2}}-2\bigr]\,\,\,\, \text{and}\,\,\,\,
V_1 = \bigl[2^{\frac{m+5}{2}},\; 2^{\frac{m+5}{2}} + 2^{\frac{m+1}{2}}-2\bigr].
\]
We shall prove that both $V_0$ and $V_1$ are subsets of $T(\mathcal{C}^{\perp})$.

Take $a \in V_1$ and write $a = 2^{\frac{m+5}{2}} + b$ with $0 \le b \le 2^{\frac{m+1}{2}}-2$.
The $2$-adic expansion of $b$ has the form
\[
\bar{b} = (b_0, b_1, \dots, b_{\frac{m-1}{2}}, 0, \dots, 0),
\]
where $b_i \in \{0,1\}$ for $0\leq i\leq \frac{m-1}{2}$. Since
\[
\overline{2^{\frac{m+1}{2}}-1} = (\underbrace{1,1,\dots,1}_{\frac{m-1}{2}},0,0,\dots,0)
\]
and $b < 2^{\frac{m+1}{2}}-1$, at least one of the $b_i$ equals $0$.
Moreover,
\[
\overline{2^{\frac{m+5}{2}}} = (\underbrace{0,0,\dots,0}_{\frac{m+5}{2}},1,0,0,\dots,0),
\]
then the $2$-adic expansion of $a$ is
\[
\bar{a} = (b_0, b_1, \dots, b_{\frac{m-1}{2}}, 0,0,1,0,\dots,0).
\]
Therefore, $\omega_2(\bar{a}) \le \frac{m+1}{2}$.

If $\omega_2(\bar{a}) < \frac{m+1}{2}$, then by (\ref{m odd 1 T-2}) we have $a \notin E_l^{-1} \cup C_{2^{(m+1)/2}-1}$.
If $\omega_2(\bar{a}) = \frac{m+1}{2}$, then exactly one of $b_i$ is zero for $0\leq i\leq \frac{m-1}{2}$.
Consequently, $\bar{a}$ must be of the form
\begin{equation}\label{expansion of V}
\bar{a}=(\underbrace{1,1,\ldots,1}_{r},0,\underbrace{1,1,\ldots,1}_{\frac{m-1}{2}-r},0,0,1,\underbrace{0,\ldots,0}_{\frac{m-7}{2}}),
\end{equation}
with $0\leq r\leq \frac{m-1}{2}$.
Since $\overline{2^{\frac{m+1}{2}}-1}$ contains $\frac{m+1}{2}$ consecutive $1$'s, (\ref{expansion of V}) shows at most $\frac{m-1}{2}$ consecutive $1$'s. Hence, $a \notin
C_{2^{(m+1)/2}-1}$.
Furthermore, from (\ref{m odd 1 T-2}) the $2$-adic expansion of $n-H_i$ contains $\frac{m-3}{2}$ consecutive $0$'s, and by inspecting (\ref{expansion of V}) together with the definition of
$E_l$, one checks that $a \notin E_l^{-1}$ for $m\ge 11$.

Thus in all cases $a \in T(\mathcal{C}^{\perp})$, proving $V_1 \subseteq T(\mathcal{C}^{\perp})$.

The inclusion $V_0 \subseteq T(\mathcal{C}^{\perp})$ can be verified in an analogous manner, we omit the details. Consequently,
\begin{equation}\label{1V(s) belong defset}
V_0,\; V_1 \subseteq T(\mathcal{C}^{\perp}).
\end{equation}

Consider the set
\[
N = \left\{ b + i_1 c_1 + i_2 c_2 \mid 0\le i_1\le r_1,\; 0\le i_2\le r_2 \right\},
\]
with the choice
\[
b=0,\quad c_1=1,\quad c_2=2^{\frac{m+5}{2}},\quad r_1=2^{\frac{m+1}{2}}-2\,\,\,\, \text{and}\,\,\,\, r_2=1.
\]
Then $N = V_0 \cup V_1$, so by (\ref{1V(s) belong defset}) we have $N \subseteq T(\mathcal{C}^{\perp})$.
Clearly, $\gcd(n, c_1)=1$ and $\gcd(n, c_2)=1 < r_1+r_2+2$.
Applying Lemma~\ref{generalizations of bch bound.} yields
\begin{equation}\label{eq:ssdr}
d(\mathcal{C}^{\perp}) \ge 2 + r_1 + r_2 = 2^{\frac{m+1}{2}} + 1.
\end{equation}

Because $0 \in T(\mathcal{C}^{\perp})$, the polynomial $x-1$ divides the generator polynomial of $\mathcal{C}^{\perp}$. Then $1+x+x^2+\cdots+x^{2^m-2} \in \mathcal{C}$.
This implies that every codeword of $\mathcal{C}^{\perp}$ has even Hamming weight.
Therefore, the bound (\ref{eq:ssdr}) can be strengthened to
\[
d(\mathcal{C}^{\perp}) \ge 2^{\frac{m+1}{2}} + 2.
\]
This completes the proof of this lemma.
\end{proof}

Using an argument similar to the proof of Theorem~\ref{result of section A}, together with Lemmas \ref{m odd result 1} and \ref{m odd dual of result 2}, we obtain the following
result.

\begin{theorem}\label{result of section C}
Let $n=2^m-1$, where $m\geq 11$ is an odd. Let $h(x)$ denote the reciprocal polynomial of $(x^n+1)/g(x)$, where $g(x)$ is the generator polynomial of $C$.
Let $C'$ denote the binary cyclic code of length $2n$ with generator polynomial $g(x)h^{\perp}(x)$. Then $C'$ is a self-dual binary cyclic code with parameters $[2n,n,\geq
2^{\frac{m+1}{2}}+2]$. In addition, the bound $d(\C') \geq 2^{\frac{m+1}{2}}+2$ is larger than the square-root lower bound.
\end{theorem}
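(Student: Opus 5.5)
Here $\mathcal{C}$ is dual-containing, not self-orthogonal, so we use Lemma~\ref{criterion for plotkin-self-dual-1} rather than Lemma~\ref{criterion for plotkin-self-dual-2}. The code $\mathcal{C}$ is the one of Lemma~\ref{m odd result 1}, with $T(\mathcal{C})$ given by (\ref{defining set result odd 1}) or (\ref{1defining set result odd 1}) and generator polynomial $g(x)$. Since $h^{\perp}(x)$ is the reciprocal of the check polynomial $(x^n+1)/g(x)$, it generates $\mathcal{C}^{\perp}$. Lemma~\ref{m odd result 1} gives $\mathcal{C}^{\perp}\subseteq\mathcal{C}$, so $g(x)\mid h^{\perp}(x)$.

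\textbf{Step 1 (cyclic structure).} Consider
\[
\mathbb{D}=\{(\mathbf{u}\mid \mathbf{u}+\mathbf{v}) : \mathbf{u}\in\mathcal{C},\ \mathbf{v}\in\mathcal{C}^{\perp}\}.
\]
Apply Lemma~\ref{lemma:generalized-van-lint} with $g_1=g$ and $g_2=h^{\perp}/g$; here $g_2$ divides $x^n+1$ because $h^{\perp}\mid x^n+1$. Then $\mathcal{C}_2=\langle g_1g_2\rangle=\langle h^{\perp}\rangle=\mathcal{C}^{\perp}$. The lemma shows that $\mathbb{D}$ is permutation-equivalent to the cyclic code of length $2n$ generated by
\[
\frac{g^2\,(h^{\perp}/g)}{\gcd\bigl(g,\,h^{\perp}/g\bigr)}.
\]
Because $x^n+1$ is squarefree ($n$ odd) and $g\mid h^{\perp}$, we have $\gcd(g,h^{\perp}/g)=1$. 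The generator therefore equals $g(x)h^{\perp}(x)$, so this cyclic code is exactly $\mathcal{C}'$.

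\textbf{Step 2 (self-duality and distance).} Lemma~\ref{criterion for plotkin-self-dual-1} with $s=1$ shows that $\mathbb{D}$ is self-dual, with
\[
d(\mathbb{D})=\min\{d(\mathcal{C}^{\perp}),\,2d(\mathcal{C})\}.
\]
Permutation equivalence preserves the standard inner product structure, so $\mathcal{C}'$ is self-dual with the same minimum distance. Its dimension is $n$. Lemma~\ref{m odd dual of result 2}, which needs $m\ge 11$, gives $d(\mathcal{C}^{\perp})\ge 2^{\frac{m+1}{2}}+2$. Lemma~\ref{m odd result 1} gives
\[
2d(\mathcal{C})\ge 2^{\frac{m+1}{2}}+2^{\frac{m-1}{2}}-2,
\]
and this exceeds $2^{\frac{m+1}{2}}+2$ once $2^{\frac{m-1}{2}}\ge 4$. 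Hence $d(\mathcal{C}')\ge 2^{\frac{m+1}{2}}+2$.

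\textbf{Step 3 (square-root comparison).} We need $2^{\frac{m+1}{2}}+2>\sqrt{2(2^m-1)}$. The right side is less than $2^{\frac{m+1}{2}}$, so the inequality is immediate.

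\textbf{Main obstacle.} The only delicate point is Step 1: matching the roles of $\mathcal{C}_1,\mathcal{C}_2$ in Lemma~\ref{lemma:generalized-van-lint} with the dual-containing order of Lemma~\ref{criterion for plotkin-self-dual-1}. The outer code must be $\mathcal{C}$ and the inner code $\mathcal{C}^{\perp}$, the reverse of Theorem~\ref{result of section A}. We must also check that the gcd term collapses so the generator is $g(x)h^{\perp}(x)$. Everything else is quoting the two preceding lemmas.
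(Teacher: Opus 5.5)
Your proposal is correct and follows the paper's route. The paper only says to argue as in Theorem~\ref{result of section A}, that is, to combine Lemma~\ref{lemma:generalized-van-lint}, the Plotkin-type self-dual lemma, and the parameters from Lemmas~\ref{m odd result 1} and~\ref{m odd dual of result 2}. You supply the details the paper leaves implicit, and they check out: switching to Lemma~\ref{criterion for plotkin-self-dual-1} because $\mathcal{C}$ is dual-containing, showing $\gcd(g,h^{\perp}/g)=1$ so the generator is exactly $g(x)h^{\perp}(x)$, and verifying $2d(\mathcal{C})\ge 2^{\frac{m+1}{2}}+2$.
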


\begin{example}\label{example5}
Let $m=11$. It is clear that the self-dual cyclic code in Theorem \ref{result of section C} has parameter $[4094,2047, \geq66]$, and the minimum distance of the square-root lower bound is 64.
\end{example}

\subsection{The fourth construction of self-dual binary cyclic codes}

In this subsection, let $\C_{\delta}$ denote the narrow-sense primitive BCH code over $\mathbb{F}_2$ with length $n=2^m-1$ and designed distance
\begin{equation}\label{eq:0208}
2^{\frac{m+1}{2}}- 4s -5< \delta \leq 2^{\frac{m+1}{2}}- 4s-1,
\end{equation}
where $0\leq s \leq 2^\frac{m-5}{2}-1$ and $m\geq 5$ is odd. Let $\alpha$ be a primitive element in $\mathbb{F}_{2^m}$ and be a root of $\C_{\delta}$. By definition, we know that
the defining set of $\C_{\delta}$ with respect to $\alpha$ is  $T(\mathcal{C}_{\delta})= C_1\cup C_2 \cup \cdots \cup C_{\delta-1}$.

 The following two lemmas will be needed later.

\begin{lemma}\cite[Theorem 1]{Auly}\label{lem:1122}
Let $\C$ be a narrow-sense BCH code of length $n$ over $\mathbb{F}_q$ with designed distance $\delta$. Suppose that $m=ord_n(q)$.
Then $\C$ is dual-containing if the designed distance $\delta$ is in the range
\begin{equation*}
2\leq \delta \leq \frac{n}{q^m-1}(q^{\lceil m/2\rceil}-1-(q-2)\,[m \,\,odd]),
\end{equation*}
 where $[m \,\,odd]=1$ if $m$ is odd and $[m \,\,odd]=0$ if $m$ is even.
 \end{lemma}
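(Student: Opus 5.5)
We would use the criterion of Lemma~\ref{criterion for self-orthogonal}: $\mathcal{C}$ is dual-containing if and only if $T(\mathcal{C})\cap T(\mathcal{C})^{-1}=\emptyset$. Here $T(\mathcal{C})=C_1\cup\cdots\cup C_{\delta-1}$, so it suffices to show the following. For all $1\le x,y\le \delta-1$ and all $j$,
\[
xq^j\not\equiv -y \pmod n .
\]
First we reduce to the primitive case. Put $N=q^m-1$ and $\lambda=N/n$, and identify $\mathbb{Z}_n$ with $\lambda\mathbb{Z}_N$ via $z\mapsto \lambda z$. The congruence becomes $(\lambda x)q^j\equiv -\lambda y\pmod N$. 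The hypothesis says $\lambda x,\lambda y\le \lambda(\delta-1)<q^{\lceil m/2\rceil}-1-(q-2)[m\ \mathrm{odd}]$. So it is enough to prove the following: there are no $1\le X,Y\le M-1$ with $Xq^j\equiv -Y\pmod N$, where $M=q^{\lceil m/2\rceil}-1-(q-2)[m\ \mathrm{odd}]$.

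Next we translate into digits. Write elements of $\mathbb{Z}_N$ by their $m$-digit $q$-adic expansions. Multiplication by $q^j$ is a cyclic shift of the digits, and $N-Y$ has the complementary digits $q-1-Y_i$. Since $X<q^{\lceil m/2\rceil}$, the digits of $X$ in positions $\lceil m/2\rceil,\dots,m-1$ vanish. Hence every cyclic shift of $X$ has a cyclic run of at least $\lfloor m/2\rfloor$ zeros. Similarly, the top $\lfloor m/2\rfloor$ digits of $N-Y$ all equal $q-1$. So $N-Y$ has zero digits only in the low $\lceil m/2\rceil$ positions, exactly where $Y$ has digit $q-1$. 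If $Xq^j\equiv N-Y$, the zero run of the shifted $X$ must therefore fit inside the low $\lceil m/2\rceil$ positions, at positions where $Y$ has digit $q-1$.

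For $m$ even, this forces $Y=q^{m/2}-1\ge M$, which is a contradiction. For $m$ odd, $Y$ must contain $(m-1)/2$ consecutive digits $q-1$ among its low $(m+1)/2$ digits. So either
\[
Y=(q^{(m-1)/2}-1)+aq^{(m-1)/2}
\quad\text{or}\quad
Y=a+(q-1)(q+\cdots+q^{(m-1)/2}),
\]
with $0\le a\le q-1$. The second form gives $Y\ge q^{(m+1)/2}-q\ge M$ and is excluded.

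The relation is symmetric in $X$ and $Y$ (multiply by $q^{-j}$). Hence $X$ must also be of the first form, $X=(q^{(m-1)/2}-1)+bq^{(m-1)/2}$, and the bound $X,Y\le M-1=q^{(m+1)/2}-q$ restricts $a,b\le q-2$. The main obstacle is this final case check. We must verify that no cyclic shift of such an $X$ equals the complement of such a $Y$. A shift of $X$ has exactly one digit $b$, flanked by a block of $q-1$'s and a block of zeros. The complement of $Y$ has low block $0^{(m-1)/2}$, then the digit $q-1-a\ge 1$, then $(q-1)^{(m-1)/2}$. Matching the zero blocks forces the shift. Comparing the middle digits then forces $b=q-1-a$ together with a block-length mismatch: $X$ has $(m-1)/2$ digits $q-1$ plus $(m+1)/2$ zeros including the high position, whereas the complement of $Y$ has only $(m-1)/2$ zeros. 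This gives a contradiction. I would carry out this bookkeeping carefully, since it is exactly where the correction term $(q-2)$ in the bound is needed.
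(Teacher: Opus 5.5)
The paper gives no proof of this lemma; it simply quotes it from \cite{Auly}. Your argument therefore has to stand on its own. The reduction to length $N=q^m-1$ via $z\mapsto\lambda z$, the digit and zero-run translation, and the even case are all correct. The odd case, however, contains two steps that are false as written.

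\textbf{First: the second form is not excluded by size.} You have $(q-1)(q+\cdots+q^{(m-1)/2})=q^{(m+1)/2}-q$, and this is exactly $M-1$, as you note yourself a few lines later; it is not $\ge M$. So $Y=q^{(m+1)/2}-q$ (second form with $a=0$) is admissible, and the same holds for $X$ in your symmetry step.

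The repair is easy. Since $q^{(m+1)/2}-q=q\,(q^{(m-1)/2}-1)$, this value lies in the cyclotomic coset of the first-form value with $a=0$. Replacing $Y$ by $Yq^{-1}$ (and $j$ by $j-1$) puts it in the first form.

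This is also where the correction term $(q-2)[m\ \mathrm{odd}]$ actually matters, not in your final check. The constraint $a,b\le q-2$ only uses $X,Y\le q^{(m+1)/2}-2$. By contrast, $Y=q^{(m+1)/2}-q+1$ (second form with $a=1$) and $X=(q-1)q^{(m-1)/2}-1$ satisfy $Xq^{(m+1)/2}\equiv -Y \pmod{q^m-1}$. So $M$ cannot be increased in the primitive case.

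\textbf{Second: the final contradiction is misstated.} For $X=(b+1)q^{(m-1)/2}-1$ with $b\ge 1$, $X$ has exactly $(m-1)/2$ zero digits, not $(m+1)/2$. When $b=q-1-a$, the digit multisets of $X$ and $N-Y$ coincide, so there is no block-length mismatch. The correct argument splits into two cases.
\begin{itemize}
\item If $b=0$, then $X$ has $(m+1)/2$ zeros while $N-Y$ has only $(m-1)/2$, because $q-1-a\ge 1$.
\item If $b\ge 1$, each word has a single cyclic zero run of length $(m-1)/2$. Matching these runs forces $j\equiv (m-1)/2 \pmod m$. Then $Xq^j$ has the digit $b\le q-2$ in position $m-1$, where $N-Y$ has $q-1$, a contradiction. (Comparing position $(m-1)/2$ also forces $a=0$.)
\end{itemize}
Equivalently, in $X$ the middle digit is preceded by $q-1$ and followed by $0$, whereas in $N-Y$ it is preceded by $0$ and followed by $q-1$; a rotation cannot reverse that order.

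With these two repairs your proof is complete.
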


\begin{lemma}\label{dimension of section D}\cite[Theorem~18]{Liu17}
Let  $m \geq 5 $ be an odd integer.  Put $h=\frac{m-1}{2}$ and $\delta_{N_q}=\delta-1-\lfloor \frac{\delta - 1}{q} \rfloor $. For $2 \leq \delta \leq  2^\frac{m+1}{2}+1$, we then have
$$\dim(\mathcal{C}_\delta)= n - m \delta_{N_q}. $$
\end{lemma}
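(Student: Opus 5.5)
The plan is to count the defining set $T(\mathcal{C}_\delta)=C_1\cup C_2\cup\cdots\cup C_{\delta-1}$ directly. Since $C_{2j}=C_j$, every even index in $[1,\delta-1]$ lies in the coset of a smaller index. Iterating, $T(\mathcal{C}_\delta)$ is the union of the cosets $C_a$ with $a$ odd and $1\le a\le \delta-1$. There are exactly $\delta-1-\lfloor(\delta-1)/2\rfloor=\delta_{N_q}$ such odd $a$ (here $q=2$). It therefore suffices to prove two things: (i) these cosets are pairwise distinct, and (ii) each has size $m$. Then $|T(\mathcal{C}_\delta)|=m\delta_{N_q}$, and $\dim(\mathcal{C}_\delta)=n-|T(\mathcal{C}_\delta)|$ gives the claim.

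For (i), put $h=\frac{m+1}{2}$ and note that $\delta-1\le 2^h$. Every odd $a$ in range therefore satisfies $a\le 2^h-1$, so its $m$-bit $2$-adic expansion has support in positions $\{0,\dots,h-1\}$ and contains position $0$. Suppose $a\equiv 2^j b \pmod n$ with $a,b$ odd in this range and $1\le j\le m-1$. Multiplication by $2^j$ modulo $n$ cyclically shifts the bits of $b$ by $j$ positions, so bit $0$ of $b$ lands at position $j$. This forces $j\le h-1$. On the other hand, $a$ is odd, so some bit of $b$ at a position $p\le h-1$ must wrap around to position $0$. That means $p+j=m$, hence $j\ge m-h+1=h$. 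The two bounds on $j$ contradict each other. Hence each such odd $a$ is the unique odd element of $[1,2^h-1]$ in its coset, and the cosets are distinct.

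For (ii), I will use Lemma~\ref{size about cyclotomic coset}: $|C_a|=k$ divides $m$. If $k<m$, then $k\le m/3$ because $m$ is odd. From $a(2^k-1)\equiv 0\pmod n$, $a$ is a positive multiple of $(2^m-1)/(2^k-1)$. This quantity is at least $(2^m-1)/(2^{m/3}-1)>2^{2m/3}$, which exceeds $2^h$ for $m\ge 5$; in the smallest case $m=5$ one has $k=1$ and $31>8$. So $a>2^h$, contradicting the range, and $|C_a|=m$. Alternatively, the shift argument of (i) applied with $b=a$ shows $2^j a\not\equiv a$ for $1\le j\le m-1$ directly.

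I expect the main obstacle to be the boundary bookkeeping in (i): making sure the largest admissible odd index, $2^h-1$ when $\delta=2^h+1$, still has support inside the first $h$ bit positions, and that the two inequalities on $j$ genuinely exclude each other.
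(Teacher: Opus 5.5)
Your proof is correct. There is no in-paper argument to compare it with: the paper states this lemma without proof, importing it from \cite[Theorem~18]{Liu17}.

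Your route is a natural self-contained one. First, reduce $T(\mathcal{C}_\delta)=\bigcup_{i=1}^{\delta-1}C_i$ to the cosets $C_a$ with $a$ odd and $a\le\delta-1$; there are exactly $\delta_{N_q}$ of these when $q=2$. Then use the fact that multiplication by $2$ modulo $2^m-1$ cyclically shifts the $m$-bit expansion. Take an odd $a\le 2^{H}-1$, where $H=\frac{m+1}{2}$, so its support lies in the lowest $H$ positions. If $a\equiv 2^j b$ with $1\le j\le m-1$ and $b$ odd in the same range, then bit $0$ of $b$ lands at position $j$, forcing $j\le H-1$. Bit $0$ of $a$ comes from position $m-j$ of $b$, forcing $j\ge m-H+1=H$. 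These two bounds clash exactly because $m$ is odd.

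This single argument gives distinctness of the cosets. Taking $b=a$, it also gives $|C_a|=m$. So your divisor-counting argument for (ii), with $k\le m/3$ and $a\ge (2^m-1)/(2^k-1)>2^{2m/3}>2^{H}$, is valid but redundant. The boundary case you flagged is handled correctly: for $\delta=2^{H}+1$, the largest odd index is $2^{H}-1$, whose support is still inside positions $0,\dots,H-1$.

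The citation buys the authors brevity and the general $q$-ary statement. Your argument gives a short, complete verification of exactly the binary case used in Theorem~\ref{BCH code result1}. It also shows why the range stops where it does: for $H=\lceil m/2\rceil$ the two inequalities on $j$ are incompatible, but one more bit would allow a shift of $j=m-H$.

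One cosmetic point: you reuse $h$ for $\frac{m+1}{2}$, while the statement defines $h=\frac{m-1}{2}$. Rename yours to avoid confusion; the statement's $h$ plays no role in the formula anyway.
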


We now develop a lower bound on the minimum distance of the dual code $\mathcal{C}^{\perp}_{\delta}$, which is better than the Carlitz-Uchiyama bound, the Sidel'nikov
bound, and the results in \cite{GDL21} and \cite{Wang24} in many cases.

\begin{theorem}\label{BCH code result1}
Let $[s \,\,odd]=1$ if $s$ is odd and $[s \,\,odd]=0$ if $s$ is even. Let $m \geq 5 $ be odd and $\C_{\delta}$ be the binary narrow-sense BCH code of length $2^m-1$ with designed
distance given in (\ref{eq:0208}). Then $\mathcal{C}_\delta^{\perp} $ is a self-orthogonal binary code with parameters
 \begin{equation}\label{eq:ss0208}
\begin{split}
 \begin{cases}
[n,\; m(2^\frac{m-1}{2}-2s-1),\; d(\mathcal{C}_\delta^{\perp})], \; & \text{if $2^{\frac{m+1}{2}}- 4s -3< \delta \leq 2^{\frac{m+1}{2}}- 4s-1$,}\\
[n,\; m(2^\frac{m-1}{2}-2s-2),\; d(\mathcal{C}_\delta^{\perp})], \; & \text{if $2^{\frac{m+1}{2}}- 4s -5< \delta \leq 2^{\frac{m+1}{2}}- 4s-3$,}
\end{cases}
\end{split}
\end{equation}
where $d(\mathcal{C}_\delta^{\perp}) \geq 2^\frac{m+1}{2}+s +[s \,\,odd]$ and $0\leq s \leq 2^\frac{m-5}{2}-1$.
\end{theorem}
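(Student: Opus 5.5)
Self-orthogonality and the dimension come from earlier results; the distance bound is the real work.

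\emph{Self-orthogonality.} Since $\delta\le 2^{\frac{m+1}{2}}-1$ and $m$ is odd, Lemma~\ref{lem:1122} with $q=2$, $n=2^m-1$ gives the range $2\le \delta\le 2^{\frac{m+1}{2}}-1$. Hence $\C_\delta$ is dual-containing, so $\C_\delta^\perp\subseteq \C_\delta=(\C_\delta^\perp)^\perp$, i.e.\ $\C_\delta^\perp$ is self-orthogonal.

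\emph{Dimension.} By Lemma~\ref{dimension of section D}, $\dim(\C_\delta^\perp)=m\delta_{N_2}$ with $\delta_{N_2}=\delta-1-\lfloor(\delta-1)/2\rfloor=\lceil(\delta-1)/2\rceil$. In the first interval $\delta\in\{2^{\frac{m+1}{2}}-4s-2,\,2^{\frac{m+1}{2}}-4s-1\}$, so $\delta-1\in\{2^{\frac{m+1}{2}}-4s-3,\,2^{\frac{m+1}{2}}-4s-2\}$. Both values give $\lceil(\delta-1)/2\rceil=2^{\frac{m-1}{2}}-2s-1$. The second interval similarly gives $2^{\frac{m-1}{2}}-2s-2$.

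\emph{Distance.} By (\ref{eq:0131}), $T(\C_\delta^\perp)=\mathbb{Z}_n\setminus T(\C_\delta)^{-1}$. So $a\in T(\C_\delta^\perp)$ exactly when $n-a$ lies in no coset $C_j$ with $1\le j\le\delta-1$. Equivalently, ${\rm CL}(n-a)\ge\delta$ or $n-a=0$.

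I work with $\bar a$ directly: $n-a$ is the bitwise complement of $a$. Thus $a\in T(\C_\delta^\perp)$ iff every cyclic rotation of the complement of $\bar a$ represents an integer $\ge\delta$. Since $\delta\le 2^{\frac{m+1}{2}}-1$, a sufficient condition is that the complement of $\bar a$ has no run of $\frac{m+1}{2}$ cyclically consecutive zeros. Equivalently, $\bar a$ has no cyclic run of $\frac{m+1}{2}$ ones. The exceptions are the rotations whose value is between $2^{\frac{m-1}{2}}$ and $\delta-1$, which need a finer analysis.

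As in Lemma~\ref{m odd dual of result 2}, I take $V_0=[0,L]$ with $L$ roughly $2^{\frac{m+1}{2}}-2$, and a shifted copy $V_1=[c_2,c_2+L]$ with $c_2=2^{j}$ for a suitable $j$ (say $j=\frac{m+5}{2}$), where $\gcd(n,c_2)=1$. I then apply Lemma~\ref{generalizations of bch bound.} with $c_1=1$, and possibly more shifts $c_3,\dots$ to pick up the extra $s$. The intended bound is $2+r_1+r_2+\cdots\ge 2^{\frac{m+1}{2}}+s$. Finally, if the first bound is odd I use $0\in T(\C_\delta^\perp)$: every codeword then has even weight, which adds $[s\ odd]$.

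\emph{Main obstacle.} The hard step is showing these intervals avoid $T(\C_\delta)^{-1}$. The elements $a<2^{\frac{m+1}{2}}$ near the top of $V_0$ have complements of large weight, and some rotation of $n-a$ can be a small coset leader below $\delta$. The slack $4s$ in $\delta$ is what permits the longer runs. My concrete first attempt is a generalized Hartmann--Tzeng configuration of the form $\{i_1+i_2 2^{j}\}$ with $r_2=s$ or similar, choosing $j$ so that the bits contributed by $i_2$ lie in the zero block of $i_1$. I would then check case by case, through the leader bound ${\rm CL}(n-a)\ge\delta$, that no rotation of the complement falls below $2^{\frac{m+1}{2}}-4s-1$. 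If that fails, I would shorten $V_0$ to length about $2^{\frac{m+1}{2}}-4s$ and compensate with additional shifted copies.
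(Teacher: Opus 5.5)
\textbf{There is a genuine gap: the key containment for the distance bound is never proved, and the one concrete parameter you suggest fails.}

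Your outer structure matches the paper's. You get dual-containment of $\C_\delta$ from Lemma~\ref{lem:1122} and the dimension $m\lceil(\delta-1)/2\rceil$ from Lemma~\ref{dimension of section D}, and both of those parts are correct. You then use a Hartmann--Tzeng set $\{i_1+i_2 2^{j}: 0\le i_1\le 2^{\frac{m+1}{2}}-2,\ 0\le i_2\le s\}$ and finish with the parity step from $0\in T(\C_\delta^\perp)$.

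However, the entire content of the distance bound is showing that this set lies in $T(\C_\delta^\perp)$. You neither fix $j$ nor carry out that check, and the value you float, $j=\frac{m+5}{2}$, does not work. With that choice the bits of $i_2 2^{j}$ reach position $m-1$. That position is cyclically adjacent to the block $[0,\frac{m-1}{2}]$ occupied by $i_1$, so runs of ones can merge across the wrap-around.

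For instance, take $m=7$ and $s=1$, so $8\le\delta\le 11$, and let $c=7+2^{6}=71$. Then $n-c=56\in C_7$ with $7\le\delta-1$, hence $71\in T(\C_\delta)^{-1}$ and $71\notin T(\C_\delta^\perp)$. The same failure occurs for every odd $m\ge 7$ once $s\ge 2^{\frac{m-7}{2}}$: take $i_1=2^{\frac{m-1}{2}}-1$ and $i_2=2^{\frac{m-7}{2}}$.

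\textbf{The missing idea is to take $j=\frac{m+3}{2}$, as the paper does.}

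\emph{Separator bits.} Because $l\le s<2^{\frac{m-5}{2}}$, every $c=a+l\,2^{\frac{m+3}{2}}$ with $0\le a\le 2^{\frac{m+1}{2}}-2$ has bits $\frac{m+1}{2}$ and $m-1$ equal to $0$. So the complement $n-c$ has $1$s in those two positions. These separate the $a$-block $[0,\frac{m-1}{2}]$ from the $l$-block $[\frac{m+3}{2},m-2]$.

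\emph{Correct run threshold.} If $c\notin T(\C_\delta^\perp)$, some rotation of $n-c$ has value at most $\delta-1<2^{\frac{m+1}{2}}$. That means $n-c$ has $\frac{m-1}{2}$ cyclically consecutive zeros. This, not $\frac{m+1}{2}$, is the right threshold; your run condition only rules out values below $2^{\frac{m-1}{2}}$, so it is not sufficient.

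\emph{Locating the run.} The $l$-block has length only $\frac{m-5}{2}$, so the run must sit in the $a$-block. This forces $a=2^{\frac{m+1}{2}}-2$ or $a=2^{\frac{m-1}{2}}-1$. The corresponding leaders are ${\rm CL}(n-c)=2^{\frac{m+1}{2}}-1-2l$ and $2^{\frac{m+1}{2}}-1-4l$, respectively. Both are at least $2^{\frac{m+1}{2}}-4s-1\ge\delta$, and this is exactly where the $4s$ slack in (\ref{eq:0208}) enters.

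\emph{The unshifted block needs no slack.} Your worry about $V_0$ is unfounded, and no shortening is needed. For $1\le a\le 2^{\frac{m+1}{2}}-2$, the complement $n-a$ has weight at least $\frac{m+1}{2}$. Hence every rotation of it is at least $2^{\frac{m+1}{2}}-1\ge\delta$.

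With these containments, the Hartmann--Tzeng bound gives $2^{\frac{m+1}{2}}+s$, and the even-weight property supplies the extra $1$ when $s$ is odd.
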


\begin{proof}
Let $q=2$ and $n=2^m-1$ in Lemma \ref{lem:1122}. Then $\mathcal{C}_{\delta}$ is a dual-containing code, which means that $\mathcal{C}_{\delta}^{\perp}$ is self-orthogonal.

By the dimension formula in Lemma~\ref{dimension of section D}, the code $\mathcal{C}_{\delta}$ satisfies
\[
\dim(\mathcal{C}_{\delta}) =
\begin{cases}
m\bigl(2^{\frac{m-1}{2}}-2s-1\bigr), & \text{if } 2^{\frac{m+1}{2}}-4s-3 < \delta \le 2^{\frac{m+1}{2}}-4s-1,\\[4pt]
m\bigl(2^{\frac{m-1}{2}}-2s-2\bigr), & \text{if } 2^{\frac{m+1}{2}}-4s-5 < \delta \le 2^{\frac{m+1}{2}}-4s-3.
\end{cases}
\]
Hence, the dimension of its dual code is as given in (\ref{eq:ss0208}).

In the following, we prove the lower bound on the minimum distance of $\mathcal{C}_\delta^{\perp}$.
From the definition of $\mathcal{C}_{\delta}$ we have
\[
T(\mathcal{C}_{\delta}) = \bigcup_{i=1}^{\delta-1} C_i.
\]
It is easy to see that $\bigcup_{i=1}^{\delta-1} C_i^{-1} = \bigcup_{i=1}^{\delta-1} C_{n-i}$. Then by (\ref{eq:0131}) we have
\begin{equation*}
T(\mathcal{C}_{\delta}^{\perp}) = \mathbb{Z}_n \;\setminus\; \Bigl( \bigcup_{i=1}^{\delta-1} C_{n-i} \Bigr).
\end{equation*}
Define the intervals
\[
M_l = \bigl[\,2^{\frac{m+3}{2}}\cdot l,\; 2^{\frac{m+3}{2}}\cdot l + 2^{\frac{m+1}{2}}-2\,\bigr],
\]
where $0\le l\le s$. We shall prove that
\begin{equation}\label{V(s) belong defset}
\bigcup_{l=0}^{s} M_l \subseteq T(\mathcal{C}_{\delta}^{\perp}).
\end{equation}
There are two cases for discussion below.

\noindent {\bf Case 1:} $l=0$.
If $a\in M_0$, its $2$-adic expansion has the form
\begin{equation}\label{eq:0201}
\bar{a} = (a_0,a_1,\dots,a_{\frac{m-1}{2}},0,\dots,0).
\end{equation}
 Since $\overline{2^{\frac{m+1}{2}}-1}=(\underbrace{1,1,\dots,1}_{\frac{m+1}{2}},0,\dots,0)$ and $a<2^{\frac{m+1}{2}}-1$,
at least one of the $a_i$ is $0$.  Without loss of generality, assume $a_j = 0$ for $0 \leq j \leq \frac{m-1}{2}$. Then
\[
\overline{n-a} = (1-a_0,1-a_1,\dots,1-a_{\frac{m-1}{2}},1,1,\dots,1)
                \ge (\underbrace{0,0,\dots,0}_{j},1,0,\dots,0\underbrace{1,1,\dots,1}_{\frac{m-1}{2}}).
\]
Consequently,
\[
\operatorname{CL}(n-a) \ge (\underbrace{1,1,\dots,1}_{\frac{m+1}{2}},0,\dots,0)=2^{\frac{m+1}{2}}-1 \geq \delta.
\]
Thus $a\notin\bigcup_{i=1}^{\delta-1}C_{n-i}$, which means that $M_0\subseteq T(\mathcal{C}_{\delta}^{\perp})$.

\noindent {\bf Case 2:} $1\le l\le s$. When $m=5$, from the range of $s$, it can be seen that $l=0$. At this point, this situation does not exist. In this case, we always assume
that $m\geq 7$.

Let $c\in M_l$, it is clear that $c$ can be written uniquely as $c = a + b$ with $a\in M_0$ and $b = l\cdot 2^{\frac{m+3}{2}}$, we can write $l=
(b_0,b_1,\dots,b_{\frac{m-7}{2}},0,\dots,0)$ since $0\leq l \leq s \leq 2^\frac{m-5}{2}-1$.
Obviously,
the $2$-adic expansion of $b$ is
\begin{equation}\label{eq:020101}
\bar{b}=(\underbrace{0,0,\dots,0}_{\frac{m+3}{2}},b_0,b_1,\dots,b_{\frac{m-7}{2}},0).
\end{equation}
 Combining (\ref{eq:0201}) and (\ref{eq:020101}) we obtain
\[
\bar{c}= (a_0,a_1,\dots,a_{\frac{m-1}{2}},0,b_0,b_1,\dots,b_{\frac{m-7}{2}},0).
\]
Hence,
\begin{equation}\label{2adic expansion of n-b}
\overline{n-c}= (1-a_0,1-a_1,\dots,1-a_{\frac{m-1}{2}},1,1-b_0,1-b_1,\dots,1-b_{\frac{m-7}{2}},1).
\end{equation}

Suppose, for contradiction, that $c\in\bigcup_{i=1}^{\delta-1}C_{n-i}$. Then $n-c\in\bigcup_{i=1}^{\delta-1}C_i$, so there exists an integer $e$ with $1\le e\le\delta-1$ such that
$n-c\in C_e$. Because $\delta\le 2^{\frac{m+1}{2}}-1$, the $2$-adic expansion of $e$ is
\[
\bar{e}=(e_0,e_1,\dots,e_{\frac{m-1}{2}},0,\dots,0).
\]
If $n-c\in C_e$, comparing (\ref{2adic expansion of n-b}) with the form of $\bar{e}$ shows that the expansion of $n-c$ must contain a block of $\frac{m-1}{2}$ consecutive zeros.
This forces the vector $(a_0,a_1,\dots,a_{\frac{m-1}{2}})$ to be either
\[
(0,1,1,\ldots,1)\quad\text{or}\quad(1,1,1,\ldots,1,0)
\]
in (\ref{2adic expansion of n-b}). In the first case, a straightforward computation gives
\[
\overline{\operatorname{CL}(n-c)} = (1,1-b_0,1-b_1,\dots,1-b_{\frac{m-7}{2}},1,1,0,\dots,0)
                                 = 2^{\frac{m+1}{2}}-2l-1 \geq 2^{\frac{m+1}{2}}-2s-1.
\]
In the second case, we obtain
\[
\overline{\operatorname{CL}(n-c)} = (1,1,1-b_0,1-b_1,\dots,1-b_{\frac{m-7}{2}},1,0,\dots,0)
                                 = 2^{\frac{m+1}{2}}-4l-1 \geq 2^{\frac{m+1}{2}}-4s
                                 -1.
\]
However, any coset leader of a cyclotomic coset contained in $\bigcup_{i=1}^{\delta-1}C_i$ must satisfy
\[
1 \le \operatorname{CL}(\cdot) \le \delta-1 = 2^{\frac{m+1}{2}}-4s-2,
\]
which is less than $\operatorname{CL}(n-c)$. Hence, $n-c$ cannot belong to any $C_e$ with $e\le\delta-1$, a contradiction. Therefore, $c\notin\bigcup_{i=1}^{\delta-1}C_{n-i}$ and we have shown $M_l\subseteq
T(\mathcal{C}_{\delta}^{\perp})$ for every $l=1,\dots,s$.

Consider the set
\[
N = \Bigl\{ b + i_1c_1 + i_2c_2 + \dots + i_{s+1}c_{s+1}
      \;\Big|\; 0\le i_j\le r_j,\; 1\le j\le s+1 \Bigr\}
\]
with the choice
\[
b=0,\quad c_1=1,\quad c_2=c_3=\dots=c_{s+1}=2^{\frac{m+3}{2}},\quad
r_1=2^{\frac{m+1}{2}}-2,\quad r_2=r_3=\dots=r_{s+1}=1.
\]
Then $N = \bigcup_{l=0}^{s} M_l$, so by (\ref{V(s) belong defset}) we have $N\subseteq T(\mathcal{C}_{\delta}^{\perp})$.
Clearly, $gcd(n, c_j) = 1$ for $ 1 \leq j \leq s + 1 $, and this value is less than  $2 + r_1 + r_2 + \dots + r_{j-1}$ for $2 \leq j \leq s + 1 $.
Applying Lemma~\ref{generalizations of bch bound.} yields
\[
d(\mathcal{C}_{\delta}^{\perp}) \ge 2 + r_1 + r_2 + \dots + r_{s+1}
                                 = 2^{\frac{m+1}{2}} + s.
\]

Since $0\in T(\mathcal{C}_{\delta}^{\perp})$, the polynomial $x-1$ divides the generator polynomial of $\mathcal{C}_{\delta}^{\perp}$. Then $1+x+x^2+\cdots+x^{2^m-2} \in \mathcal{C}$.
This implies that every codeword of $\mathcal{C}_{\delta}^{\perp}$ has even Hamming weight.
If $s$ is odd, the lower bound $2^{\frac{m+1}{2}}+s$ is odd, so the actual minimum distance must be at least $2^{\frac{m+1}{2}}+s+1$.
Thus we obtain the refined bound
\[
d(\mathcal{C}_{\delta}^{\perp}) \ge
\begin{cases}
2^{\frac{m+1}{2}}+s, & \text{if $s$ is even},\\[2pt]
2^{\frac{m+1}{2}}+s+1, & \text{if $s$ is odd}.
\end{cases}
\]
This completes the proof.
\end{proof}

It is very hard to determine the minimum distance of $\mathcal C_{\delta}^\bot$ in general. Table 1 shows that the lower bounds in Theorem \ref{BCH code result1} are
 better than the Carlitz-Uchiyama bound, the Sidel'nikov
bound, and the results in \cite{GDL21} and \cite{Wang24} in many cases.

$${\rm Table\,\,\, 1:\,\,\,Lower\,\,\, bounds\,\,\, on\,\,\, the\,\,\, minimum\,\,\, distance\,\,\, of\,\,\, \mathcal{C}^{\perp}_{\delta}} \,\,\,{\rm for}\,\,n=2^m-1.$$

\resizebox{\textwidth}{!}{
\begin{tabular}{|c|c|c|c|c|c|c|}
\hline
$m$&$\delta$ & \multicolumn{5}{c|}{$d(\mathcal{\C_\delta}^{\perp}) \geq$} \\
\cline{3-7}
& & Sidel'nikvo bound~\cite{MS78} & Carlitz-Uchiyama bound~\cite{MS78}& Theorem 20~\cite{GDL21} & Theorem 1~\cite{Wang24} & Theorem \ref{BCH code result1}  \\
\hline
  &23 &16 &29 &31 &32 &34 \\
\cline{2-7}
9 &25 &16 &7 &31 &32 &34 \\
\cline{2-7}
  &27 &16 &$<0$ &31 &32 &34 \\
  \cline{2-7}
  &29 &16 &$<0$ &31 &32 &32 \\
\hline
  &47 &32 &28 &63 &64 &68\\
\cline{2-7}
11&49 &32 &$<0$ &63 &64 &68\\
\cline{2-7}
  &51 &32 &$<0$ &63 &64 &66\\
\cline{2-7}
  &53 &32 &$<$0 &63 &64 &66\\
\hline
  &91 &64 &113 &127 &128 &136\\
\cline{2-7}
13&93 &64 &23 &127 &128 &136\\
\cline{2-7}
  &103 &64 &$<0$ &127 &128 &134\\
\cline{2-7}
  &105 &64 &$<0$ &127 &128 &134\\
\hline
\end{tabular}
}

\vspace{8mm}

From the BCH bound, we know that $d(\mathcal{C}_\delta)\geq \delta$. Then using an argument similar to the proof of Theorem~\ref{result of section A}, together with Theorem~\ref{m odd result 1}, we obtain the following result.
\begin{theorem}\label{result of section D}
Let the symbols be given as in Theorem \ref{BCH code result1} and $2^{\frac{m+1}{2}}- 4s -5< \delta \leq 2^{\frac{m+1}{2}}- 4s-1$ for $0\leq s \leq 2^\frac{m-5}{2}-1$. Let $h(x)$ denote the reciprocal polynomial of $(x+1)^n/g(x)$, where $g(x)$ is the generator polynomial of $\mathcal{C}_\delta$. Let $\mathcal{C}'$ denote the binary cyclic code of length $2n$ with generator polynomial $g(x)h^\perp(x)$. Then $\mathcal{C}'$ is a self-dual binary cyclic code with parameters $[2n,n,d(\mathcal{C}')]$, where
\[
d(\mathcal{C}') \geq \min\Bigl\{2\delta,\; 2^{\frac{m+1}{2}}+s+[s\text{ odd}]\Bigr\},
\]
 and this lower bound is larger than the square-root lower bound for $\delta\neq 2^{\frac{m-1}{2}}$.
\end{theorem}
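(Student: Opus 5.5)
We follow the template of Theorem~\ref{result of section A}, applied to the self-orthogonal code $\mathcal{C}_\delta^{\perp}$ from Theorem~\ref{BCH code result1}.

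\emph{Step 1 (self-duality via Lemma~\ref{criterion for plotkin-self-dual-2}).} By Lemma~\ref{lem:1122} with $q=2$, $\mathcal{C}_\delta$ is dual-containing for $\delta\le 2^{\frac{m+1}{2}}-1$. This range contains the interval (\ref{eq:0208}). Put $\mathcal{D}=\mathcal{C}_\delta^{\perp}$, which is self-orthogonal with $\mathcal{D}^{\perp}=\mathcal{C}_\delta$. Lemma~\ref{criterion for plotkin-self-dual-2} then shows that
\[
\mathbb{C}=\{(\mathbf{u}\mid \mathbf{u}+\mathbf{v}):\mathbf{u}\in\mathcal{C}_\delta,\ \mathbf{v}\in\mathcal{C}_\delta^{\perp}\}
\]
is self-dual of length $2n$. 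Its minimum distance is $\min\{d(\mathcal{C}_\delta^{\perp}),\,2d(\mathcal{C}_\delta)\}$.

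\emph{Step 2 (cyclicity via Lemma~\ref{lemma:generalized-van-lint}).} The generator polynomial of $\mathcal{C}_\delta^{\perp}$ is $h^{\perp}(x)$, the reciprocal of $(x^n+1)/g(x)$. Since $\mathcal{C}_\delta^{\perp}\subseteq\mathcal{C}_\delta$, $g(x)$ divides $h^{\perp}(x)$. Write $h^{\perp}(x)=g(x)g_2(x)$ with $g_2(x)\mid x^n+1$.

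Apply Lemma~\ref{lemma:generalized-van-lint} with $g_1=g$. It shows that $\mathbb{C}$ is permutation-equivalent to the cyclic code of length $2n$ generated by $g^2g_2/\gcd(g,g_2)$.

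We still need this polynomial to equal $g(x)h^{\perp}(x)=g^2g_2$. That requires $\gcd(g,g_2)=1$. This holds because $x^n+1$ is squarefree for odd $n$ and $gg_2\mid x^n+1$. So $\mathbb{C}$ is equivalent to the code $\mathcal{C}'$ generated by $g(x)h^{\perp}(x)$. Self-duality and the weight distribution are preserved under coordinate permutation.

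\emph{Step 3 (the distance bound).} The BCH bound gives $d(\mathcal{C}_\delta)\ge\delta$. Theorem~\ref{BCH code result1} gives $d(\mathcal{C}_\delta^{\perp})\ge 2^{\frac{m+1}{2}}+s+[s\text{ odd}]$. Together these yield the stated minimum. The dimension is $n$ by self-duality.

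\emph{Step 4 (comparison with $\sqrt{2n}$).} This is the only delicate step. Note that $\sqrt{2n}<2^{\frac{m+1}{2}}$, so the second term of the minimum always exceeds the square-root bound. It remains to check the first term, i.e.\ to show $2\delta>\sqrt{2(2^m-1)}$, which is equivalent to $\delta\ge 2^{\frac{m-1}{2}}$.

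From (\ref{eq:0208}) and $s\le 2^{\frac{m-5}{2}}-1$ we get
\[
\delta> 2^{\frac{m+1}{2}}-4\bigl(2^{\frac{m-5}{2}}-1\bigr)-5 = 2^{\frac{m-1}{2}}-1,
\]
so $\delta\ge 2^{\frac{m-1}{2}}$. In the extreme case $\delta=2^{\frac{m-1}{2}}$ we have $2\delta=2^{\frac{m+1}{2}}$. This equals $\sqrt{2^{m+1}}$, and it still exceeds $\sqrt{2n}$; the statement nevertheless excludes this case. For every other admissible $\delta$ we have $2\delta\ge 2^{\frac{m+1}{2}}+2>\sqrt{2n}$, which completes the comparison.

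The main obstacle is the bookkeeping in Step 2 (divisibility and the gcd) and getting the endpoint of Step 4 right. Everything else is a direct citation of earlier results.
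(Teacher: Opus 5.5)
Your proposal is correct and follows the same route as the paper. The ingredients are the same: dual-containment of $\mathcal{C}_\delta$ from Lemma~\ref{lem:1122}, the $(\mathbf{u}\mid\mathbf{u}+\mathbf{v})$ construction of Lemma~\ref{criterion for plotkin-self-dual-2} made cyclic through Lemma~\ref{lemma:generalized-van-lint}, and the BCH bound combined with Theorem~\ref{BCH code result1}. Your explicit check that $\gcd(g,g_2)=1$ and your square-root comparison only fill in what the paper leaves to ``an argument similar to Theorem~\ref{result of section A}''; your remark that $2\delta\ge 2^{\frac{m+1}{2}}>\sqrt{2n}$ holds even at $\delta=2^{\frac{m-1}{2}}$, so the exclusion in the statement is unnecessary, is also correct.
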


\begin{example}\label{example6}
Let $m=7$ and $s=1$. Then the self-dual cyclic code in Theorem~\ref{result of section D} has parameter [254,127,18], the minimum distance of the square-root lower bound is 16.
\end{example}

\begin{example}\label{example9}
Let $m=9$ and $s=3$. Then the self-dual cyclic code in Theorem~\ref{result of section D} has parameter [1022,511,36], the minimum distance of the square-root lower bound is 32.
\end{example}

\section{Summary and concluding remarks}
The main contributions of this paper are the following:
\begin{itemize}
\item We presented four methods for constructing infinite families of self-dual binary cyclic codes of length $2^{m+1}-2$, where
the minimum distances of the constructed codes surpass the square-root lower bound (see Theorems \ref{result of section A}, \ref{result of section B}, \ref{result of section C} and
\ref{result of section D}, and Examples \ref{example1}, \ref{example2}, \ref{example3}, \ref{example4}, \ref{example5},  \ref{example6},  \ref{example9}). Consequently, we provided an affirmative answer to the
70-year-old open problem in coding theory (see Open Problem 1).

\item We constructed several classes of binary cyclic codes with larger dimensions and larger lower bounds  on the minimum distances (see Lemmas \ref{m even dual result 1}, \ref{m even dual result 2},  and
\ref{m odd result 1} and Theorem \ref{BCH code result1}),  compared with the codes constructed in \cite{TangDing22}
 (see Remarks \ref{remark1}, \ref{remark2} and \ref{remark3}).  Some of these binary cyclic codes are distance-optimal (see Remark  \ref{remark5}, and Examples \ref{example7}, \ref{example8}).

\item We gave a lower bound on the minimum distance of the dual code $\mathcal{C}^{\perp}_{\delta}$ for $2^{\frac{m+1}{2}}- 4s -5< \delta \leq 2^{\frac{m+1}{2}}- 4s-1$ and $0\leq s \leq 2^\frac{m-5}{2}-1$, which is better than the Carlitz-Uchiyama bound, the Sidel'nikov
bound, and the results in \cite{GDL21} and \cite{Wang24} in many cases (see  Theorem \ref{BCH code result1} and Table 1).
\end{itemize}

The main technique for constructing self-dual binary cyclic codes used in this paper is  Lemma
\ref{criterion for plotkin-self-dual-1}.  To construct a $[2n,n,d]$  self-dual binary cyclic code such that
$d/2n$ is as large as possible using
this technique,   one must design the building block (i.e.,  the cyclic code $\C$) carefully.  The key
point is to construct a binary cyclic code $\C$ (the building block of the technique) such that
\begin{itemize}
\item $\C^\perp \subseteq \C$ and $\min\{ d(\C^\perp), 2d(\C)\}$ is as large as possible; or
\item $\C \subseteq \C^\perp$ and $\min\{ d(\C), 2d(\C^\perp)\}$ is as large as possible.
\end{itemize}
This explains the difficulty of using this technique.  It would be an interesting and challenging problem to construct an infinite family of self-dual binary cyclic codes of length $2^{m+1}-2$ having a lower bound on the minimum distances better than those presented in this paper.

The reader is informed that infinite families of $[n, k,d]$ binary
cyclic codes  with $k \geq (n-1)/2$ and a lower bound on $d$ much better than the square-root bound were constructed in \cite{Chen26} and
\cite{SLD}, but those binary codes are not self-dual.

\end{document}